%% file: main.tex
\documentclass[leqno]{amsart}

\usepackage[foot]{amsaddr}

\usepackage{color}
\usepackage{geometry}
\usepackage{subfiles}
\usepackage{graphicx}
\usepackage[section]{placeins}
\usepackage[hidelinks,pdfusetitle]{hyperref}
\usepackage{stmaryrd}
\SetSymbolFont{stmry}{bold}{U}{stmry}{m}{n}
\usepackage{amsmath,amssymb,amsfonts,amsthm,textcomp}
\usepackage{mathtools}
\usepackage{tensor}
\usepackage{multicol}
\usepackage{subfig}
\usepackage{enumitem}
\graphicspath{
              {./figures_section-1/}
              {./figures_section-2/}
              {./figures_section-3/}
             }

\input{commands}

\begin{document}

\title{A Unified Gradient Theory for Frame-Indifferent Rates of Tensorial Internal Variables}
\author{Luis Espath}
\address{School of Mathematical Sciences\\
University of Nottingham\\
Nottingham, NG7 2RD, United Kingdom}
\email{espath@gmail.com}

\date{\today}

\begin{abstract}
\noindent
\subfile{./abstract.tex}

\end{abstract}

\maketitle

\tableofcontents                        


\subfile{./section.tex}


\appendix

\subfile{./appendix.tex}


\footnotesize

\bibliographystyle{unsrt}
\bibliography{bib}

\end{document}

%% file: commands.tex
\newfont{\tenbfsl}{cmbxti9 scaled 1200}
\newfont{\tenbbb}{msbm10}
\newfont{\svnbbb}{msbm8}

\newcommand{\bs}[1]{\boldsymbol{#1}}
\newcommand{\cl}[1]{\mathcal{#1}}

\newcommand{\bb}[1]{\mathbb{#1}}
\newcommand{\br}[1]{\boldsymbol{\mathrm{{#1}}}}

\newcommand{\twovdots}{\mskip+2mu\colon\mskip-2mu}
\makeatletter
\def\threevdots{\mskip+4mu\vbox{\baselineskip2.25\p@ \lineskiplimit\z@
  \kern4.9\p@\hbox{.}\hbox{.}\hbox{.}}\mskip+3.8mu}
\makeatother

\newcommand{\twovc}{%
  \mskip8mu
  \mathord{%
    \mathclap{%
      \vcenter{%
        \offinterlineskip
        \halign{\hfil$##$\hfil\cr
          \noalign{\kern-2.5pt}
          \cdot\cr
          \noalign{\kern-2.5pt}
          \cdot\cr
          \noalign{\kern-0.5pt}
          \scriptscriptstyle\otimes\cr
        }%
      }%
    }%
  }%
  \mskip8mu
}
\newcommand{\twovcM}{%
  \mskip8mu
  \mathord{%
    \mathclap{%
      \vcenter{%
        \offinterlineskip
        \halign{\hfil$##$\hfil\cr
          \noalign{\kern-1.5pt}
          \cdot\cr
          \noalign{\kern-0.5pt}
          \scriptscriptstyle\otimes\cr
          \noalign{\kern-0.0pt}
          \cdot\cr
        }%
      }%
    }%
  }%
  \mskip8mu
}
\newcommand{\twovcR}{%
  \mskip8mu
  \mathord{%
    \mathclap{%
      \vcenter{%
        \offinterlineskip
        \halign{\hfil$##$\hfil\cr
          \scriptscriptstyle\otimes\cr
          \noalign{\kern-0.5pt}
          \cdot\cr
          \noalign{\kern-2.5pt}
          \cdot\cr
          \noalign{\kern-1.0pt}
        }%
      }%
    }%
  }%
  \mskip8mu
}

\newcommand{\surp}[1]{\{\mskip-6mu\{ \mskip-3mu{#1}\mskip-2mu \}\mskip-6mu\}}

\newcommand{\fr}[2]{\textstyle{\frac{{#1}}{{#2}}}}

\newcommand{\dv}{\,\mathrm{d}v}
\newcommand{\da}{\,\mathrm{d}a}
\newcommand{\ds}{\,\mathrm{d}s}

\newcommand{\prt}{\cl{P}}
\newcommand{\srf}{\cl{S}}
\newcommand{\edg}{\cl{C}}
\newcommand{\dprt}{\partial\cl{P}}
\newcommand{\ddprt}{\partial^2\cl{P}}

\newcommand{\intPt}{\int\limits_{\prt_t}}

\newcommand{\intdPt}{\int\limits_{\dprt_t}}
\newcommand{\intddPt}{\int\limits_{\ddprt_t}}
\newcommand{\intS}{\int\limits_{\cl{S}}}

\newcommand{\intC}{\int\limits_{\cl{C}}}

\newcommand{\trans}{\scriptscriptstyle\mskip-1mu\top\mskip-2mu}
\newcommand{\sperp}{\scriptscriptstyle\mskip-2mu\perp\mskip-2mu}

\newcommand{\rperp}{\rotatebox{90}{\(\scriptscriptstyle\mskip-0mu\top\mskip-0mu\)}}
\newcommand{\tr}{\mathrm{tr}\mskip2mu}
\newcommand{\sym}{\mathrm{sym}\mskip2mu}
\newcommand{\skw}{\mathrm{skw}\mskip2mu}

\newcommand{\Grad}{\mathrm{grad}\mskip2mu}
\newcommand{\Div}{\mathrm{div}\mskip2mu}

\newcommand{\Grads}{\Grad_{\mskip-2mu\scriptscriptstyle\cl{S}}}
\newcommand{\Divs}{\Div_{\mskip-5mu\scriptscriptstyle\cl{S}}}

\newcommand{\dd}[2]{\frac{\mathrm{d}{#1}}{\mathrm{d}{#2}}}

\newtheorem{thm}{Theorem}

\newtheorem{lem}{Lemma}
\newtheorem{prop}{Proposition}
\newtheorem{post}{Postulate}
\newtheorem{rmk}{Remark}
\newtheorem{set}{Definition}
\newtheorem{ass}{Assumption}

\newcommand{\vel}{\bs{\upsilon}}

\newcommand{\texand}{\qquad\text{and}\qquad}

\newcommand{\where}{\qquad\text{where }}
\newcommand{\with}{\qquad\text{with }}
\newcommand{\forevery}{\qquad\text{for every }}
\newcommand{\bsts}{\bs{t}_{\mskip-2mu\scriptscriptstyle\cl{S}}}
\newcommand{\bshs}{\bs{h}_{\mskip-2mu\scriptscriptstyle\cl{S}}}
\newcommand{\bsxs}{\br{X}_{\mskip-2mu\scriptscriptstyle\cl{S}}}
\newcommand{\bstc}{\bs{t}_{\mskip-2mu\scriptscriptstyle\cl{C}}}
\newcommand{\bsms}{\bs{m}_{\mskip-2mu\scriptscriptstyle\cl{S}}}

\newcommand{\ellcg}[2]{\ell_{({#1},{#2})}}

%% file: abstract.tex
We develop a thermodynamically consistent framework for weakly nonlocal continua with tensor-valued internal variables. The formulation is based on the family of generators \(\bs{\Gamma}_{\!\alpha} = \br{W} + \alpha \br{D}\), with \(\alpha \in \{0,1\}\), which unifies corotational and upper-convected transport. This construction induces a canonical frame-indifferent evolution for both the internal variable and its spatial gradient, thereby providing a closure for gradient-dependent theories.

Starting from the balance laws for linear momentum and microforces, together with an internal power expenditure depending on the internal variable and its gradient, we derive a local free-energy imbalance for incompressible isothermal processes. Under isotropy and inherited symmetry assumptions, the imbalance admits a canonical decomposition into contributions associated with the stretching tensor \(\br{D}\), the second gradient of the velocity \(\Grad \br{L}\), the frame-indifferent rate \(\mathfrak{D}_{\!\alpha}\br{J}\) induced by the generator \(\bs{\Gamma}_{\!\alpha}\), and its gradient \(\mathfrak{D}^{\!\nabla}_{\!\alpha}(\Grad \br{J})\). This decomposition yields constitutive restrictions ensuring thermodynamic consistency and identifies the induced higher-order stress contributions associated with gradient-dependent internal variables.

Finally, we construct a coupled gradient theory combining viscoelasticity and constrained orientational order, in which distinct internal variables evolve under different transport mechanisms within this unified thermodynamic setting. This establishes a consistent extension of classical continuum theories with tensorial internal variables, including Oldroyd-B- and Landau--de Gennes-type theories.
\\
\textbf{AMS subject classifications:}
$\cdot$
76A10 
$\cdot$
76A15 
$\cdot$
80A17 
$\cdot$
74A30 
$\cdot$
76A05 

%% file: section.tex
\section{Introduction}

Internal variables are customarily used in continuum mechanics to represent the effective behavior of materials with underlying microstructure, including complex fluids, polymeric systems, and geophysical and biological materials. These variables enrich the kinematical description by capturing internal degrees of freedom in addition to the classical primary fields. Their evolution is governed by constitutive relations subject to fundamental physical principles such as thermodynamic consistency and frame indifference. Classical rational thermodynamic theories with internal variables, as developed by Coleman and Gurtin \cite{Col67}, provide a framework in which the constitutive structure is derived from the Clausius--Duhem inequality via the Coleman--Noll procedure. In its classical local form, this approach does not by itself account for gradient-dependent effects or geometrically constrained rates.

The continuum theory developed by Gurtin, Polignone, \& Vi\~{n}als \cite{Gur96} for phase transitions in binary fluids provides a classical and prototypical example of a weakly nonlocal theory with a scalar internal variable, namely an order parameter whose free energy density depends on both the variable and its gradient. Likewise, the continuum theory of liquid crystals, as developed by Ericksen \cite{Eri61,Eri62} and extended by Leslie \cite{Les66,Les68}, constitutes a fundamental example of a theory with a vector-valued internal variable subject to nonlinear constraints. Under a variational lens, Ericksen's work has been further developed by Virga \cite{Vir18} and Sonnet \& Virga \cite[Chapter 3]{Son12}.

In the context of viscoelasticity, internal variables are often tensor-valued and describe memory and relaxation effects. The internal variables considered here are intrinsically symmetric contravariant \((2,0)\)-tensors and therefore transform contravariantly under changes of observer. Their evolution is governed by frame-indifferent transport laws, as in Oldroyd's invariant formulation of rheological equations of state \cite{Old50}. Different choices of transport law lead to distinct constitutive responses, depending on the kinematic action assigned to the internal variable. In particular, Oldroyd \cite{Old50} considers tensorial internal variables associated with deformation and introduces transport laws induced by the motion. For spatial conformation-type tensors, advection induced by motion gives rise to the upper-convected rate. In contrast, Mattos \cite{Mat98} advocates a corotational evolution for a tensorial internal variable representing relaxation. Thus, the internal variables considered by Oldroyd and Mattos have the same intrinsic tensorial character, namely that of symmetric contravariant \((2,0)\)-tensors, but differ in the kinematic action governing their transport. In the former case, the tensor is advected by the full velocity gradient, whereas in the latter case the advection is restricted to its skew-symmetric part. The particular formulations just described are local and do not account for gradient-dependent effects.

Tensor-valued internal variables also arise in other areas of continuum mechanics, where their physical interpretation differs. For instance, symmetric traceless tensors are used to describe orientational order in liquid crystals within the Landau--de Gennes theory, as developed by De Gennes \& Prost \cite{Deg93}, while the same tensorial descriptor appears as fabric tensors in the modeling of anisotropic microstructures such as granular materials, as described by Goodman \& Cowin \cite{Goo72}. These examples highlight that, although such internal variables share a common tensorial structure, their underlying physical nature may suggest distinct geometric transport mechanisms. Gradient-dependent free-energy densities of the form \(\psi = \hat{\psi}(\br{Q},\Grad \br{Q})\) are classical in the theory of liquid crystals for order tensor theories; see, for instance, Sonnet \& Virga \cite[Chapter 4]{Son12}. In that context, the dependence on \(\Grad \br{Q}\) gives rise to additional stress contributions (often referred to as Ericksen stresses) as well as microstresses of the form \(\partial_{\Grad \br{Q}}\psi\), and leads to Allen--Cahn-type evolution equations driven by the variational derivative of the free energy.

Our aim is to provide a unified framework that systematically couples the evolution of a tensor-valued internal variable with that of its gradient while retaining frame indifference and thermodynamic consistency across different choices of transport structure. To this end, we introduce a family of rate laws based on a generator of the form
\begin{equation}\label{eq:generator}
\bs{\Gamma}_{\!\alpha} \coloneqq \br{W} + \alpha \br{D}, \where \alpha \in \{0,1\}.
\end{equation}
Here, \(\br{L}\coloneqq\Grad\vel\) is the velocity gradient, \(\br{D}\coloneqq\sym\br{L}\) is the stretching tensor, and \(\br{W}\coloneqq\skw\br{L}\) is the spin tensor. The parameter \(\alpha\) distinguishes between corotational and contravariant advection. We show that this generator induces a consistent evolution for the internal variable and its gradient, namely
\begin{equation}\label{eq:frame.indifferent.rate.intro}
\mathfrak{D}_{\!\alpha}\br{J} \coloneqq \dot{\br{J}} - \bs{\Gamma}_{\!\alpha}\br{J} - \br{J}\bs{\Gamma}_{\!\alpha}^{\trans},
\end{equation}
and
\begin{equation}
\mathfrak{D}^{\!\nabla}_{\!\alpha}(\Grad \br{J}) \coloneqq (\Grad \br{J}){\dot{\vphantom{\br{J}}}} + (\Grad \br{J})\br{L} - ((\Grad \bs{\Gamma}_{\!\alpha})^{\!\trans}\br{J})^{\!\trans} - \bs{\Gamma}_{\!\alpha}\Grad \br{J} - ((\Grad \br{J})^{\!\trans}\bs{\Gamma}_{\!\alpha}^{\trans})^{\!\trans} - \br{J}\Grad(\bs{\Gamma}_{\!\alpha}^{\trans}),
\end{equation}
thereby extending frame-indifferent rate formulations to weakly nonlocal settings.\footnote{The generator \eqref{eq:generator} yielding \eqref{eq:frame.indifferent.rate.intro} appears in the literature as the Gordon--Schowalter/Johnson--Segalman rate, after Gordon \& Schowalter \cite[Equation (16) therein]{Gor72} and Johnson \& Segalman \cite[Equation (2.18) therein]{Joh77} where \(\alpha\) is introduced as a phenomenological parameter. The corotational member, corresponding to \(\alpha=0\), also arises in liquid-crystal hydrodynamics as the co-rotational specialization of the Beris--Edwards \(Q\)-tensor model; see Beris \& Edwards \cite{Ber94}. The general Beris--Edwards transport includes additional flow-alignment terms. To be faithful to the geometry of deformation described in what follows, we restrict attention to \(\alpha \in \{0,1\}\). Other authors have also used \(\alpha \in [-1,1]\).} Moreover, using a microforce balance for the internal variable, we derive a tensorial Allen--Cahn-type equation linked to each frame-indifferent rate of the tensor-valued internal variable.

The main result of this work is that the local free-energy imbalance admits a canonical decomposition into contributions associated with deformation and gradient effects, whose structure is preserved across all admissible choices of \(\alpha\), while the specific contributions depend on \(\alpha\). The resulting dissipation structure separates into distinct mechanisms involving the stretching tensor \(\br{D}\), the second gradient of the velocity \(\Grad \br{L}\), the frame-indifferent rate \(\mathfrak{D}_{\!\alpha}\br{J}\), and its gradient \(\mathfrak{D}^{\!\nabla}_{\!\alpha}(\Grad \br{J})\). Moreover, the appearance of \(\mathfrak{D}^{\!\nabla}_{\!\alpha}(\Grad \br{J})\) in the internal power expenditure induces Korteweg--Ericksen-type contributions in both the stress and hyperstress. Consequently, a gradient theory of the motion is intrinsic to the resulting constitutive structure.

Finally, we demonstrate the applicability of the proposed framework by constructing a coupled gradient theory for viscoelastic fluids with a constrained orientational order. In particular, we consider a conformation tensor governed by the upper-convected rate, as in classical Oldroyd-type models, together with an orientation tensor governed by the corotational (Jaumann) rate, corresponding to the co-rotational specialization of hydrodynamic Landau--de Gennes/\(Q\)-tensor models. Both internal variables are endowed with gradient-dependent free energies, leading to a unified weakly nonlocal formulation in which elastic, orientational, and interfacial effects are coupled. This example shows that distinct geometric transport mechanisms may coexist within a single frame-indifferent and thermodynamically consistent framework, with each internal variable evolving according to its intrinsic physical character.

The structure of this work is as follows. In Section \S\ref{sc:kinematics.geometry}, we present the kinematics and geometry of the motion, including integro-differential identities for nonsmooth surfaces. Special attention is paid to frame-indifferent rates. In Section \S\ref{sc:balance.laws}, we introduce the balance laws for mass, linear momentum, and microforces for isothermal processes. In Section \S\ref{sc:thermodynamic.framework}, we introduce and develop a unified thermodynamic framework. We then present the main result of this work, namely the canonical decomposition of the free-energy imbalance into contributions associated with deformation and gradient effects. In Section \S\ref{sc:final.model}, we gather the final set of equations, constitutive relations, and establish suitable natural and essential boundary conditions. In Section \S\ref{sc:gradient.OB.LD}, we present a coupled gradient theory for viscoelastic fluids and constrained orientational order. For convenience, we record the core notation used throughout this work.
\begin{align*}
\varrho &:\ \text{mass density}, &
\vel &:\ \text{velocity}, \\[3pt]
\bs{b} &:\ \text{specific body force}, &
\bs{1} &:\ \text{second-order identity tensor}, \\[3pt]
\br{T} &:\ \text{Cauchy stress tensor}, &
\br{S} &:\ \text{deviatoric stress}, \\[3pt]
\bb{T} &:\ \text{hyperstress tensor}, &
\bb{S} &:\ \text{deviatoric hyperstress}, \\[3pt]
\bs{\Xi} &:\ \text{internal microforce}, &
\bs{\Upsilon} &:\ \text{external microforce}, \\[3pt]
\bb{X} &:\ \text{microstress}, &
\br{J} &:\ \text{internal variable}, \\[3pt]
\br{D} &:\ \text{stretching tensor}, &
\br{W} &:\ \text{spin tensor}, \\[3pt]
p,\bs{p} &:\ \text{pressure and hyperpressure representatives}, &
\pi &:\ \text{bulk pressure}, \\[3pt]
\mathfrak{D}_{\!\alpha} &:\ \text{frame-indifferent rate}, &
\mathfrak{D}^{\!\nabla}_{\!\alpha} &:\ \text{frame-indifferent gradient rate}, \\[3pt]
\Grad, \Div &:\ \text{gradient and divergence}, &
\otimes &:\ \text{dyadic product}, \\[3pt]
\Grads, \Divs &:\ \text{surface gradient and divergence}, &
\wedge &:\ \text{wedge product}, \\[3pt]
\bs{a} \cdot \bs{b}, \, \br{A} \twovdots \br{B}, \, \bb{A} \threevdots \bb{B} &:\ \text{inner product}, &
\bb{A} \twovc \bb{B} &:\ (\bb{A})_{k\ell i} (\bb{B})_{k\ell j}, \\[3pt]
\bb{A} \twovcR \bb{B} &:\ (\bb{A})_{ik\ell} (\bb{B})_{jk\ell}, &
\bb{A} \twovcM \bb{B} &:\ (\bb{A})_{ki\ell} (\bb{B})_{kj\ell}, \\[3pt]
(\bb{A}^{\!\trans})_{ijk} & \coloneqq (\bb{A})_{ikj}, &
(\bb{A}^{\!\sperp})_{ijk} & \coloneqq (\bb{A})_{jik}, \\[3pt]
(\bb{A}^{\!\rperp})_{ijk} & \coloneqq (\bb{A})_{jki}, &
\dot{\varphi} &:\ \text{material time derivative of }\varphi, \\[3pt]
\bs{\Gamma}_{\!\alpha} &:\ \text{kinematic generator} & & \\[3pt]
\phantom{\bs{\Gamma}_{\!\alpha}} &\mskip22mu \text{of frame-indifferent transport.}
\end{align*}

\section{Kinematics \& Geometry}\label{sc:kinematics.geometry}

We consider a three-dimensional Euclidean point space \(\cl{E}\) with associated vector space \(\cl{V}\). Within an arbitrary spatial part \(\prt_t \subset \cl{E}\) evolving with the material with respect to an inertial frame, let \(\bs{y}\) represent the motion which describes the trajectory of material points such that \(\prt_t \coloneqq \bs{y}(\prt_{\scriptscriptstyle{0}})\), with \(\prt_{\scriptscriptstyle{0}}\) being the reference region of \(\prt_t\). Thus, a material point \(\bs{x} \in \prt_{\scriptscriptstyle{0}}\) is mapped to \(\bs{y}(\bs{x},t) \in \prt_t\). For each \(\bs{y} \in \prt_t\), the tangent space \(T_{\bs{y}}\prt_t\) is modeled on the vector space \(\cl{V}\). Also, important to what follows, we allow the boundary of the spatial part \(\prt_t\), namely \(\partial\prt_t\) to be endowed with a discontinuous outward normal field, thereby allowing for nonsmooth surfaces.

Next, we present integro-differential identities. First, we define the curvature tensor as the negative surface gradient of the unit normal, as
\begin{equation}\label{eq:curvature.tensor}
\br{K} \coloneqq -\Grads\bs{n},
\end{equation}
where \(\Grads\) is the surface gradient operator. We also need to define the mean curvature by
\begin{equation}\label{eq:mean.curvature}
K \coloneqq \fr{1}{2} \tr\br{K} = - \fr{1}{2} \Divs\bs{n}.
\end{equation}
Since in this work, we consider that surfaces \(\srf\) may be nonsmooth, we define edges \(\edg\) at these discontinuities. In particular, we consider a nonsmooth closed oriented surface \(\srf\) with outward unit normal \(\bs{n}\) and limiting outward unit tangent-normals \(\bs{\nu}^+\) and \(\bs{\nu}^-\) at \(\edg\). Owing to the lack of smoothness at the edge \(\edg\), the surface divergence theorem on \(\srf\) exhibits a surplus. Let the tensor field \(\br{A}\) be defined on \(\srf\). Then, the following identity holds
\begin{equation}\label{eq:nonsmooth.divs.theo.closed.S}
\intS \Divs(\br{A}\br{P}) \da = \intC\surp{\br{A}\bs{\nu}}\ds,
\end{equation}
where \(\surp{\br{A}\bs{\nu}} \coloneqq \br{A}\bs{\nu}^+ + \br{A}\bs{\nu}^-\) and \(\br{P} = \bs{1} - \bs{n}\otimes \bs{n}\) is the surface projection tensor onto the surface. Note that \(\br{P}\) is linear and idempotent, that is, \(\br{P}^2 = \br{P}\). With the identity
\begin{equation}\label{eq:identity.div.surf.tensor}
\Divs \br{A} = \Divs(\br{A}\br{P}) + \Divs(\br{A}\bs{n} \otimes \bs{n}) = \Divs (\br{A}\br{P}) - 2K \br{A}\bs{n},
\end{equation}
the surface divergence theorem \eqref{eq:nonsmooth.divs.theo.closed.S} reads
\begin{equation}\label{eq:nonsmooth.divs.theo.closed.S.tensors}
\intS \Divs \br{A} \da = \intS 2K\br{A}\bs{n} \da + \intC \surp{\br{A}\bs{\nu}} \ds.
\end{equation}

The motion \(\bs{y}\) induces a natural mapping between the reference and current configurations at the level of tangent spaces. The deformation gradient pushes forward material vectors to spatial vectors \(\br{F}(\bs{x},t) \colon T_{\bs{x}}\prt_{\scriptscriptstyle 0} \to T_{\bs{y}}\prt_t\) and defines, for each material point, a linear mapping, such that
\begin{equation}
\br{F} \coloneqq \Grad_{\bs{x}} \bs{y}.
\end{equation}
Note that \(\Grad_{\bs{x}}\) represents the gradient with respect to the reference coordinate \(\bs{x}\) while \(\Grad \coloneqq \Grad_{\bs{y}}\) the gradient with respect to the spatial coordinate \(\bs{y}(\bs{x},t)\). \(\br{F}\) is the tangent (push-forward) map of the motion, also known as a two-point tensor. Its inverse \(\br{F}^{-1}\) pulls spatial vectors back to the reference configuration, while its transpose inverse \(\br{F}^{-\trans}\) acts on covectors as the dual (pull-back) map. Accordingly, the kinematics described here can be interpreted geometrically as the transport of vectors and covectors induced by the motion. We assume that \(J \coloneqq \det \br{F} > 0\). Owing to this assumption, the motion is locally invertible. Also, the velocity field \(\dot{\bs{y}}\) may be  described as a function \(\vel\) of the spatial point \(\bs{y}(\bs{x},t) \in \prt_t\) for \(\bs{x} \in \prt_{\scriptscriptstyle{0}}\) and \(t\) such that
\begin{equation}
\vel(\bs{y}(\bs{x},t),t) \coloneqq \dot{\bs{y}} (\bs{x},t),
\end{equation}
holding \(\bs{x}\) fixed. The field \(\vel\) represents the spatial description of the velocity. Therefore, using the chain rule, we have that for a sufficiently smooth spatial field \(\varphi\)
\begin{align}\label{eq:material.derivative}
\dot{\varphi}(\bs{y}(\bs{x},t),t) &= \partial_t \varphi(\bs{y}(\bs{x},t),t) + \Grad \varphi(\bs{y}(\bs{x},t),t) \cdot \dot{\bs{y}}(\bs{x},t) \nonumber\\[4pt]
&= \partial_t \varphi + (\Grad \varphi) \cdot \vel,
\end{align}
and deduce that the material derivative depends on the velocity \(\vel\). Next, in view of the material derivative \eqref{eq:material.derivative}, for the position vector \(\bs{r} \coloneqq \bs{y} - \bs{o}\) relative to an arbitrary chosen origin \(\bs{o}\), we have that
\begin{equation}\label{eq:material.derivative.position}
\dot{\bs{r}} = \partial_t \bs{r} + (\Grad \bs{r}) \vel = \vel.
\end{equation}
Additionally, the velocity gradient \(\br{L}(\bs{y},t) \colon T_{\bs{y}}\prt_t \to T_{\bs{y}}\prt_t\) is defined through
\begin{equation}\label{eq:velocity.gradient}
\br{L} \coloneqq \dot{\br{F}} \br{F}^{-1},
\end{equation}
with the consequence that
\begin{equation}\label{eq:velocity.gradient.alternate}
\br{L} = \Grad \vel.
\end{equation}

For completeness, we recall the transport laws used below. A detailed geometric derivation of the covariant, contravariant, and corotational rates from the modes of advection induced by the motion is provided in Appendix \ref{ap:transport.rates}. In particular, a spatial vector field \(\bs{a}\in T_{\bs{y}}\prt_t\), a \((1,0)\)-tensor, advects as a tangent if
\begin{equation}
\dot{\bs{a}}=\br{L}\bs{a}.
\end{equation}
The corresponding dual covector field \(\bs{\alpha}\in T^\ast_{\bs{y}}\prt_t\), a \((0,1)\)-tensor, evolves according to
\begin{equation}
\dot{\bs{\alpha}}=-\br{L}^{\!\trans}\bs{\alpha}.
\end{equation}
Consequently, for a contravariant second-order tensor
\begin{equation}
\br{G}\in T_{\bs{y}}\prt_t\otimes T_{\bs{y}}\prt_t,
\end{equation}
that is, a \((2,0)\)-tensor acting on two spatial covectors, contravariant advection is characterized by the upper-convected rate
\begin{equation}
\overset{\triangledown}{\br{G}} \coloneqq \dot{\br{G}}-\br{L}\br{G}-\br{G}\br{L}^{\!\trans}.
\end{equation}
Conversely, if the same contravariant tensor advects corotationally under the spin \(\br{W}=\skw\br{L}\), the corresponding corotational rate is
\begin{equation}
\overset{\circ}{\br{G}} \coloneqq \dot{\br{G}}-\br{W}\br{G}+\br{G}\br{W}.
\end{equation}
Thus, at the geometric level, both the upper-convected and corotational rates may be motivated by actions on the same contravariant \((2,0)\)-tensorial structure. The distinction lies in the chosen mode of advection. In the Euclidean tensor representation adopted for the continuum theory below, these actions motivate the generator
\begin{equation}
\bs{\Gamma}_{\!\alpha}=\br{W}+\alpha\br{D}, \qquad \alpha\in\{0,1\},
\end{equation}
which recovers the corotational rate for \(\alpha=0\) and the upper-convected rate for \(\alpha=1\).

\subsection{Frame-indifferent rates}\label{sc:frame.indifferent.rates}

Capitalizing on the work by Gordon \& Schowalter \cite{Gor72}, Johnson \& Segalman \cite{Joh77}, and Beris and Edwards \cite{Ber94}, we introduce a general class of internal-variable models based on frame-indifferent rates, encompassing both upper-convected and corotational formulations. Their geometric origins are developed in Appendix \ref{ap:transport.rates}.
\begin{set}[Frame-indifferent rate]
Let \(\br{J}(\bs{y},t) \in \mathrm{Sym}\)\footnote{\(\mathrm{Sym}\) is the space of symmetric second-order tensors on the ambient Euclidean vector space.} be an internal variable. For \(\alpha \in \{0,1\}\), define the frame-indifferent rate
\begin{equation}\label{eq:frame.indifferent.rate}
\mathfrak{D}_{\!\alpha}\br{J} = \dot{\br{J}} - \bs{\Gamma}_{\!\alpha}\br{J} - \br{J}\bs{\Gamma}_{\!\alpha}^{\trans}, \with \bs{\Gamma}_{\!\alpha} \coloneqq \br{W} + \alpha \br{D},
\end{equation}
induced by \(\bs{\Gamma}_{\!\alpha}\), where \(\br{D} \coloneqq \sym \br{L}\) and \(\br{W} \coloneqq \skw \br{L}\) and
\begin{equation}
\bs{\Gamma}_{\!0} \in \mathfrak{so}(3) \subset \mathfrak{gl}(3) \texand \bs{\Gamma}_{\!1} \in \mathfrak{gl}(3).
\end{equation}
Here, \(\mathfrak{gl}(3)=\bb{R}^{3\times 3}\) and \(\mathfrak{so}(3)=\{\bs{\Omega}\in\bb{R}^{3\times 3}\colon \bs{\Omega}^{\!\trans}=-\bs{\Omega}\}\) are the Lie algebras of \(\mathrm{GL}^+(3)\) and \(\mathrm{SO}(3)\), respectively; their tangent-space characterization is recalled in Appendix \ref{ap:alg.grp}. In the general case where \(\br{D}\neq\bs{0}\), one has \(\bs{\Gamma}_{\!1}\notin\mathfrak{so}(3)\). The rate is induced through the action of the generator on symmetric second-order tensors.
\end{set}
The frame-indifferent evolution of \(\Grad \br{J}\) compatible with \(\mathfrak{D}_{\!\alpha}\br{J}\) is induced by the generator \(\bs{\Gamma}_{\!\alpha}\) through the requirement of commutation with spatial differentiation, namely
\begin{equation}\label{eq:gradient.frame.indifferent.rate}
\mathfrak{D}^{\!\nabla}_{\!\alpha}(\Grad \br{J}) \coloneqq \Grad(\mathfrak{D}_{\!\alpha}\br{J}),
\end{equation}
such that
\begin{equation}
\mathfrak{D}^{\!\nabla}_{\!\alpha}(\Grad \br{J}) = \Grad \dot{\br{J}} - \Grad(\bs{\Gamma}_{\!\alpha}\br{J}) - \Grad(\br{J}\bs{\Gamma}_{\!\alpha}^{\trans}).
\end{equation}
This construction yields \(\mathfrak{D}^{\!\nabla}_{\!\alpha}(\Grad \br{J}) = \Grad(\mathfrak{D}_{\!\alpha}\br{J})\), enforcing consistency between the evolution of \(\br{J}\) and its spatial gradient.
\begin{rmk}[Generator-induced gradient rate]
The expression \eqref{eq:gradient.frame.indifferent.rate} is induced by the generator \(\bs{\Gamma}_{\!\alpha}\) underlying the frame-indifferent rate \(\mathfrak{D}_{\!\alpha}\br{J}\). It does not introduce an independent frame-indifferent rate for third-order tensors; instead, it enforces kinematic compatibility by lifting the action of \(\bs{\Gamma}_{\!\alpha}\) to spatial gradients. This construction propagates frame indifference from \(\br{J}\) to \(\Grad \br{J}\) and, at the thermodynamic level, gives rise to Korteweg--Ericksen-type stress contributions.
\end{rmk}

Equivalently, the frame-indifferent rate \(\mathfrak{D}_{\!\alpha}\br{J}\) can be expressed as
\begin{equation}\label{eq:frame.indifferent.rate.definition}
\mathfrak{D}_{\!\alpha} \br{J} \coloneqq \dot{\br{J}} - \br{W}\br{J} + \br{J}\br{W} - \alpha \big(\br{D}\br{J} + \br{J}\br{D}).
\end{equation}
Analogously, using the kinematic identity
\begin{equation}
\Grad \dot{\br{J}} = (\Grad \br{J}){\dot{\vphantom{\br{J}}}} + (\Grad \br{J})\br{L},
\end{equation}
together with the product rules
\begin{equation}
\Grad(\bs{\Gamma}_{\!\alpha}\br{J}) = ((\Grad \bs{\Gamma}_{\!\alpha})^{\!\trans}\br{J})^{\!\trans} + \bs{\Gamma}_{\!\alpha}\Grad \br{J} \texand \Grad(\br{J}\bs{\Gamma}_{\!\alpha}^{\trans}) = ((\Grad \br{J})^{\!\trans}\bs{\Gamma}_{\!\alpha}^{\trans})^{\!\trans} + \br{J}\Grad(\bs{\Gamma}_{\!\alpha}^{\trans}),
\end{equation}
which reads explicitly as
\begin{equation}
\big[\Grad(\bs{\Gamma}_{\!\alpha}\br{J})\big]_{ijk} = \partial_k (\bs{\Gamma}_{\!\alpha})_{i\ell} (\br{J})_{\ell j} + (\bs{\Gamma}_{\!\alpha})_{i\ell} \partial_k (\br{J})_{\ell j} \texand \big[\Grad(\br{J}\bs{\Gamma}_{\!\alpha}^{\trans})\big]_{ijk} = \partial_k (\br{J})_{i\ell} (\bs{\Gamma}_{\!\alpha})_{j\ell} + (\br{J})_{i\ell}  \partial_k (\bs{\Gamma}_{\!\alpha})_{j\ell},
\end{equation}
the frame-indifferent rate of \(\Grad \br{J}\) \eqref{eq:gradient.frame.indifferent.rate} may be written in the explicit form
\begin{equation}\label{eq:gradient.frame.indifferent.rate.explicit}
\mathfrak{D}^{\!\nabla}_{\!\alpha}(\Grad \br{J}) = (\Grad \br{J}){\dot{\vphantom{\br{J}}}} + (\Grad \br{J})\br{L} - ((\Grad \bs{\Gamma}_{\!\alpha})^{\!\trans}\br{J})^{\!\trans} - \bs{\Gamma}_{\!\alpha}\Grad \br{J} - ((\Grad \br{J})^{\!\trans}\bs{\Gamma}_{\!\alpha}^{\trans})^{\!\trans} - \br{J}\Grad(\bs{\Gamma}_{\!\alpha}^{\trans}).
\end{equation}
Here, \(\Grad\br{J}\) denotes the spatial gradient of \(\br{J}\), with the differentiation direction represented by its last index. Thus, \(\Grad\br{J}\) is a third-order Euclidean tensor with symmetry in its first two indices and
\begin{equation}
(\Grad \br{J})_{ijk}=\partial_k(\br{J})_{ij}.
\end{equation}

\begin{rmk}[Euclidean tensor-space convention]\label{rmk:euclidean.tensor.convention}
The preceding geometric discussion is used solely to motivate the kinematic actions underlying the frame-indifferent rates. From this point onward, we adopt the standard Euclidean tensor-space convention of continuum mechanics. Thus, second-order tensors are regarded as linear transformations of the ambient Euclidean vector space, while higher-order tensors belong to the corresponding Euclidean tensor spaces. Whenever appropriate, tensor products are compositions, and transpose, trace, determinant, and tensor inner products are the Euclidean operations. Fr\'echet derivatives and power-conjugate quantities are identified with their Euclidean Riesz representatives and are denoted by the same symbols. Accordingly, all power pairings below are represented by the standard complete tensor contractions \(\twovdots\) and \(\threevdots\).
\end{rmk}

\section{Balance Laws}\label{sc:balance.laws}

Here, we recall the balance laws of continuum mechanics in their spatial form.

\begin{post}[Mass balance]\label{pt:mass.balance}
The balance of mass
\begin{equation}
\dfrac{\mathrm{d}}{\mathrm{d}t} \intPt \varrho \dv_t = 0,
\end{equation}
holds for every spatial part \(\prt_t\), and localizes to
\begin{equation}
\dot{\varrho} + \varrho \, \Div \vel = 0.
\end{equation}
In the incompressible case considered herein, \(\varrho = \text{const}\), and therefore
\begin{equation}
\Div \vel = 0.
\end{equation}
\end{post}

\begin{post}[Linear \& angular momentum balance]\label{pt:linear.angular.momentum}
The balance of linear momentum
\begin{equation}\label{eq:partwise.linear.momentum}
\dfrac{\mathrm{d}}{\mathrm{d}t} \intPt \varrho \vel \dv_t = \intPt \varrho \bs{b} \dv_t + \intdPt \bsts \da_t + \intddPt \bstc \ds_t,
\end{equation}
holds for every spatial part \(\prt_t\). Here, \(\bs{b}\) denotes the body force per unit mass, the surface traction \(\bsts\) is the force per unit area exerted on the boundary \(\partial \prt_t\) given by
\begin{equation}\label{eq:surface.traction}
\bsts = (\br{T} - \Div \bb{T})\bs{n} - \Divs ((\bb{T}\bs{n}) \br{P}),
\end{equation}
where \(\br{T}\) is the Cauchy stress tensor and \(\bb{T}\) is the hyperstress tensor, and the edge traction \(\bstc\) is the force per unit length exerted on the edge \(\partial^2 \prt_t\) given by
\begin{equation}\label{eq:edge.traction}
\bstc = \surp{(\bb{T}\bs{n})\bs{\nu}}.
\end{equation}
The complete derivation of the surface and edge tractions along with the partwise balances can be found in the work by Espath \cite{Esp23} and Fried and Gurtin \cite{Fri06}. The interested reader is also referred to the work by Fosdick \cite{Fos16}, which provides a comprehensive treatment of the subject. Then, using the divergence and surface divergence theorem on nonsmooth surfaces on the partwise linear momentum balance \eqref{eq:partwise.linear.momentum} and standard localization arguments, this reduces to the pointwise condition
\begin{equation}\label{eq:pointwise.linear.momentum}
\varrho \dot{\vel} = \Div (\br{T} - \Div \bb{T}) + \varrho \bs{b}.
\end{equation}

In addition to \eqref{eq:pointwise.linear.momentum}, the balance of angular momentum
\begin{equation}\label{eq:partwise.angular.momentum}
\dfrac{\mathrm{d}}{\mathrm{d}t} \intPt \bs{r} \wedge \varrho \vel \dv_t = \intPt \bs{r} \wedge \varrho \bs{b} \dv_t + \intdPt (\bs{r} \wedge \bsts + \bsms) \da_t + \intddPt \bs{r} \wedge \bstc \ds_t,
\end{equation}
holds for every spatial part \(\prt_t\). Here, \(\bs{p} \wedge \bs{q} = \bs{p} \otimes \bs{q} - \bs{q} \otimes \bs{p}\) denotes the wedge product and \(\bsms\) is the surface couple traction, which is the moment per unit area exerted on the boundary \(\partial \prt_t\), and takes the form
\begin{equation}\label{eq:surface.couple.traction}
\bsms = \bs{n} \wedge (\bb{T}\bs{n}) \bs{n}.
\end{equation}
Using the divergence and surface divergence theorem on nonsmooth surfaces on the partwise angular momentum balance \eqref{eq:partwise.angular.momentum} and standard localization arguments, this reduces to the pointwise condition
\begin{equation}
\skw \br{T} = \bs{0},
\end{equation}
that is, the Cauchy stress tensor is symmetric
\begin{equation}
\br{T} = \br{T}^{\trans}.
\end{equation}
\end{post}

\begin{post}[Microstructural balance]\label{pt:microstructural.momentum}
The balance of microstructural interactions
\begin{equation}
\intPt (\bs{\Xi} + \bs{\Upsilon}) \dv_t + \intdPt \bsxs \da_t = \bs{0},
\end{equation}
holds for every spatial part \(\prt_t\). Here, \(\bsxs\) is the microstructural traction, which is the force per unit area exerted on the boundary \(\partial \prt_t\) given by
\begin{equation}\label{eq:microstructural.traction}
\bsxs \coloneqq \bb{X}\bs{n},
\end{equation}
while \(\bs{\Xi}\) and \(\bs{\Upsilon}\) are the internal and external microstructural body forces per unit volume, respectively, and \(\bb{X}\) is the microstress tensor. Localization yields
\begin{equation}\label{eq:pointwise.microstructural.linear.momentum}
\bs{\Xi} + \bs{\Upsilon} + \Div \bb{X} = \bs{0}.
\end{equation}
\end{post}

\begin{thm}[Power identity]
If Postulates \ref{pt:mass.balance}, \ref{pt:linear.angular.momentum}, and \ref{pt:microstructural.momentum} hold for every spatial part \(\prt_t\), then the power identity
\begin{equation}
\cl{W}_{\mathrm{int}}(\cl{P}_t) = \cl{W}_{\mathrm{ext}}(\cl{P}_t) - \dot{\cl{K}}(\cl{P}_t),
\end{equation}
holds for every spatial part \(\prt_t\), where we identify the total internal power expended on the spatial part \(\prt_t\) as
\begin{equation}\label{eq:W.int}
\cl{W}_{\mathrm{int}}(\cl{P}_t) = \intPt \br{T} \twovdots \br{D} \dv_t + \intPt \bb{T} \threevdots \Grad \br{L} \dv_t - \intPt \bs{\Xi} \twovdots \mathfrak{D}_{\!\alpha}\br{J} \dv_t + \intPt \bb{X} \threevdots \mathfrak{D}^{\!\nabla}_{\!\alpha}(\Grad \br{J}) \dv_t,
\end{equation}
the total external power expended on the spatial part \(\prt_t\) as
\begin{align}\label{eq:W.ext}
\cl{W}_{\mathrm{ext}}(\cl{P}_t) ={}& \intPt \bs{\Upsilon} \twovdots \mathfrak{D}_{\!\alpha}\br{J} \dv_t + \intPt \varrho \bs{b} \cdot \vel \dv_t + \intdPt (\bsts \cdot \vel + \bshs \cdot \partial_n \vel) \da_t + \intdPt \bsxs \twovdots \mathfrak{D}_{\!\alpha}\br{J} \da_t + \intddPt \bstc \cdot \vel \ds_t,
\end{align}
where \(\partial_n \vel \coloneqq (\Grad \vel) \bs{n}\) denotes the normal derivative of the velocity field and \(\bshs\) is the hypertraction defined as
\begin{equation}
\bshs \coloneqq (\bb{T}\bs{n}) \bs{n},
\end{equation}
and the variation of the kinetic energy of the spatial part \(\prt_t\) as
\begin{equation}\label{eq:K.var}
\dot{\cl{K}}(\cl{P}_t) = {\fr{1}{2}} \dfrac{\mathrm{d}}{\mathrm{d}t} \intPt \varrho |\vel|^2 \dv_t.
\end{equation}
The remaining tractions \(\bsts\), \(\bstc\), and \(\bsxs\) are respectively given by \eqref{eq:surface.traction}, \eqref{eq:edge.traction}, and \eqref{eq:microstructural.traction}.
\end{thm}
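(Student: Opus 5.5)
The plan is to start from the local balances and build the power identity by pairing and integrating over \(\prt_t\). First, take the pointwise linear momentum balance \eqref{eq:pointwise.linear.momentum}, dot it with \(\vel\), and integrate over \(\prt_t\). Because \(\varrho\) is constant and \(\Div\vel=0\), Reynolds' transport theorem gives \(\intPt \varrho\dot{\vel}\cdot\vel\dv_t=\dot{\cl{K}}(\prt_t)\), which accounts for the kinetic-energy term.

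For the stress terms I use \(\Div(\br{T}^{\trans}\vel)=(\Div\br{T})\cdot\vel+\br{T}\twovdots\br{L}\). Symmetry of \(\br{T}\) (angular momentum) turns \(\br{T}\twovdots\br{L}\) into \(\br{T}\twovdots\br{D}\). The hyperstress contribution \(-(\Div\Div\bb{T})\cdot\vel\) needs two integrations by parts:
\begin{equation*}
-\intPt(\Div\Div\bb{T})\cdot\vel\dv_t=-\intdPt(\Div\bb{T})\bs{n}\cdot\vel\da_t+\intdPt(\bb{T}\bs{n})\twovdots\br{L}\da_t-\intPt\bb{T}\threevdots\Grad\br{L}\dv_t .
\end{equation*}

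The surface term \((\bb{T}\bs{n})\twovdots\Grad\vel\) must then be handled with care, and this is the step I expect to be the main obstacle. Split \(\Grad\vel=(\Grad\vel)\br{P}+\partial_n\vel\otimes\bs{n}\), so that the surface term becomes \((\bb{T}\bs{n})\twovdots\Grads\vel+\bshs\cdot\partial_n\vel\). Next write
\begin{equation*}
(\bb{T}\bs{n})\twovdots\Grads\vel=\Divs\big(((\bb{T}\bs{n})\br{P})^{\trans}\vel\big)-\Divs((\bb{T}\bs{n})\br{P})\cdot\vel ,
\end{equation*}
using \(\Grads\vel=(\Grad\vel)\br{P}\). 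Then apply the nonsmooth surface divergence theorem \eqref{eq:nonsmooth.divs.theo.closed.S} to the vector field \(((\bb{T}\bs{n})\br{P})^{\trans}\vel\). Since this field is tangential, the curvature term in \eqref{eq:nonsmooth.divs.theo.closed.S.tensors} vanishes, and only the edge surplus \(\surp{(\bb{T}\bs{n})\bs{\nu}}\cdot\vel\) survives (note \(\br{P}\bs{\nu}^\pm=\bs{\nu}^\pm\)). Collecting terms, the traction \eqref{eq:surface.traction} and the edge traction \eqref{eq:edge.traction} appear exactly. Careful bookkeeping of index placement for third-order contractions is required here.

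For the microstructure, contract \eqref{eq:pointwise.microstructural.linear.momentum} with \(\mathfrak{D}_{\!\alpha}\br{J}\) and integrate. I use \(\Div\big((\bb{X}^{\trans\!\cdots})\,\mathfrak{D}_{\!\alpha}\br{J}\big)=(\Div\bb{X})\twovdots\mathfrak{D}_{\!\alpha}\br{J}+\bb{X}\threevdots\Grad(\mathfrak{D}_{\!\alpha}\br{J})\), with the index convention \((\Grad\br{J})_{ijk}=\partial_k J_{ij}\). The defining property \eqref{eq:gradient.frame.indifferent.rate}, \(\mathfrak{D}^{\!\nabla}_{\!\alpha}(\Grad\br{J})=\Grad(\mathfrak{D}_{\!\alpha}\br{J})\), is precisely what makes the volume term read \(\bb{X}\threevdots\mathfrak{D}^{\!\nabla}_{\!\alpha}(\Grad\br{J})\). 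The divergence theorem then gives the boundary term \(\bsxs\twovdots\mathfrak{D}_{\!\alpha}\br{J}\), which is free of edge terms because the microbalance is only first order. This yields \(-\bs{\Xi}\twovdots\mathfrak{D}_{\!\alpha}\br{J}+\bb{X}\threevdots\mathfrak{D}^{\!\nabla}_{\!\alpha}(\Grad\br{J})=\bs{\Upsilon}\twovdots\mathfrak{D}_{\!\alpha}\br{J}+\) boundary terms in integrated form.

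Adding the mechanical and microstructural identities and rearranging produces \(\cl{W}_{\mathrm{int}}=\cl{W}_{\mathrm{ext}}-\dot{\cl{K}}\).
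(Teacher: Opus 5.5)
Your proposal is correct and follows the same route as the paper's proof. You test linear momentum against $\vel$ and integrate by parts twice, using the nonsmooth surface divergence theorem to produce $\bsts$, $\bshs$ and $\bstc$; you then test the microforce balance against $\mathfrak{D}_{\!\alpha}\br{J}$ and invoke $\mathfrak{D}^{\!\nabla}_{\!\alpha}(\Grad\br{J})=\Grad(\mathfrak{D}_{\!\alpha}\br{J})$. Along the way you spell out steps the paper leaves implicit: the tangential/normal split of $\Grad\vel$, the use of $\br{T}=\br{T}^{\trans}$ to pass from $\br{T}\twovdots\br{L}$ to $\br{T}\twovdots\br{D}$, and Reynolds' theorem for $\dot{\cl{K}}$.
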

\begin{proof}
Testing the linear momentum balance \eqref{eq:pointwise.linear.momentum} against the velocity \(\vel\), integrating over \(\cl{P}_t\), and using the divergence and surface divergence theorems, we obtain
\begin{align}
\intPt \varrho \dot{\vel} \cdot \vel \dv_t ={}& \intPt (\Div (\br{T} - \Div \bb{T})) \cdot \vel \dv_t + \intPt \varrho \bs{b} \cdot \vel \dv_t \nonumber\\[4pt]
={}& - \intPt \br{T} \twovdots \Grad \vel \dv_t - \intPt \bb{T} \threevdots \Grad^2 \vel \dv_t + \intPt \varrho \bs{b} \cdot \vel \dv_t \nonumber\\[4pt]
&+ \intdPt (((\br{T} - \Div \bb{T})\bs{n} - \Divs ((\bb{T}\bs{n}) \br{P})) \cdot \vel + ((\bb{T}\bs{n})\bs{n} \otimes \bs{n}) \twovdots \Grad \vel) \da_t \nonumber\\[4pt]
&+ \intddPt \surp{(\bb{T}\bs{n})\bs{\nu}} \cdot \vel \ds_t.
\end{align}
Rearranging, we obtain the power balance
\begin{align}
\intPt \br{T} \twovdots \Grad \vel \dv_t + \intPt \bb{T} \threevdots \Grad^2 \vel \dv_t ={}& \intPt \varrho \bs{b} \cdot \vel \dv_t - {\fr{1}{2}} \dfrac{\mathrm{d}}{\mathrm{d}t} \intPt \varrho |\vel|^2 \dv_t \nonumber\\[4pt]
&+ \intdPt (((\br{T} - \Div \bb{T})\bs{n} - \Divs ((\bb{T}\bs{n}) \br{P})) \cdot \vel + ((\bb{T}\bs{n})\bs{n} \otimes \bs{n}) \twovdots \Grad \vel) \da_t \nonumber\\[4pt]
&+ \intddPt \surp{(\bb{T}\bs{n})\bs{\nu}} \cdot \vel \ds_t.
\end{align}

In this generalized theory, we postulate that the microstructural interactions expend power through the frame-indifferent rate \(\mathfrak{D}_{\!\alpha}\br{J}\). Accordingly, testing the microstructural linear momentum balance \eqref{eq:pointwise.microstructural.linear.momentum} against \(\mathfrak{D}_{\!\alpha}\br{J}\) yields the microstructural contribution to the power expenditure
\begin{align}
- \intPt \bs{\Xi} \twovdots \mathfrak{D}_{\!\alpha}\br{J} \dv_t + \intPt \bb{X} \threevdots \Grad(\mathfrak{D}_{\!\alpha}\br{J}) \dv_t = \intPt \bs{\Upsilon} \twovdots \mathfrak{D}_{\!\alpha}\br{J} \dv_t + \intdPt \bb{X}\bs{n} \twovdots \mathfrak{D}_{\!\alpha}\br{J} \da_t.
\end{align}
Therefore, we identify the total external power as
\begin{align}
\cl{W}_{\mathrm{ext}}(\cl{P}_t) ={}& \intPt \bs{\Upsilon} \twovdots \mathfrak{D}_{\!\alpha}\br{J} \dv_t + \intPt \varrho \bs{b} \cdot \vel \dv_t \nonumber\\[4pt]
&+ \intdPt (((\br{T} - \Div \bb{T})\bs{n} - \Divs ((\bb{T}\bs{n}) \br{P})) \cdot \vel + ((\bb{T}\bs{n})\bs{n} \otimes \bs{n}) \twovdots \Grad \vel) \da_t \nonumber\\[4pt]
&+ \intdPt \bb{X}\bs{n} \twovdots \mathfrak{D}_{\!\alpha}\br{J} \da_t  + \intddPt \surp{(\bb{T}\bs{n})\bs{\nu}} \cdot \vel \ds_t,
\end{align}
with the respective traction definitions. The internal power and kinetic energy variation are then identified as the remaining contribution along with \eqref{eq:gradient.frame.indifferent.rate}, yielding \eqref{eq:W.int} and \eqref{eq:K.var}.
\end{proof}

\section{A unified thermodynamic framework for frame-indifferent rates of weakly nonlocal theories}\label{sc:thermodynamic.framework}

It is natural to identify the conjugate forces appearing in the internal power \eqref{eq:W.int} with their power-active representatives, since any component orthogonal to the corresponding kinematic quantity is powerless.
\begin{ass}[Power-active symmetries]\label{as:inherited.symmetries}
Owing to the power-conjugate pairs appearing in the internal power \eqref{eq:W.int}, we identify the constitutively determined parts of \(\br{T}\) and \(\bb{T}\), as well as \(\bs{\Xi}\) and \(\bb{X}\), with the projections having the same power-active symmetries as \(\br{D}\), \(\Grad \br{L}\), \(\mathfrak{D}_{\!\alpha}\br{J}\), and \(\mathfrak{D}^{\!\nabla}_{\!\alpha}(\Grad \br{J})\), respectively. The reactive Lagrange multipliers associated with incompressibility are treated separately below.
\end{ass}
In particular, for an incompressible motion, \(\br{D}\) belongs to the space
\begin{equation}\label{eq:Sym0.space}
\mathrm{Sym}_0 \coloneqq \{\br{A} \colon \br{A}=\br{A}^{\!\trans},\ \tr\br{A}=0\},
\end{equation}
whereas \(\Grad\br{L}\) belongs to the space
\begin{equation}\label{eq:E.space}
\cl{A} \coloneqq \{\bb{A} \colon (\bb{A})_{ijk}=(\bb{A})_{ikj},\ (\bb{A})_{iik}=0\}.
\end{equation}
For every second-order tensor \(\br{C}\), define
\begin{equation}\label{eq:P0.def}
\bb{P}_0[\br{C}] \coloneqq \sym(\br{C})-\fr{1}{3}\tr(\br{C})\bs{1}.
\end{equation}
Then, \(\bb{P}_0\) is the orthogonal projection onto \(\mathrm{Sym}_0\). In particular,
\begin{equation}\label{eq:P0.power}
\bb{P}_0[\br{C}]\twovdots\br{D}=\sym(\br{C})\twovdots\br{D}.
\end{equation}
Here and below, \(\sym_{23}\bb{C} \coloneqq \fr{1}{2}(\bb{C}+\bb{C}^{\trans})\) denotes symmetrization with respect to the last two indices.
For every third-order tensor \(\bb{C}\) symmetric with respect to its last two indices, define
\begin{equation}\label{eq:P.A.def}
(\bb{P}_{\!\!\scriptscriptstyle\cl{A}}[\bb{C}])_{ijk} \coloneqq (\bb{C})_{ijk} - \fr{1}{4}\big(\delta_{ij}(\bb{C})_{mmk}+\delta_{ik}(\bb{C})_{mmj}\big),
\end{equation}
where \(\delta_{ij}\) is the Kronecker delta. Then, \(\bb{P}_{\!\!\scriptscriptstyle\cl{A}}\) is the orthogonal projection onto \(\cl{A}\). In particular,
\begin{equation}\label{eq:P.A.power}
\bb{P}_{\!\!\scriptscriptstyle\cl{A}}[\bb{C}] \threevdots \Grad\br{L} = \bb{C} \threevdots \Grad\br{L}.
\end{equation}
Additionally, we assume that the response functions are isotropic in their dependence on the kinematic quantities.
\begin{ass}[Response functions]\label{as:response.functions}
Let the free-energy density be given by the response function
\begin{equation}
\psi = \hat{\psi}(\br{J}, \Grad \br{J}),
\end{equation}
where \(\partial_{\br{J}}\psi\) and \(\partial_{\Grad\br{J}}\psi\) denote the Euclidean Riesz representatives of the Fr\'echet derivatives with respect to \(\br{J}\) and \(\Grad\br{J}\), respectively. Since \(\Grad\br{J}\) is symmetric in its first two indices, \(\partial_{\Grad\br{J}}\psi\) denotes the Euclidean Riesz representative with the same symmetry. Assume that \(\hat{\psi}\) is isotropic in its dependence on \(\br{J} \in \mathrm{Sym}\) and \(\Grad \br{J}\). Moreover, assume the stresses decomposition
\begin{equation}\label{eq:stress.decomposition}
\br{T} = -p \bs{1} + \br{S} \texand \bb{T} = - \sym_{23}(\bs{1} \otimes \bs{p}) + \bb{S},
\end{equation}
where \(p\) and \(\bs{p}\) are the pressure and hyperpressure, respectively, and \((\bs{1}\otimes\bs{p})_{ijk}=\delta_{ij}p_k\). Thus,
\begin{equation}
\sym_{23}(\bs{1}\otimes\bs{p})\threevdots\Grad\br{L}=\bs{p}\cdot\Grad\Div\vel=0.
\end{equation}
The tensor \(\sym_{23}(\bs{1}\otimes\bs{p})\) is therefore the representative of the Lagrange multiplier, symmetric in its last two indices, associated with \(\Grad\Div\vel=\bs{0}\). In addition,
\begin{equation}
\Div\big(\Div(\sym_{23}(\bs{1}\otimes\bs{p}))\big)=\Grad\Div\bs{p},
\end{equation}
so the bulk momentum balance depends on the pressure and hyperpressure only through \(p-\Div\bs{p}\). Consequently, \(p\) and \(\bs{p}\) are Lagrange-multiplier representatives determined up to this bulk gauge, with their representatives further constrained by the boundary conditions. Moreover, \(\br{S}\in\mathrm{Sym}_0\) and \(\bb{S}\in\cl{A}\) are the total deviatoric stress and total deviatoric hyperstress, respectively. Lastly, the internal microforce and microstress are given by the response functions
\begin{equation}
\bs{\Xi} \coloneqq \hat{\bs{\Xi}}\big(\br{J},\Grad\br{J},\mathfrak{D}_{\!\alpha}\br{J},\mathfrak{D}^{\!\nabla}_{\!\alpha}(\Grad\br{J})\big) \texand \bb{X} \coloneqq \hat{\bb{X}}\big(\br{J},\Grad\br{J},\mathfrak{D}_{\!\alpha}\br{J},\mathfrak{D}^{\!\nabla}_{\!\alpha}(\Grad\br{J})\big).
\end{equation}
\end{ass}

\begin{lem}[Infinitesimal consequences of isotropy]\label{lem:infinitesimal.isotropy}
Consider Assumption \ref{as:response.functions}. Then, the following hold. First,
\begin{equation}\label{eq:dpsi.sym}
\partial_{\br{J}}\psi \in \mathrm{Sym}.
\end{equation}
Second, for every skew-symmetric tensor \(\bs{\Omega}\in\mathfrak{so}(3)\),
\begin{equation}\label{eq:infinitesimal.isotropy.full}
\partial_{\br{J}}\psi \twovdots (\bs{\Omega}\br{J} - \br{J}\bs{\Omega}) + \partial_{\Grad\br{J}}\psi \threevdots (\bs{\Omega}\Grad\br{J} + ((\Grad\br{J})^{\!\trans}\bs{\Omega}^{\!\trans})^{\!\trans} - (\Grad\br{J})\bs{\Omega}) = 0.
\end{equation}
Equivalently,
\begin{equation}\label{eq:infinitesimal.isotropy.equivalent}
\skw(\partial_{\br{J}}\psi\br{J} - \br{J}\partial_{\br{J}}\psi + \partial_{\Grad\br{J}}\psi \twovcR \Grad\br{J} + \partial_{\Grad\br{J}}\psi \twovcM \Grad\br{J} - \Grad\br{J} \twovc \partial_{\Grad\br{J}}\psi) = \bs{0}.
\end{equation}
Here, the tensor contraction notations \(\twovcR\), \(\twovcM\), and \(\twovc\) denote the tensor products of the first, second, and third indices, respectively, of a third-order tensor with a third-order tensor, that is, \((\bb{A} \twovcR \bb{B})_{ij} \coloneqq (\bb{A})_{ik\ell}(\bb{B})_{jk\ell}\), \((\bb{A} \twovcM \bb{B})_{ij} \coloneqq (\bb{A})_{ki\ell}(\bb{B})_{kj\ell}\), \((\bb{A} \twovc \bb{B})_{ij} \coloneqq (\bb{A})_{k\ell i}(\bb{B})_{k\ell j}\).
\end{lem}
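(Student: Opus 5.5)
The plan is to differentiate the isotropy identity along a one-parameter rotation group and read off the infinitesimal condition, then convert it into the matrix form by duality.

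\textbf{Symmetry of \(\partial_{\br{J}}\psi\).} The first claim \eqref{eq:dpsi.sym} needs no isotropy. Since \(\hat{\psi}\) is defined only on \(\mathrm{Sym}\), its Fr\'echet derivative acts only on symmetric increments. By the convention of Remark \ref{rmk:euclidean.tensor.convention} and Assumption \ref{as:response.functions}, its Riesz representative is taken in \(\mathrm{Sym}\). Equivalently, one may replace \(\partial_{\br{J}}\psi\) by its symmetric part without changing any pairing with symmetric tensors.

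\textbf{Infinitesimal isotropy.} Isotropy means \(\hat{\psi}(\br{Q}\br{J}\br{Q}^{\trans}, \br{Q}*\Grad\br{J}) = \hat{\psi}(\br{J},\Grad\br{J})\) for all \(\br{Q}\in\mathrm{SO}(3)\). Here \((\br{Q}*\bb{G})_{ijk} = Q_{ia}Q_{jb}Q_{kc}G_{abc}\) is the natural action on third-order tensors, which preserves symmetry in the first two indices.
\begin{itemize}
\item Choose \(\br{Q}(s)=\exp(s\bs{\Omega})\) with \(\bs{\Omega}\in\mathfrak{so}(3)\) and differentiate at \(s=0\) using the chain rule.
\item The derivative of the \(\br{J}\)-argument is \(\bs{\Omega}\br{J}+\br{J}\bs{\Omega}^{\trans}=\bs{\Omega}\br{J}-\br{J}\bs{\Omega}\).
\item The derivative of the gradient argument has components \(\Omega_{ia}G_{ajk}+\Omega_{jb}G_{ibk}+\Omega_{kc}G_{ijc}\).
\end{itemize}
Matching these three terms to the paper's notation is the bookkeeping obstacle. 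The first term is \(\bs{\Omega}\Grad\br{J}\). The second is \(((\Grad\br{J})^{\trans}\bs{\Omega}^{\trans})^{\trans}\), with the transpose swapping the last two indices. The third, \(G_{ijc}\Omega_{kc}\), equals \(-(\Grad\br{J})\bs{\Omega}\), since \((\bb{G}\bs{\Omega})_{ijk}=G_{ijc}\Omega_{ck}\) and \(\Omega_{kc}=-\Omega_{ck}\). I will verify each in index form against the product rules displayed before \eqref{eq:gradient.frame.indifferent.rate.explicit}. Setting the derivative to zero then gives \eqref{eq:infinitesimal.isotropy.full}.

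\textbf{Equivalent form.} Write every pairing as \(\br{M}\twovdots\bs{\Omega}\) for some second-order \(\br{M}\). Since \(\bs{\Omega}\) ranges over all of \(\mathfrak{so}(3)\), vanishing for all \(\bs{\Omega}\) is equivalent to \(\skw\br{M}=\bs{0}\). With \(\br{A}=\partial_{\br{J}}\psi\) and \(\bb{B}=\partial_{\Grad\br{J}}\psi\), the contributions to \(\br{M}\) are:
\begin{itemize}
\item \(\br{A}\twovdots(\bs{\Omega}\br{J})=(\br{A}\br{J}^{\trans})\twovdots\bs{\Omega}=(\br{A}\br{J})\twovdots\bs{\Omega}\), using the symmetry of \(\br{J}\).
\item \(-\br{A}\twovdots(\br{J}\bs{\Omega})=-(\br{J}\br{A})\twovdots\bs{\Omega}\).
\item \(B_{ijk}\Omega_{ia}G_{ajk}=(\bb{B}\twovcR\Grad\br{J})\twovdots\bs{\Omega}\).
\item \(B_{ijk}\Omega_{jb}G_{ibk}=(\bb{B}\twovcM\Grad\br{J})\twovdots\bs{\Omega}\).
\item \(-B_{ijk}G_{ijc}\Omega_{ck}\) becomes \(-(\Grad\br{J}\twovc\bb{B})\twovdots\bs{\Omega}\) after relabeling the indices to \((\Grad\br{J}\twovc\bb{B})_{ck}\Omega_{ck}\).
\end{itemize}
Summing these terms and taking the skew part yields \eqref{eq:infinitesimal.isotropy.equivalent}.

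The main risk is sign and index-order consistency in the third-order terms, for example which index the contraction \(\twovc\) acts on. I would fix all conventions by writing every expression componentwise before converting back to the compact notation.
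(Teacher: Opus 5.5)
Your proposal is correct and follows essentially the same route as the paper. Symmetry of $\partial_{\br{J}}\psi$ comes from the Riesz representative on $\mathrm{Sym}$; you then differentiate the isotropy identity along a rotation curve with tangent $\bs{\Omega}$ at the identity, match the three infinitesimal terms on $\Grad\br{J}$ to the compact notation, and collect the coefficient of the arbitrary skew $\bs{\Omega}$. Your componentwise bookkeeping for the three contractions, including the sign of the third-index contraction term, checks out and spells out the final step that the paper states in one line.
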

\begin{proof}
Since the first argument of \(\hat{\psi}\) belongs to \(\mathrm{Sym}\), its Fr\'echet derivative with respect to \(\br{J}\) is a linear functional on \(\mathrm{Sym}\). Its Euclidean Riesz representative therefore belongs to \(\mathrm{Sym}\), which proves \eqref{eq:dpsi.sym}.

Next, let \(\br{Q}(\epsilon)\in\mathrm{SO}(3)\) be a smooth one-parameter family such that
\begin{equation}\label{eq:Q.family}
\br{Q}(0)=\bs{1}, \qquad \dfrac{\mathrm d}{\mathrm d\epsilon}\bigg|_{\epsilon=0}\br{Q}(\epsilon) = \bs{\Omega}, \texand \bs{\Omega}^{\!\trans} = -\bs{\Omega}.
\end{equation}
By isotropy of \(\hat{\psi}\),
\begin{equation}\label{eq:isotropy.statement}
\hat{\psi}\big(\br{Q}\br{J}\br{Q}^{\!\trans}, \br{Q}*(\Grad\br{J})\big) = \hat{\psi}(\br{J},\Grad\br{J}) \qquad \text{for every } \br{Q}\in\mathrm{SO}(3),
\end{equation}
where the natural action of \(\br{Q}\) on the third-order Euclidean tensor \(\Grad\br{J}\) is induced by the standard rotational action on each index. Thus,
\begin{equation}\label{eq:Q.action.gradJ}
(\br{Q}*(\Grad\br{J}))^{ijk} = (\br{Q})^{i}{}_{p}(\br{Q})^{j}{}_{q}(\br{Q})^{k}{}_{r}(\Grad\br{J})^{pqr}.
\end{equation}
Equivalently, in components,
\begin{equation}
(\br{Q}*(\Grad\br{J}))_{ijk} = (\br{Q})_{ip}(\br{Q})_{jq}(\br{Q})_{kr}(\Grad\br{J})_{pqr}.
\end{equation}
Differentiating \eqref{eq:isotropy.statement} with respect to \(\epsilon\) at \(\epsilon=0\), we obtain
\begin{equation}\label{eq:isotropy.diff}
0 = \partial_{\br{J}}\psi \twovdots \dfrac{\mathrm d}{\mathrm d\epsilon}\bigg|_{\epsilon=0} (\br{Q}\br{J}\br{Q}^{\!\trans}) + \partial_{\Grad\br{J}}\psi \threevdots \dfrac{\mathrm d}{\mathrm d\epsilon}\bigg|_{\epsilon=0} (\br{Q}*(\Grad\br{J})).
\end{equation}
Now, differentiating the first term in \eqref{eq:isotropy.diff}, we have that
\begin{equation}\label{eq:QJQ.diff}
\dfrac{\mathrm d}{\mathrm d\epsilon}\bigg|_{\epsilon=0} (\br{Q}\br{J}\br{Q}^{\!\trans}) = \bs{\Omega}\br{J} + \br{J}\bs{\Omega}^{\!\trans} = \bs{\Omega}\br{J} - \br{J}\bs{\Omega}.
\end{equation}
Also, differentiating the second term in \eqref{eq:isotropy.diff} yields the infinitesimal rotational action on the three indices of \(\Grad\br{J}\):
\begin{equation}\label{eq:QgradJ.diff.index}
\dfrac{\mathrm d}{\mathrm d\epsilon}\bigg|_{\epsilon=0} (\br{Q}*(\Grad\br{J}))_{ijk} = (\bs{\Omega})_{i\ell}\partial_k(\br{J})_{\ell j} + (\bs{\Omega})_{j\ell} \partial_k (\br{J})_{i\ell} + (\bs{\Omega})_{k\ell} \partial_\ell(\br{J})_{ij}.
\end{equation}
Accordingly, the infinitesimal action may be written compactly as
\begin{equation}\label{eq:QgradJ.diff.compact}
\dfrac{\mathrm d}{\mathrm d\epsilon}\bigg|_{\epsilon=0} (\br{Q}*(\Grad\br{J})) = \bs{\Omega} \, \Grad\br{J} + ((\Grad\br{J})^{\!\trans}\bs{\Omega}^{\trans})^{\!\trans} - (\Grad\br{J})\bs{\Omega}.
\end{equation}
Substituting \eqref{eq:QJQ.diff} and \eqref{eq:QgradJ.diff.compact} into \eqref{eq:isotropy.diff} proves \eqref{eq:infinitesimal.isotropy.full}. Collecting the coefficient of the arbitrary skew-symmetric tensor \(\bs{\Omega}\) proves \eqref{eq:infinitesimal.isotropy.equivalent}.
\end{proof}
An analogous result to Lemma \ref{lem:infinitesimal.isotropy} is derived by Sonnet \& Virga \cite[Equation (4.29)]{Son12} by requiring the elastic energy density to be frame indifferent.

The following lemma provides the adjoint identities for the Lyapunov and commutator operators used below.
\begin{lem}[Adjoint identities]\label{lem:lyapunov.commutator.identities}
Let \(\br{J}\in\mathrm{Sym}\), and define the Lyapunov operator \(\cl{L}_{\!\br{J}}\colon\mathrm{Sym}\to\mathrm{Sym}\) by
\begin{equation}\label{eq:lyapunov.operator}
\cl{L}_{\!\br{J}}[\br{A}] \coloneqq \br{A}\br{J}+\br{J}\br{A}.
\end{equation}
Then, \(\cl{L}_{\!\br{J}}\) is self-adjoint with respect to the Frobenius inner product, that is,
\begin{equation}\label{eq:lyapunov.self.adjoint}
\br{B}\twovdots\cl{L}_{\!\br{J}}[\br{A}]=\cl{L}_{\!\br{J}}[\br{B}]\twovdots\br{A}\qquad\text{for every }\br{A},\br{B}\in\mathrm{Sym}.
\end{equation}
Define also the commutator operator \(\cl{C}_{\!\br{J}}\colon\mathfrak{so}(3)\to\mathrm{Sym}\) by
\begin{equation}
\cl{C}_{\!\br{J}}[\bs{\Omega}]\coloneqq\bs{\Omega}\br{J}-\br{J}\bs{\Omega}.
\end{equation}
Its adjoint \(\cl{C}_{\!\br{J}}^\ast\colon\mathrm{Sym}\to\mathfrak{so}(3)\) is given by
\begin{equation}
\cl{C}_{\!\br{J}}^\ast[\br{B}]\coloneqq\br{B}\br{J}-\br{J}\br{B},
\end{equation}
and hence, for every \(\br{B}\in\mathrm{Sym}\) and \(\bs{\Omega}\in\mathfrak{so}(3)\),
\begin{equation}\label{eq:commutator.adjoint}
\br{B}\twovdots\cl{C}_{\!\br{J}}[\bs{\Omega}]
=\cl{C}_{\!\br{J}}^\ast[\br{B}]\twovdots\bs{\Omega}.
\end{equation}
Consequently, if \(\bs{\Gamma}_{\!\alpha}=\br{W}+\alpha\br{D}\), where \(\br{D}\in\mathrm{Sym}\) and \(\br{W}\in\mathfrak{so}(3)\), then
\begin{align}\label{eq:generator.adjoint}
\br{B}\twovdots(\bs{\Gamma}_{\!\alpha}\br{J} + \br{J}\bs{\Gamma}_{\!\alpha}^{\trans}) &= \alpha \, \cl{L}_{\!\br{J}}[\br{B}]\twovdots\br{D} + \cl{C}_{\!\br{J}}^\ast[\br{B}]\twovdots\br{W} \nonumber\\[4pt]
&= \alpha \, (\br{B}\br{J}+\br{J}\br{B})\twovdots\br{D} + (\br{B}\br{J}-\br{J}\br{B})\twovdots\br{W}.
\end{align}
\end{lem}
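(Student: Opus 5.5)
The plan is to verify each identity directly from the Frobenius inner product, \(\br{A}\twovdots\br{B}=\tr(\br{A}^{\!\trans}\br{B})\), using only cyclicity of the trace and the symmetry or skew-symmetry of the factors involved. No earlier result is needed beyond the definitions of \(\mathrm{Sym}\) and \(\mathfrak{so}(3)\).

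\emph{Lyapunov operator.} First check that \(\cl{L}_{\!\br{J}}\) maps \(\mathrm{Sym}\) into \(\mathrm{Sym}\). This holds because \((\br{A}\br{J}+\br{J}\br{A})^{\!\trans}=\br{J}\br{A}+\br{A}\br{J}\) when \(\br{A},\br{J}\in\mathrm{Sym}\). For \eqref{eq:lyapunov.self.adjoint}, write
\begin{equation*}
\br{B}\twovdots(\br{A}\br{J}+\br{J}\br{A})=\tr(\br{B}\br{A}\br{J})+\tr(\br{B}\br{J}\br{A}).
\end{equation*}
By cyclicity this equals \(\tr(\br{J}\br{B}\br{A})+\tr(\br{B}\br{J}\br{A})=\tr((\br{J}\br{B}+\br{B}\br{J})\br{A})\). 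Since \(\br{J}\br{B}+\br{B}\br{J}\) is symmetric, the last trace is \(\cl{L}_{\!\br{J}}[\br{B}]\twovdots\br{A}\).

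\emph{Commutator operator.} Show that \(\bs{\Omega}\br{J}-\br{J}\bs{\Omega}\) is symmetric: its transpose is \(\br{J}\bs{\Omega}^{\!\trans}-\bs{\Omega}^{\!\trans}\br{J}=-\br{J}\bs{\Omega}+\bs{\Omega}\br{J}\). Likewise \(\br{B}\br{J}-\br{J}\br{B}\) is skew for \(\br{B},\br{J}\in\mathrm{Sym}\), so both operators map into the stated spaces. Then compute
\begin{equation*}
\br{B}\twovdots(\bs{\Omega}\br{J}-\br{J}\bs{\Omega})=\tr(\br{B}\bs{\Omega}\br{J})-\tr(\br{B}\br{J}\bs{\Omega})=\tr((\br{J}\br{B}-\br{B}\br{J})\bs{\Omega}).
\end{equation*}
Since \(\br{J}\br{B}-\br{B}\br{J}=(\br{B}\br{J}-\br{J}\br{B})^{\!\trans}\), this trace is exactly \((\br{B}\br{J}-\br{J}\br{B})\twovdots\bs{\Omega}\), which is \eqref{eq:commutator.adjoint}.

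\emph{Generator identity.} Use \(\bs{\Gamma}_{\!\alpha}^{\trans}=-\br{W}+\alpha\br{D}\) to split
\begin{equation*}
\bs{\Gamma}_{\!\alpha}\br{J}+\br{J}\bs{\Gamma}_{\!\alpha}^{\trans}=(\br{W}\br{J}-\br{J}\br{W})+\alpha(\br{D}\br{J}+\br{J}\br{D})=\cl{C}_{\!\br{J}}[\br{W}]+\alpha\,\cl{L}_{\!\br{J}}[\br{D}].
\end{equation*}
Pair this with \(\br{B}\) and apply the two adjoint identities, taking \(\br{A}=\br{D}\in\mathrm{Sym}\) and \(\bs{\Omega}=\br{W}\). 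This gives \eqref{eq:generator.adjoint}.

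The main point to be careful about is sign bookkeeping in the commutator adjoint, namely the transpose step that turns \(\br{J}\br{B}-\br{B}\br{J}\) into \(\br{B}\br{J}-\br{J}\br{B}\) inside the Frobenius product. It is the only nonroutine step; everything else is trace cyclicity.
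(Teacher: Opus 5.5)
Your proposal is correct and follows the paper's proof essentially step for step. You use trace cyclicity for the self-adjointness of $\cl{L}_{\!\br{J}}$, the transpose step $\tr((\br{J}\br{B}-\br{B}\br{J})\bs{\Omega})=(\br{B}\br{J}-\br{J}\br{B})\twovdots\bs{\Omega}$ for the commutator adjoint, and the split $\bs{\Gamma}_{\!\alpha}\br{J}+\br{J}\bs{\Gamma}_{\!\alpha}^{\trans}=\alpha(\br{D}\br{J}+\br{J}\br{D})+\br{W}\br{J}-\br{J}\br{W}$ for the generator identity; your added check that the operators land in $\mathrm{Sym}$ and $\mathfrak{so}(3)$ is a minor detail the paper leaves implicit.
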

\begin{proof}
For every \(\br{A},\br{B},\br{J}\in\mathrm{Sym}\), cyclicity of the trace gives
\begin{align}
\br{B}\twovdots\cl{L}_{\!\br{J}}[\br{A}]
&=\tr(\br{B}\br{A}\br{J})+\tr(\br{B}\br{J}\br{A}) \nonumber\\[4pt]
&=\tr(\br{J}\br{B}\br{A})+\tr(\br{B}\br{J}\br{A}) \nonumber\\[4pt]
&=(\br{B}\br{J}+\br{J}\br{B})\twovdots\br{A}
=\cl{L}_{\!\br{J}}[\br{B}]\twovdots\br{A},
\end{align}
which proves \eqref{eq:lyapunov.self.adjoint}. Similarly, using \(\br{B}^{\!\trans}=\br{B}\), \(\br{J}^{\!\trans}=\br{J}\), and \(\bs{\Omega}^{\!\trans}=-\bs{\Omega}\),
\begin{align}
\br{B}\twovdots(\bs{\Omega}\br{J}-\br{J}\bs{\Omega})
&=\tr(\br{B}\bs{\Omega}\br{J})-\tr(\br{B}\br{J}\bs{\Omega}) \nonumber\\[4pt]
&=\tr((\br{J}\br{B}-\br{B}\br{J})\bs{\Omega})
=(\br{B}\br{J}-\br{J}\br{B})\twovdots\bs{\Omega},
\end{align}
which proves \eqref{eq:commutator.adjoint}. Finally,
\begin{equation}
\bs{\Gamma}_{\!\alpha}\br{J}+\br{J}\bs{\Gamma}_{\!\alpha}^{\trans}
=\alpha(\br{D}\br{J}+\br{J}\br{D})+\br{W}\br{J}-\br{J}\br{W},
\end{equation}
so \eqref{eq:generator.adjoint} follows from \eqref{eq:lyapunov.self.adjoint} and \eqref{eq:commutator.adjoint}.
\end{proof}

\begin{lem}[Gradient-generator adjoint identity]\label{lem:gradient.generator.adjoint}
Let \(\br{J}\in\mathrm{Sym}\), let \(\bb{Q}\) be a third-order tensor symmetric with respect to its first two indices, and define
\begin{equation}\label{eq:H.generic}
\bb{H}[\bb{Q},\br{J}] \coloneqq ((\bb{Q}^{\trans}\br{J})^{\!\trans})+(\br{J}\bb{Q})^{\sperp}.
\end{equation}
Then, for every sufficiently smooth second-order tensor field \(\bs{\Gamma}\),
\begin{equation}\label{eq:gradient.generator.adjoint.general}
\bb{Q}\threevdots\big(((\Grad\bs{\Gamma})^{\!\trans}\br{J})^{\!\trans}+\br{J}\Grad(\bs{\Gamma}^{\trans})\big)
=\bb{H}[\bb{Q},\br{J}]\threevdots\Grad\bs{\Gamma}.
\end{equation}
Consequently, for \(\alpha \in \{0,1\}\), if
\begin{equation}
\bs{\Gamma}_{\!\alpha}=\fr{1+\alpha}{2}\br{L}+\fr{\alpha-1}{2}\br{L}^{\!\trans}
\end{equation}
and
\begin{equation}\label{eq:H.alpha.generic}
\bb{H}_{\alpha}[\bb{Q},\br{J}] \coloneqq \fr{1+\alpha}{2}\bb{H}[\bb{Q},\br{J}]+\fr{\alpha-1}{2}\big(\bb{H}[\bb{Q},\br{J}]\big)^{\sperp},
\end{equation}
then
\begin{equation}\label{eq:gradient.generator.adjoint.alpha}
\bb{Q}\threevdots\big(((\Grad\bs{\Gamma}_{\!\alpha})^{\!\trans}\br{J})^{\!\trans}+\br{J}\Grad(\bs{\Gamma}_{\!\alpha}^{\trans})\big)
=\bb{H}_{\alpha}[\bb{Q},\br{J}]\threevdots\Grad\br{L}.
\end{equation}
\end{lem}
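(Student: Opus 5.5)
The plan is a direct index computation, carried out in two stages: first the general identity \eqref{eq:gradient.generator.adjoint.general}, then its specialization \eqref{eq:gradient.generator.adjoint.alpha}. Throughout I use the component conventions already fixed in the product rules preceding \eqref{eq:gradient.frame.indifferent.rate.explicit}:
\begin{equation*}
\big[((\Grad\bs{\Gamma})^{\!\trans}\br{J})^{\!\trans}\big]_{ijk}=\partial_k(\bs{\Gamma})_{i\ell}(\br{J})_{\ell j}
\texand
\big[\br{J}\Grad(\bs{\Gamma}^{\trans})\big]_{ijk}=(\br{J})_{i\ell}\,\partial_k(\bs{\Gamma})_{j\ell}.
\end{equation*}
I also use \((\Grad\bs{\Gamma})_{i\ell k}=\partial_k(\bs{\Gamma})_{i\ell}\).

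For the general identity, I contract both terms with \(\bb{Q}\). The first term gives
\begin{equation*}
(\bb{Q})_{ijk}(\br{J})_{\ell j}\,\partial_k(\bs{\Gamma})_{i\ell}.
\end{equation*}
In the second term I use the symmetry \((\bb{Q})_{ijk}=(\bb{Q})_{jik}\) and relabel \(i\leftrightarrow j\). Together with \(\br{J}=\br{J}^{\!\trans}\), this yields
\begin{equation*}
(\bb{Q})_{ijk}(\br{J})_{j\ell}\,\partial_k(\bs{\Gamma})_{i\ell}.
\end{equation*}
Hence the left-hand side of \eqref{eq:gradient.generator.adjoint.general} equals \(\bb{M}\threevdots\Grad\bs{\Gamma}\), where \((\bb{M})_{i\ell k}\) is the sum of the two coefficients of \(\partial_k(\bs{\Gamma})_{i\ell}\) just displayed; by the symmetry of \(\br{J}\) this is \(2(\bb{Q})_{ijk}(\br{J})_{j\ell}\).

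I then expand \(\bb{H}[\bb{Q},\br{J}]\) from \eqref{eq:H.generic} in components, using \((\bb{A}^{\!\trans})_{ijk}=(\bb{A})_{ikj}\) and \((\bb{A}^{\!\sperp})_{ijk}=(\bb{A})_{jik}\), and I read the products \(\bb{Q}^{\trans}\br{J}\) and \(\br{J}\bb{Q}\) with the same contraction rules as in the displayed product rules. The goal is to match the two summands of \(\bb{H}\) with the two coefficients above, each equal to \((\bb{Q})_{ijk}(\br{J})_{j\ell}\) after the symmetries of \(\bb{Q}\) and \(\br{J}\) are used. Each summand should correspond to one term of the left-hand side before the relabeling, so I will match them term by term. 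This convention bookkeeping is the main obstacle. The placement of the transpositions in \eqref{eq:H.generic} must agree exactly with the index order of \(\Grad\bs{\Gamma}\), namely first index the row of \(\bs{\Gamma}\), second the column, third the differentiation direction. I will check it by writing every intermediate third-order tensor explicitly in components and never manipulating compact symbols.

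For the specialization, \(\bs{\Gamma}_{\!\alpha}=\br{W}+\alpha\br{D}=\fr{1+\alpha}{2}\br{L}+\fr{\alpha-1}{2}\br{L}^{\!\trans}\) by definition of \(\br{D}\) and \(\br{W}\). Differentiating,
\begin{equation*}
\Grad\bs{\Gamma}_{\!\alpha}=\fr{1+\alpha}{2}\Grad\br{L}+\fr{\alpha-1}{2}(\Grad\br{L})^{\sperp},
\end{equation*}
since \(\partial_k(\br{L}^{\!\trans})_{i\ell}=(\Grad\br{L})_{\ell ik}\). Applying the general identity with \(\bs{\Gamma}=\bs{\Gamma}_{\!\alpha}\) and using linearity, it remains to move the \(\sperp\) off \(\Grad\br{L}\). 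Since \(\sperp\) is an index permutation, it is an involutive isometry for the full contraction, so \(\bb{H}\threevdots(\Grad\br{L})^{\sperp}=\bb{H}^{\sperp}\threevdots\Grad\br{L}\). Collecting the two terms gives exactly \(\bb{H}_\alpha\) as defined in \eqref{eq:H.alpha.generic}, which proves \eqref{eq:gradient.generator.adjoint.alpha}.
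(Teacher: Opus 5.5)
Your proposal is correct and is essentially the paper's proof. It uses the same component formulas for the two product-rule terms and the same relabeling of dummy indices with \(\br{J}=\br{J}^{\!\trans}\). For the specialization it likewise uses \(\Grad\bs{\Gamma}_{\!\alpha}=\fr{1+\alpha}{2}\Grad\br{L}+\fr{\alpha-1}{2}(\Grad\br{L})^{\sperp}\) together with \(\bb{H}\threevdots\bb{B}^{\sperp}=\bb{H}^{\sperp}\threevdots\bb{B}\).

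The bookkeeping you flag closes as you expect. With adjacent-index contraction, \(((\bb{Q}^{\trans}\br{J})^{\!\trans})_{i\ell k}=(\bb{Q})_{ijk}(\br{J})_{j\ell}\) and \(((\br{J}\bb{Q})^{\sperp})_{i\ell k}=(\br{J})_{\ell j}(\bb{Q})_{jik}\). These equal the coefficients \((\bb{Q})_{ijk}(\br{J})_{\ell j}\) and \((\bb{Q})_{jik}(\br{J})_{j\ell}\) of \(\partial_k(\bs{\Gamma})_{i\ell}\) in the two terms using only \(\br{J}=\br{J}^{\!\trans}\), so the symmetry of \(\bb{Q}\) is not actually needed.
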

\begin{proof}
By the definitions of the third-order transpose and tensor products,
\begin{equation}
\big[((\Grad\bs{\Gamma})^{\!\trans}\br{J})^{\!\trans}\big]_{ijk}=\partial_k(\bs{\Gamma})_{i\ell}(\br{J})_{\ell j}
\texand
\big[\br{J}\Grad(\bs{\Gamma}^{\trans})\big]_{ijk}=(\br{J})_{i\ell}\partial_k(\bs{\Gamma})_{j\ell}.
\end{equation}
Therefore, using \((\br{J})_{ij}=(\br{J})_{ji}\) and relabeling dummy indices,
\begin{align}
(\bb{Q})_{ijk}\big(\partial_k(\bs{\Gamma})_{i\ell}(\br{J})_{\ell j}+(\br{J})_{i\ell}\partial_k(\bs{\Gamma})_{j\ell}\big) ={}& \big((\bb{Q})_{i\ell k}(\br{J})_{j\ell}+(\br{J})_{j\ell}(\bb{Q})_{\ell i k}\big)\partial_k(\bs{\Gamma})_{ij} \nonumber\\[4pt]
={}&\big(\bb{H}[\bb{Q},\br{J}]\big)_{ijk}\partial_k(\bs{\Gamma})_{ij},
\end{align}
which proves \eqref{eq:gradient.generator.adjoint.general}. Moreover,
\begin{equation}
\Grad\bs{\Gamma}_{\!\alpha}=\fr{1+\alpha}{2}\Grad\br{L}+\fr{\alpha-1}{2}(\Grad\br{L})^{\sperp}.
\end{equation}
Since \(\bb{A}\threevdots\bb{B}^{\sperp}=\bb{A}^{\sperp}\threevdots\bb{B}\) for every pair of third-order tensors \(\bb{A}\) and \(\bb{B}\), \eqref{eq:gradient.generator.adjoint.general} yields
\begin{align}
\bb{H}[\bb{Q},\br{J}]\threevdots\Grad\bs{\Gamma}_{\!\alpha}
={}&\big(\fr{1+\alpha}{2}\bb{H}[\bb{Q},\br{J}]+\fr{\alpha-1}{2}\big(\bb{H}[\bb{Q},\br{J}]\big)^{\sperp}\big)\threevdots\Grad\br{L} \nonumber\\[4pt]
={}&\bb{H}_{\alpha}[\bb{Q},\br{J}]\threevdots\Grad\br{L},
\end{align}
which proves \eqref{eq:gradient.generator.adjoint.alpha}.
\end{proof}

\begin{thm}[Canonical decomposition of the pointwise free-energy imbalance]\label{thm:canonical.free.energy.imbalance}
Consider Assumptions \ref{as:inherited.symmetries} and \ref{as:response.functions}. Additionally, assume that, for every spatial part \(\prt_t\) evolving with the material with respect to an inertial frame in an incompressible and isothermal fluid, the partwise free-energy imbalance
\begin{equation}
\dot{\overline{\intPt \psi \dv_t}} \le \cl{W}_{\mathrm{int}},
\end{equation}
holds with the internal power expenditure \(\cl{W}_{\mathrm{int}}\) given by \eqref{eq:W.int}. Define the second-order energetic stress by
\begin{align}\label{eq:H2.energetic}
\br{H}_{\alpha} \coloneqq{}& \alpha (\br{J}\partial_{\br{J}}\psi + \partial_{\br{J}}\psi\br{J}) \nonumber\\[4pt]
&+ \sym\big(\alpha (\partial_{\Grad\br{J}}\psi \twovcR \Grad \br{J} + \partial_{\Grad\br{J}}\psi \twovcM \Grad \br{J}) - \Grad\br{J} \twovc \partial_{\Grad\br{J}}\psi\big),
\end{align}
and define the third-order energetic hyperstress by
\begin{equation}\label{eq:H.energetic}
\bb{H}_{\alpha} \coloneqq \fr{1+\alpha}{2}\bb{H}+\fr{\alpha-1}{2}\bb{H}^{\sperp} \with \bb{H} \coloneqq ((\partial_{\Grad\br{J}}\psi)^{\trans}\br{J})^{\!\trans}+(\br{J}\partial_{\Grad\br{J}}\psi)^{\sperp}.
\end{equation}
Then, the following pointwise free-energy imbalance holds for every \(\alpha \in \{0,1\}\)
\begin{align}\label{eq:pointwise.dissipation.final.reduced.equivalent}
0 \ge{}& - (\br{S} - \bb{P}_0[\br{H}_{\alpha}]) \twovdots \br{D} - (\bb{S} - \bb{P}_{\!\!\scriptscriptstyle\cl{A}}[\sym_{23}\bb{H}_{\alpha}]) \threevdots \Grad \br{L} \nonumber\\[4pt]
&+ (\partial_{\br{J}}\psi + \bs{\Xi}) \twovdots \mathfrak{D}_{\!\alpha}\br{J} + (\partial_{\Grad\br{J}}\psi-\bb{X}) \threevdots \mathfrak{D}^{\!\nabla}_{\!\alpha}(\Grad \br{J}).
\end{align}
\end{thm}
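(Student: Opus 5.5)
We first localize the partwise imbalance. Since \(\prt_t\) is a material part in an incompressible motion, Reynolds' transport theorem gives \(\dot{\overline{\int_{\prt_t}\psi\,\mathrm{d}v_t}}=\int_{\prt_t}\dot\psi\,\mathrm{d}v_t\). Because \(\prt_t\) is arbitrary, the pointwise inequality
\begin{equation*}
\dot\psi - \br{T}\twovdots\br{D} - \bb{T}\threevdots\Grad\br{L} + \bs{\Xi}\twovdots\mathfrak{D}_{\!\alpha}\br{J} - \bb{X}\threevdots\mathfrak{D}^{\!\nabla}_{\!\alpha}(\Grad\br{J}) \le 0
\end{equation*}
follows. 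The stress decomposition \eqref{eq:stress.decomposition} lets us discard the pressure terms. Indeed, \(p\bs{1}\twovdots\br{D}=p\,\Div\vel=0\), and \(\sym_{23}(\bs{1}\otimes\bs{p})\threevdots\Grad\br{L}=0\). Only \(\br{S}\twovdots\br{D}\) and \(\bb{S}\threevdots\Grad\br{L}\) remain.

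Next we expand \(\dot\psi=\partial_{\br{J}}\psi\twovdots\dot{\br{J}}+\partial_{\Grad\br{J}}\psi\threevdots(\Grad\br{J})\dot{}\) and substitute \(\dot{\br{J}}\) and \((\Grad\br{J})\dot{}\) from \eqref{eq:frame.indifferent.rate} and \eqref{eq:gradient.frame.indifferent.rate.explicit}. This produces \(\partial_{\br{J}}\psi\twovdots\mathfrak{D}_{\!\alpha}\br{J}+\partial_{\Grad\br{J}}\psi\threevdots\mathfrak{D}^{\!\nabla}_{\!\alpha}(\Grad\br{J})\) plus kinematic remainder terms. The remainder has three parts:
\begin{itemize}
\item[(i)] \(\partial_{\br{J}}\psi\twovdots(\bs{\Gamma}_{\!\alpha}\br{J}+\br{J}\bs{\Gamma}_{\!\alpha}^{\trans})\);
\item[(ii)] \(\partial_{\Grad\br{J}}\psi\threevdots(\bs{\Gamma}_{\!\alpha}\Grad\br{J}+((\Grad\br{J})^{\trans}\bs{\Gamma}_{\!\alpha}^{\trans})^{\trans}-(\Grad\br{J})\br{L})\);
\item[(iii)] \(\partial_{\Grad\br{J}}\psi\threevdots(((\Grad\bs{\Gamma}_{\!\alpha})^{\trans}\br{J})^{\trans}+\br{J}\Grad\bs{\Gamma}_{\!\alpha}^{\trans})\).
\end{itemize}
Term (iii) is handled directly by Lemma \ref{lem:gradient.generator.adjoint} with \(\bb{Q}=\partial_{\Grad\br{J}}\psi\). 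It equals \(\bb{H}_{\alpha}\threevdots\Grad\br{L}\). Since \(\Grad\br{L}\in\cl{A}\) is symmetric in its last two indices, we may replace \(\bb{H}_\alpha\) by \(\sym_{23}\bb{H}_\alpha\) and then by \(\bb{P}_{\!\!\scriptscriptstyle\cl{A}}[\sym_{23}\bb{H}_\alpha]\) via \eqref{eq:P.A.power}. Term (i) is handled by Lemma \ref{lem:lyapunov.commutator.identities}. It equals \(\alpha(\br{J}\partial_{\br{J}}\psi+\partial_{\br{J}}\psi\br{J})\twovdots\br{D}+(\partial_{\br{J}}\psi\br{J}-\br{J}\partial_{\br{J}}\psi)\twovdots\br{W}\).

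The main obstacle is term (ii), together with collecting all \(\br{W}\)-contributions. We split \(\br{L}=\br{D}+\br{W}\) and \(\bs{\Gamma}_{\!\alpha}=\br{W}+\alpha\br{D}\) inside (ii). The \(\br{W}\)-parts of (ii) are exactly the infinitesimal rotational action \(\bs{\Omega}\Grad\br{J}+((\Grad\br{J})^{\trans}\bs{\Omega}^{\trans})^{\trans}-(\Grad\br{J})\bs{\Omega}\) with \(\bs{\Omega}=\br{W}\). Combined with the \(\br{W}\)-part of (i), they form precisely the left side of \eqref{eq:infinitesimal.isotropy.full}, which vanishes by Lemma \ref{lem:infinitesimal.isotropy}. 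This is where isotropy is essential. Some care is needed with signs: the term \((\Grad\br{J})\br{L}\) enters with the opposite sign from the \(\bs{\Gamma}_\alpha\) terms, and the \(\br{W}\)-parts must line up with the sign conventions of \eqref{eq:infinitesimal.isotropy.full}.

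The remaining \(\br{D}\)-parts of (ii) are rewritten by index relabeling. In components they read \(\alpha(\partial_{\Grad\br{J}}\psi)_{ijk}D_{i\ell}(\Grad\br{J})_{\ell jk}\), \(\alpha(\partial_{\Grad\br{J}}\psi)_{ijk}D_{j\ell}(\Grad\br{J})_{i\ell k}\), and \(-(\partial_{\Grad\br{J}}\psi)_{ijk}(\Grad\br{J})_{ij\ell}D_{\ell k}\). These give \(\big(\alpha(\partial_{\Grad\br{J}}\psi\twovcR\Grad\br{J}+\partial_{\Grad\br{J}}\psi\twovcM\Grad\br{J})-\Grad\br{J}\twovc\partial_{\Grad\br{J}}\psi\big)\twovdots\br{D}\). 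Since \(\br{D}\) is symmetric, the bracket may be replaced by its symmetric part. Adding the \(\alpha\)-part of (i) yields \(\br{H}_\alpha\twovdots\br{D}\), which becomes \(\bb{P}_0[\br{H}_\alpha]\twovdots\br{D}\) by \eqref{eq:P0.power} and \(\tr\br{D}=0\). Collecting all terms, with the sign of the remainders fixed by \(\dot{\br{J}}=\mathfrak{D}_\alpha\br{J}+\dots\), gives \eqref{eq:pointwise.dissipation.final.reduced.equivalent}.
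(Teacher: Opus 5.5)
Your proposal is correct and follows essentially the same route as the paper: expand \(\dot{\psi}\) through the frame-indifferent rates, treat the \(\bs{\Gamma}_{\!\alpha}\br{J}\)-terms with Lemma~\ref{lem:lyapunov.commutator.identities} and the \(\Grad\bs{\Gamma}_{\!\alpha}\)-terms with Lemma~\ref{lem:gradient.generator.adjoint}, split the remaining gradient-transport terms into \(\br{D}\)- and \(\br{W}\)-parts, eliminate the \(\br{W}\)-parts by Lemma~\ref{lem:infinitesimal.isotropy}, and pass to the projections \(\bb{P}_0\) and \(\bb{P}_{\!\!\scriptscriptstyle\cl{A}}\). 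The differences are cosmetic. You cancel the spin terms by applying \eqref{eq:infinitesimal.isotropy.full} directly with \(\bs{\Omega}=\br{W}\), whereas the paper first collects them as a skew coefficient contracted with \(\br{W}\) and then invokes \eqref{eq:infinitesimal.isotropy.equivalent}; you also make explicit the removal of the pressure and hyperpressure terms, which the paper leaves implicit.
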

\begin{proof}
By the Reynolds transport theorem and incompressibility,
\begin{equation}
\dfrac{\mathrm{d}}{\mathrm{d}t}\intPt\psi\dv_t=\intPt\dot{\psi}\dv_t.
\end{equation}
The chain rule and the definitions of the frame-indifferent rates give
\begin{align}\label{eq:free.energy.rate.indifferent}
\dot{\psi} ={}& \partial_{\br{J}}\psi \twovdots \mathfrak{D}_{\!\alpha} \br{J} + \partial_{\Grad \br{J}}\psi \threevdots \mathfrak{D}^{\!\nabla}_{\!\alpha}(\Grad \br{J}) \nonumber\\[4pt]
&+ \partial_{\br{J}}\psi \twovdots (\bs{\Gamma}_{\!\alpha}\br{J} + \br{J}\bs{\Gamma}_{\!\alpha}^{\trans}) - \partial_{\Grad \br{J}}\psi \threevdots ((\Grad\br{J})\br{L}) \nonumber\\[4pt]
&+ \partial_{\Grad \br{J}}\psi \threevdots (\Grad(\bs{\Gamma}_{\!\alpha}\br{J}) + \Grad(\br{J}\bs{\Gamma}_{\!\alpha}^{\trans})),
\end{align}
and since \(\br{J},\partial_{\br{J}}\psi\in\mathrm{Sym}\), Lemma \ref{lem:lyapunov.commutator.identities}, with \(\br{B}=\partial_{\br{J}}\psi\), gives
\begin{equation}\label{eq:second.order.transport.split}
\partial_{\br{J}}\psi \twovdots (\bs{\Gamma}_{\!\alpha}\br{J}+\br{J}\bs{\Gamma}_{\!\alpha}^{\trans}) = \alpha(\br{J}\partial_{\br{J}}\psi+\partial_{\br{J}}\psi\br{J})\twovdots\br{D} +(\partial_{\br{J}}\psi\br{J}-\br{J}\partial_{\br{J}}\psi)\twovdots\br{W}.
\end{equation}
The terms that do not involve \(\Grad\bs{\Gamma}_{\!\alpha}\) satisfy
\begin{multline}\label{eq:gradient.transport.split}
\partial_{\Grad\br{J}}\psi\threevdots\big(\bs{\Gamma}_{\!\alpha}\Grad\br{J}+((\Grad\br{J})^{\!\trans}\bs{\Gamma}_{\!\alpha}^{\trans})^{\!\trans}-(\Grad\br{J})\br{L}\big) = \\[4pt]
\sym\big(\alpha(\partial_{\Grad\br{J}}\psi\twovcR\Grad\br{J} + \partial_{\Grad\br{J}}\psi\twovcM\Grad\br{J})-\Grad\br{J}\twovc\partial_{\Grad\br{J}}\psi\big)\twovdots\br{D} \\[4pt]
+ \skw\big(\partial_{\Grad\br{J}}\psi\twovcR\Grad\br{J} + \partial_{\Grad\br{J}}\psi\twovcM\Grad\br{J}-\Grad\br{J}\twovc\partial_{\Grad\br{J}}\psi\big)\twovdots\br{W}.
\end{multline}
Applying Lemma \ref{lem:gradient.generator.adjoint} with \(\bb{Q}=\partial_{\Grad\br{J}}\psi\), the terms involving the gradient of the generator obey
\begin{equation}\label{eq:gradient.generator.split}
\partial_{\Grad\br{J}}\psi\threevdots\big(((\Grad\bs{\Gamma}_{\!\alpha})^{\!\trans}\br{J})^{\!\trans} + \br{J}\Grad(\bs{\Gamma}_{\!\alpha}^{\trans})\big) = \bb{H}_{\alpha}\threevdots\Grad\br{L}.
\end{equation}
Substituting \eqref{eq:second.order.transport.split}, \eqref{eq:gradient.transport.split}, and \eqref{eq:gradient.generator.split} into the partwise free-energy imbalance, using \eqref{eq:W.int}, and localizing yield
\begin{align}
0 \ge{}& - (\br{S} - \br{H}_{\alpha}) \twovdots \br{D} \nonumber\\[4pt]
&+ (\partial_{\br{J}}\psi\br{J} - \br{J}\partial_{\br{J}}\psi + \skw (\partial_{\Grad\br{J}}\psi \twovcR \Grad \br{J} + \partial_{\Grad\br{J}}\psi \twovcM \Grad \br{J} - \Grad\br{J} \twovc \partial_{\Grad\br{J}}\psi)) \twovdots \br{W} \nonumber\\[4pt]
&- (\bb{S} - \bb{H}_{\alpha}) \threevdots \Grad \br{L} \nonumber\\[4pt]
&+ (\partial_{\br{J}}\psi + \bs{\Xi}) \twovdots \mathfrak{D}_{\!\alpha}\br{J} + (\partial_{\Grad\br{J}}\psi-\bb{X}) \threevdots \mathfrak{D}^{\!\nabla}_{\!\alpha}(\Grad \br{J}).
\end{align}
The second term vanishes identically by \eqref{eq:infinitesimal.isotropy.equivalent}. Moreover, in view of \eqref{eq:P0.power} and \eqref{eq:P.A.power}, since \(\br{D}\in\mathrm{Sym}_0\) and \(\Grad\br{L}\in\cl{A}\), only the projections \(\bb{P}_0[\br{H}_{\alpha}]\) and \(\bb{P}_{\!\!\scriptscriptstyle\cl{A}}[\sym_{23}\bb{H}_{\alpha}]\) expend power. This proves \eqref{eq:pointwise.dissipation.final.reduced.equivalent}.
\end{proof}

Restricting attention to uncoupled dissipative mechanisms, a sufficient set of constitutive restrictions ensuring that the inequality \eqref{eq:pointwise.dissipation.final.reduced.equivalent} holds for every admissible process is given by
\begin{equation}\label{eq:constraint}
\left\{
\begin{aligned}
&(\br{S} - \bb{P}_0[\br{H}_{\alpha}]) \twovdots \br{D} \ge 0, \\[4pt] & (\bb{S} - \bb{P}_{\!\!\scriptscriptstyle\cl{A}}[\sym_{23}\bb{H}_{\alpha}]) \threevdots \Grad \br{L} \ge 0, \\[4pt] & - (\partial_{\br{J}}\psi + \bs{\Xi}) \twovdots \mathfrak{D}_{\!\alpha}\br{J} \ge 0, \\[4pt] & - (\partial_{\Grad\br{J}}\psi-\bb{X}) \threevdots \mathfrak{D}^{\!\nabla}_{\!\alpha}(\Grad \br{J}) \ge 0.
\end{aligned}
\right.
\end{equation}
Therefore, from the thermodynamic constraint \eqref{eq:constraint}\(_1\), we stipulate that
\begin{equation}
\br{S} = 2 \mu \br{D} + \bb{P}_0[\br{H}_{\alpha}],
\end{equation}
where \(\mu \ge 0\) is the shear viscosity. Moreover, with this choice, the dissipation \eqref{eq:constraint}\(_1\) becomes
\begin{equation}
2 \mu |\br{D}|^2 \ge 0.
\end{equation}
Next, from the thermodynamic constraint \eqref{eq:constraint}\(_2\), we stipulate that
\begin{equation}
\bb{S} = \bb{S}^\mathrm{dis} + \bb{P}_{\!\!\scriptscriptstyle\cl{A}}[\sym_{23}\bb{H}_{\alpha}],
\end{equation}
where \(\bb{S}^\mathrm{dis}\) is a dissipative contribution satisfying
\begin{equation}
\bb{S}^\mathrm{dis} \threevdots \Grad \br{L} \ge 0.
\end{equation}
This requires a positive semidefinite linear operator acting on \(\Grad \br{L}\). To this end, let \(\bb{S}^{\mathrm{dis}}\) be defined through a linear isotropic operator acting on \(\Grad \br{L}\), that is,
\begin{equation}
\bb{S}^{\mathrm{dis}} = \pmb{\bb{C}} (\Grad \br{L}),
\end{equation}
where \(\pmb{\bb{C}}\) is a sixth-order isotropic tensor with the symmetries induced by the inner product, including the symmetry on the last two indices of \(\Grad \br{L}\). Its general representation involves five independent coefficients (see Espath \cite[Chapter 5]{Esp23}), that is, for every \(\bb{A}\) with the same last-two-index symmetry as \(\Grad \br{L}\), namely \((\bb{A})_{ijk}=(\bb{A})_{ikj}\),
\begin{align}\label{eq:6.tensor.isotropic.action}
(\pmb{\bb{C}})_{ijk\ell mn} (\bb{A})_{\ell mn} = {}&\ellcg{6}{1}(\delta_{ij} (\bb{A})_{kmm} + \delta_{ik} (\bb{A})_{jmm} + 2\delta_{jk} (\bb{A})_{mmi})\nonumber\\[4pt]
&+ 2\ellcg{6}{2}(\delta_{ij} (\bb{A})_{mmk}+\delta_{ik} (\bb{A})_{mmj})\nonumber\\[4pt]
&+ 2\ellcg{6}{3}((\bb{A})_{kij} + (\bb{A})_{jik})\nonumber\\[4pt]
&+ 2\ellcg{6}{4}(\bb{A})_{ijk} + \ellcg{6}{5}\delta_{jk} (\bb{A})_{imm}.
\end{align}
For the continuum type of theory involving incompressible materials, consider that \(\bb{A}\) satisfies \(\bs{1} \twovdots \bb{A} = (\bb{A})_{jji} =\bs{0}\), then, expressions \eqref{eq:6.tensor.isotropic.action} specialize to
\begin{align}\label{eq:6.tensor.isotropic.action.traceless.A}
(\pmb{\bb{C}})_{ijk\ell mn} (\bb{A})_{\ell mn} = {}&\ellcg{6}{1}(\delta_{ij} (\bb{A})_{kmm} + \delta_{ik} (\bb{A})_{jmm})\nonumber\\[4pt]
&+2\ellcg{6}{3}((\bb{A})_{kij}+(\bb{A})_{jik})\nonumber\\[4pt]
&+2\ellcg{6}{4}(\bb{A})_{ijk}+\ellcg{6}{5}\delta_{jk} (\bb{A})_{imm}.
\end{align}
If, in addition, one requires that \(\bb{S}^{\mathrm{dis}} = \pmb{\bb{C}} \bb{A}\) is to satisfy \(\bs{1} \twovdots \bb{S}^{\mathrm{dis}} = \bs{0}\), then the coefficients must obey
\begin{equation}
4\ellcg{6}{1} + 2\ellcg{6}{3} + \ellcg{6}{5} = 0,
\end{equation}
and the action \eqref{eq:6.tensor.isotropic.action.traceless.A} is specialized to
\begin{align}\label{eq:6.tensor.isotropic.action.traceless}
(\pmb{\bb{C}})_{ijk\ell mn} (\bb{A})_{\ell mn} ={}& 2\ellcg{6}{3}((\bb{A})_{kij}+(\bb{A})_{jik} - \fr{1}{4}(\delta_{ij}(\bb{A})_{kmm}+\delta_{ik}(\bb{A})_{jmm})) \nonumber\\[4pt]
&+ 2\ellcg{6}{4}(\bb{A})_{ijk} + \ellcg{6}{5}(\delta_{jk}(\bb{A})_{imm} - \fr{1}{4}(\delta_{ij}(\bb{A})_{kmm}+\delta_{ik}(\bb{A})_{jmm})).
\end{align}
Additionally, the inner product of \(\bb{A}\) with \(\pmb{\bb{C}}\bb{A}\) is given by
\begin{align}\label{eq:6.tensor.isotropic.inner.product.traceless}
\cl{Q}_{\pmb{\bb{C}}}(\bb{A})\coloneqq{}& (\bb{A})_{ijk}(\pmb{\bb{C}})_{ijk\ell mn} (\bb{A})_{\ell mn} \nonumber\\[4pt]
={}& 2\ellcg{6}{3}((\bb{A})_{ijk}(\bb{A})_{kij}+(\bb{A})_{ijk}(\bb{A})_{jik}) \nonumber\\[4pt]
&+ 2\ellcg{6}{4}(\bb{A})_{ijk}(\bb{A})_{ijk}+\ellcg{6}{5}(\bb{A})_{ikk}(\bb{A})_{imm}.
\end{align}
Also, from the thermodynamic constraint \eqref{eq:constraint}\(_3\), we stipulate that
\begin{equation}
\bs{\Xi} = \bs{\Xi}^\mathrm{dis} - \partial_{\br{J}}\psi,
\end{equation}
where \(\bs{\Xi}^\mathrm{dis}\) is a dissipative contribution satisfying
\begin{equation}
- \bs{\Xi}^\mathrm{dis} \twovdots \mathfrak{D}_{\!\alpha}\br{J} \ge 0.
\end{equation}
Thus, we set
\begin{equation}\label{eq:Xi.dis}
\bs{\Xi}^\mathrm{dis} = - \beta \, \mathfrak{D}_{\!\alpha}\br{J},
\end{equation}
where \(\beta \ge 0\) is an internal-variable drag coefficient, that is, the inverse of a mobility. Lastly, from the thermodynamic constraint \eqref{eq:constraint}\(_4\), we stipulate that
\begin{equation}
\bb{X} = \partial_{\Grad\br{J}}\psi.
\end{equation}

\begin{lem}[Harmonic decomposition of admissible third-order tensors]\label{lem:harmonic.decomp.A}
Let \(\cl{A}\) be the space defined in \eqref{eq:E.space}. Then, \(\cl{A}\) has dimension \(15\) and admits the orthogonal decomposition
\begin{equation}\label{eq:A.decomposition}
\cl{A} = \cl{H}^7 \oplus \cl{H}^5 \oplus \cl{H}^3,
\end{equation}
where \(\cl{H}^7\), \(\cl{H}^5\), and \(\cl{H}^3\) are \(\mathrm{SO}(3)\)-irreducible subspaces of dimensions \(7\), \(5\), and \(3\), respectively. Equivalently, every \(\bb{A} \in \cl{A}\) admits the unique orthogonal decomposition
\begin{equation}\label{eq:A.harmonic.decomposition}
\bb{A}=\bb{A}^{(7)}+\bb{A}^{(5)}+\bb{A}^{(3)},
\end{equation}
where the superscripts refer to the dimensions of the corresponding harmonic subspaces.

Define
\begin{equation}\label{eq:def.ai.lemma}
a_i \coloneqq (\bb{A})_{imm}.
\end{equation}
Then, the three components are given by
\begin{equation}\label{eq:A7.mode.lemma}
(\bb{A}^{(7)})_{ijk} \coloneqq \fr{1}{3} ((\bb{A})_{ijk}+(\bb{A})_{jik}+(\bb{A})_{kij}) - \fr{1}{15} (\delta_{ij}a_k+\delta_{ik}a_j+\delta_{jk}a_i),
\end{equation}
\begin{equation}\label{eq:A3.mode.lemma}
(\bb{A}^{(3)})_{ijk} \coloneqq \fr{2}{5} \delta_{jk}a_i - \fr{1}{10}(\delta_{ij}a_k+\delta_{ik}a_j),
\end{equation}
and
\begin{equation}\label{eq:A5.mode.lemma}
(\bb{A}^{(5)})_{ijk} \coloneqq (\bb{A})_{ijk}-(\bb{A}^{(7)})_{ijk} - (\bb{A}^{(3)})_{ijk}.
\end{equation}
Equivalently,
\begin{equation}\label{eq:A5.mode.explicit.lemma}
(\bb{A}^{(5)})_{ijk} = \fr{2}{3} (\bb{A})_{ijk} - \fr{1}{3} ((\bb{A})_{jik}+(\bb{A})_{kij}) +\fr{1}{6} (\delta_{ij}a_k+\delta_{ik}a_j) - \fr{1}{3} \, \delta_{jk}a_i.
\end{equation}

These tensors satisfy
\begin{align}
&(\bb{A}^{(7)})_{ijk}=(\bb{A}^{(7)})_{(ijk)},\qquad
&(\bb{A}^{(7)})_{iik}=(\bb{A}^{(7)})_{imm}=0,\qquad
&\label{eq:A7.props.lemma}\\[4pt]
&(\bb{A}^{(5)})_{ijk}=(\bb{A}^{(5)})_{ikj},\qquad
&(\bb{A}^{(5)})_{iik}=(\bb{A}^{(5)})_{imm}=0,\qquad
&(\bb{A}^{(5)})_{ijk}+(\bb{A}^{(5)})_{jik}+(\bb{A}^{(5)})_{kij}=0,\label{eq:A5.props.lemma}\\[4pt]
&(\bb{A}^{(3)})_{ijk}=(\bb{A}^{(3)})_{ikj},\qquad
&(\bb{A}^{(3)})_{iik}=0,\qquad
&(\bb{A}^{(3)})_{imm}=a_i.\label{eq:A3.props.lemma}
\end{align}
Moreover, the decomposition \eqref{eq:A.harmonic.decomposition} is orthogonal with respect to the Euclidean inner product, that is,
\begin{equation}\label{eq:orthogonality.modes.lemma}
\bb{A}^{(7)} \threevdots \bb{A}^{(5)} = \bb{A}^{(7)} \threevdots \bb{A}^{(3)} = \bb{A}^{(5)} \threevdots \bb{A}^{(3)} = 0.
\end{equation}
\end{lem}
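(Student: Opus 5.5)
The plan is to verify everything by direct index computation. The one structural input is that $\cl{A}$ is the traceless part of $\cl{V}\otimes\mathrm{Sym}$.

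\textbf{Dimension and irreducible pieces.} Tensors symmetric in their last two indices form $\cl{V}\otimes\mathrm{Sym}$, of dimension $3\cdot 6=18$. The three conditions $(\bb{A})_{iik}=0$ are independent: the map $\bb{A}\mapsto(\bb{A})_{iik}$ is onto $\cl{V}$, since $\sym_{23}(\bs{1}\otimes\bs{v})$ is sent to a nonzero multiple of $\bs{v}$. Hence $\dim\cl{A}=15$.

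As $\mathrm{SO}(3)$-modules, $\cl{V}\cong\cl{H}^3$ and $\mathrm{Sym}\cong\cl{H}^1\oplus\cl{H}^5$. Therefore
\[
\cl{V}\otimes\mathrm{Sym}\cong(\cl{H}^3\otimes\cl{H}^1)\oplus(\cl{H}^3\otimes\cl{H}^5)\cong\cl{H}^3\oplus\cl{H}^7\oplus\cl{H}^5\oplus\cl{H}^3 .
\]
Removing the trace removes one copy of $\cl{H}^3$, which gives \eqref{eq:A.decomposition}. The dimension count $7+5+3=15$ confirms this. Each piece occurs with multiplicity one, so the isotypic components are orthogonal, and Schur's lemma makes the decomposition unique. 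This gives \eqref{eq:orthogonality.modes.lemma} abstractly. I would still check orthogonality directly, as described below, to be safe.

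\textbf{Formulas and properties.} I would check the stated properties of the three pieces one at a time.

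For $\bb{A}^{(7)}$, the bracket $\fr13((\bb{A})_{ijk}+(\bb{A})_{jik}+(\bb{A})_{kij})$ is the full symmetrization of $\bb{A}$, using $(\bb{A})_{ijk}=(\bb{A})_{ikj}$. Its trace on any pair equals $\fr13(0+0+a_k)$, because $(\bb{A})_{iik}=0$. The correction term $\fr1{15}(\delta_{ij}a_k+\delta_{ik}a_j+\delta_{jk}a_i)$ has trace $\fr{5}{15}a_k=\fr13 a_k$. So $\bb{A}^{(7)}$ is totally symmetric and traceless, which is \eqref{eq:A7.props.lemma}. A totally symmetric traceless third-order tensor lies in the $7$-dimensional irreducible module.

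For $\bb{A}^{(3)}$, contracting over $i=j$ gives $\fr25 a_k-\fr1{10}(3a_k+a_k)=0$. Contracting over $m=m$ gives $\fr25\cdot3a_i-\fr1{10}\cdot2a_i=a_i$. This is \eqref{eq:A3.props.lemma}, and $\bb{A}^{(3)}$ is an equivariant linear image of $\bs{a}$.

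For $\bb{A}^{(5)}$, expanding the definition \eqref{eq:A5.mode.lemma} gives \eqref{eq:A5.mode.explicit.lemma}. Its traces vanish because the traces of $\bb{A}^{(7)}$ and $\bb{A}^{(3)}$ reproduce those of $\bb{A}$. Its cyclic sum is $\mathrm{sym}(\bb{A})-\mathrm{sym}(\bb{A}^{(7)})-\mathrm{sym}(\bb{A}^{(3)})$, where $\mathrm{sym}$ denotes full symmetrization. The full symmetrization of $\bb{A}^{(3)}$ is $\fr15(\delta a)_{\mathrm{sym}}$, and this cancels exactly. Since $\bb{A}^{(5)}$ has zero totally symmetric part and zero trace vector, the map $\bb{A}\mapsto\bb{A}^{(5)}$ lands in a complement of dimension $15-7-3=5$.

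\textbf{Orthogonality.} For $\bb{A}^{(7)}\threevdots\bb{A}^{(3)}$: $\bb{A}^{(3)}$ is built from Kronecker deltas, so the contraction is a combination of traces of $\bb{A}^{(7)}$, which vanish. For $\bb{A}^{(5)}\threevdots\bb{A}^{(3)}$: the same argument applies, using the traces of $\bb{A}^{(5)}$. For $\bb{A}^{(7)}\threevdots\bb{A}^{(5)}$: since $\bb{A}^{(7)}$ is totally symmetric,
\[
\bb{A}^{(7)}\threevdots\bb{A}^{(5)}=\bb{A}^{(7)}\threevdots\mathrm{sym}(\bb{A}^{(5)}),
\]
and the right side vanishes by the cyclic-sum identity.

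The main obstacle is bookkeeping. The cyclic-sum identity in \eqref{eq:A5.props.lemma} and the coefficient matching in \eqref{eq:A5.mode.explicit.lemma} require carefully tracking index positions, together with the last-two-index symmetry used to identify $(\bb{A})_{kij}$ with $(\bb{A})_{kji}$.
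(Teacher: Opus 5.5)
Your proposal is correct, and its computational core matches the paper's proof. It uses the same trace checks for $\bb{A}^{(7)}$ and $\bb{A}^{(3)}$. It uses $(\bb{A})_{ijk}=(\bb{A})_{ikj}$ in the same way to recognize $\fr13((\bb{A})_{ijk}+(\bb{A})_{jik}+(\bb{A})_{kij})$ as the full symmetrization. It also uses the same three orthogonality arguments: contraction against Kronecker deltas for the two pairings with $\bb{A}^{(3)}$, and full symmetry of $\bb{A}^{(7)}$ against the cyclic identity for the third.

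The difference is in how the structural claims are secured. The paper derives the coefficients $\fr25$, $-\fr1{10}$, $\fr1{15}$ from a general isotropic ansatz plus trace constraints, which shows these formulas are forced. However, it settles irreducibility only by citation. You instead verify the given coefficients and give an intrinsic argument. The Clebsch--Gordan splitting $\cl{V}\otimes\mathrm{Sym}\cong\cl{H}^3\otimes(\cl{H}^1\oplus\cl{H}^5)$, together with the equivariant surjectivity of the trace map, makes $\cl{A}$ multiplicity-free with summands of dimensions $7$, $5$, $3$. Orthogonality and uniqueness then follow from Schur's lemma. Finally, the range of $\bb{A}\mapsto\bb{A}^{(5)}$ is an invariant $5$-dimensional subspace, so it must be the irreducible $\cl{H}^5$. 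You should state this last inference explicitly, since it is what actually delivers irreducibility of the middle piece.

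You also justify two points the paper only asserts. The first is the independence of the three conditions $(\bb{A})_{iik}=0$. The second is the cyclic identity for $\bb{A}^{(5)}$, via $\mathrm{sym}_3\bb{A}^{(3)}=\fr1{15}(\delta_{ij}a_k+\delta_{ik}a_j+\delta_{jk}a_i)$. Your expression $\fr15(\delta a)_{\mathrm{sym}}$ is right only if $(\delta a)_{\mathrm{sym}}$ means the full symmetrization of $\bs{1}\otimes\bs{a}$, that is, one third of the bracket just written.
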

\begin{proof}
Since \((\bb{A})_{ijk}=(\bb{A})_{ikj}\), the tensor \(\bb{A}\) has \(3\times 6=18\) independent components. The constraint \((\bb{A})_{iik}=0\) imposes three independent conditions, and therefore
\begin{equation}\label{eq:dim.A}
\dim \cl{A} = 18-3 = 15.
\end{equation}

We now derive the three components in \eqref{eq:A7.mode.lemma}, \eqref{eq:A3.mode.lemma}, and \eqref{eq:A5.mode.lemma}. First, the only vector naturally induced by \(\bb{A} \in \cl{A}\) is
\begin{equation}\label{eq:trace.vector.lemma}
a_i=(\bb{A})_{imm}.
\end{equation}
The most general isotropic tensor depending linearly on \(a_i\) and satisfying the minor symmetry \(j\leftrightarrow k\) has the form
\begin{equation}\label{eq:vector.generated.ansatz}
(\bb{A}^{(3)})_{ijk} = \alpha\, \delta_{jk}a_i + \beta \, (\delta_{ij}a_k+\delta_{ik}a_j).
\end{equation}
Imposing the traceless condition on the first two indices \((\bb{A}^{(3)})_{iik}=0\) results in
\begin{equation}\label{eq:vector.generated.constraint.1}
\alpha + 4\beta = 0,
\end{equation}
whereas
\begin{equation}\label{eq:vector.generated.trace}
(\bb{A}^{(3)})_{imm}=(3\alpha+2\beta)a_i,
\end{equation}
and requiring \((\bb{A}^{(3)})_{imm}=a_i\) yields
\begin{equation}\label{eq:vector.generated.constraint.2}
3\alpha + 2\beta = 1.
\end{equation}
Solving \eqref{eq:vector.generated.constraint.1} and \eqref{eq:vector.generated.constraint.2} yields
\begin{equation}\label{eq:vector.generated.coefficients}
\alpha = \dfrac{2}{5} \texand \beta = -\dfrac{1}{10},
\end{equation}
which proves \eqref{eq:A3.mode.lemma}.

Next, consider the symmetrization
\begin{equation}\label{eq:full.symmetrization}
(\sym_3 \bb{A})_{ijk} \coloneqq \fr{1}{3} ((\bb{A})_{ijk} + (\bb{A})_{jik} + (\bb{A})_{kij}).
\end{equation}
Because \((\bb{A})_{ijk} = (\bb{A})_{ikj}\), the tensor \(\sym_3 \bb{A}\) is fully symmetric in \(i,j,k\) and its trace is
\begin{align}\label{eq:sym.trace}
(\sym_3 \bb{A})_{imm} &= \fr{1}{3} ((\bb{A})_{imm} + (\bb{A})_{mim} + (\bb{A})_{mmi}) \nonumber\\[4pt]
&= \fr{1}{3} a_i,
\end{align}
where we have used the symmetry on the last two indices along with the traceless nature \((\bb{A})_{mim} = (\bb{A})_{mmi} = 0\). The only fully symmetric isotropic tensor induced by \(a_i\) has the form
\begin{equation}\label{eq:fully.symmetric.isotropic.vector}
\gamma (\delta_{ij}a_k + \delta_{ik}a_j + \delta_{jk}a_i),
\end{equation}
for which its trace is \(5\gamma a_i\). To annihilate the trace  of \(\sym_3 \bb{A}\) in \eqref{eq:sym.trace}, we impose \(5\gamma = \fr{1}{3}\), that is, \(\gamma = \fr{1}{15}\). Therefore, the fully symmetric traceless part \(\bb{A}^{(7)}\) is given by
\begin{equation}\label{eq:A7.derived}
(\bb{A}^{(7)})_{ijk} = \fr{1}{3} ((\bb{A})_{ijk}+(\bb{A})_{jik}+(\bb{A})_{kij}) - \fr{1}{15} (\delta_{ij}a_k + \delta_{ik}a_j + \delta_{jk}a_i).
\end{equation}
This proves \eqref{eq:A7.mode.lemma}.

Finally, the remainder
\begin{equation}\label{eq:A5.remainder}
(\bb{A}^{(5)})_{ijk} = (\bb{A})_{ijk}-(\bb{A}^{(7)})_{ijk}-(\bb{A}^{(3)})_{ijk}
\end{equation}
belongs to \(\cl{A}\) and satisfies \eqref{eq:A5.props.lemma}. Expanding \eqref{eq:A5.remainder} using \eqref{eq:A3.mode.lemma} and \eqref{eq:A7.mode.lemma} produces \eqref{eq:A5.mode.explicit.lemma}. Identities \eqref{eq:A7.props.lemma}, \eqref{eq:A3.props.lemma}, and \eqref{eq:A5.props.lemma} are proved by direct substitution of \eqref{eq:A7.mode.lemma}, \eqref{eq:A3.mode.lemma}, and \eqref{eq:A5.mode.lemma}.

Since \((\bb{A}^{(7)})_{ijk}\) is fully symmetric and traceless, while \((\bb{A}^{(3)})_{ijk}\) is given by \eqref{eq:A3.mode.lemma}, we have
\begin{align}\label{eq:orthogonality.73.proof}
\bb{A}^{(7)} \threevdots \bb{A}^{(3)} &= (\bb{A}^{(7)})_{ijk}(\bb{A}^{(3)})_{ijk} \nonumber\\[4pt]
&= (\bb{A}^{(7)})_{ijk} (\fr{2}{5} \delta_{jk}a_i - \fr{1}{10}(\delta_{ij}a_k + \delta_{ik}a_j)) \nonumber\\[4pt]
&= \fr{2}{5} (\bb{A}^{(7)})_{ijj}a_i - \fr{1}{10} (\bb{A}^{(7)})_{iik}a_k - \fr{1}{10} (\bb{A}^{(7)})_{iki}a_k = 0,
\end{align}
because all traces of \(\bb{A}^{(7)}\) vanish identically. Moreover, since \(\bb{A}^{(7)}\) is fully symmetric and \(\bb{A}^{(5)}\) satisfies the cyclic identity \eqref{eq:A5.props.lemma}\(_3\), we have that
\begin{equation}
\bb{A}^{(7)}\threevdots\bb{A}^{(5)} = \fr{1}{3} (\bb{A}^{(7)})_{ijk} ((\bb{A}^{(5)})_{ijk} + (\bb{A}^{(5)})_{jik} + (\bb{A}^{(5)})_{kij}) =0.
\end{equation}
Similarly, using the trace-free properties of \(\bb{A}^{(5)}\),
\begin{equation}
\bb{A}^{(5)}\threevdots\bb{A}^{(3)} = \fr{2}{5}(\bb{A}^{(5)})_{ijj}a_i - \fr{1}{10}(\bb{A}^{(5)})_{iik}a_k - \fr{1}{10}(\bb{A}^{(5)})_{iki}a_k =0,
\end{equation}
proving the orthogonality condition \eqref{eq:orthogonality.modes.lemma}.

The space \(\cl{H}^7\) consists of fully symmetric traceless third-order tensors in three dimensions and therefore has dimension \(7\). The space \(\cl{H}^3\) is parametrized uniquely by the vector \(a_i\), and hence has dimension \(3\). Since \(\cl{H}^5\) is the orthogonal complement of \(\cl{H}^7\oplus\cl{H}^3\) in \(\cl{A}\), it follows from \eqref{eq:dim.A}, that \(\cl{H}^5\) has dimension \(15-7-3=5\). Since the three subspaces are mutually orthogonal and their dimensions add,
\begin{equation}\label{eq:dimensions.add}
\dim \cl{H}^7 + \dim \cl{H}^5 + \dim \cl{H}^3 = 7 + 5 + 3 = 15 = \dim \cl{A},
\end{equation}
the decomposition \eqref{eq:A.decomposition} follows. Uniqueness is immediate from orthogonality.
The projection formulas \eqref{eq:A7.mode.lemma}--\eqref{eq:A5.mode.lemma} are \(\mathrm{SO}(3)\)-equivariant. Their ranges are the standard harmonic tensor representations of orders \(3\), \(2\), and \(1\), respectively, and are therefore irreducible under \(\mathrm{SO}(3)\); see Espath \cite[Chapter 5]{Esp23}.
\end{proof}

\begin{lem}[Positivity conditions for the sixth-order dissipative operator]\label{lem:sharp.positivity.C6}
Let \(\cl{A}\) be the space defined in \eqref{eq:E.space}, and let \(\pmb{\bb{C}} \colon \cl{A} \to \cl{A}\) be the isotropic sixth-order tensor defined by
\begin{align}\label{eq:C6.action.traceless.theorem}
(\pmb{\bb{C}}\bb{A})_{ijk} ={}& 2\ellcg{6}{3} ((\bb{A})_{kij}+(\bb{A})_{jik} - \fr{1}{4} (\delta_{ij}(\bb{A})_{kmm} + \delta_{ik}(\bb{A})_{jmm}))\nonumber\\[4pt]
&+ 2\ellcg{6}{4}(\bb{A})_{ijk} + \ellcg{6}{5} (\delta_{jk}(\bb{A})_{imm} - \fr{1}{4} (\delta_{ij}(\bb{A})_{kmm} + \delta_{ik}(\bb{A})_{jmm})).
\end{align}
Then, the quadratic form associated with \(\pmb{\bb{C}}\) satisfies
\begin{equation}\label{eq:Q.C6.theorem}
\cl{Q}_{\pmb{\bb{C}}}(\bb{A}) \coloneqq \bb{A} \threevdots \pmb{\bb{C}} \bb{A} \ge 0 \forevery \bb{A} \in \cl{A},
\end{equation}
if and only if
\begin{equation}\label{eq:C6.sharp.conditions}
2\ellcg{6}{3}+\ellcg{6}{4}\ge 0, \qquad \ellcg{6}{4}-\ellcg{6}{3}\ge 0, \texand -2\ellcg{6}{3} + 4\ellcg{6}{4} + 5\ellcg{6}{5}\ge 0.
\end{equation}
Equivalently, if \(\bb{A}=\bb{A}^{(7)}+\bb{A}^{(5)}+\bb{A}^{(3)}\) is given by the harmonic decomposition from Lemma \ref{lem:harmonic.decomp.A}, then
\begin{equation}\label{eq:Q.harmonic.split.theorem}
\bb{A} \threevdots \pmb{\bb{C}} \bb{A} = \lambda_7 |\bb{A}^{(7)}|^2 +\lambda_5 |\bb{A}^{(5)}|^2 +\lambda_3 |\bb{A}^{(3)}|^2,
\end{equation}
and
\begin{equation}
\left\{
\begin{aligned}
\lambda_7 \coloneqq {}&4\ellcg{6}{3}+2\ellcg{6}{4}, \\[4pt] \lambda_5 \coloneqq {}&-2\ellcg{6}{3}+2\ellcg{6}{4}, \\[4pt] \lambda_3 \coloneqq {}&-\ellcg{6}{3}+2\ellcg{6}{4}+\fr52\ellcg{6}{5}.
\end{aligned}
\right.
\end{equation}
\end{lem}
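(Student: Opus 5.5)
The plan is to diagonalize \(\pmb{\bb{C}}\) on the harmonic decomposition of Lemma \ref{lem:harmonic.decomp.A}. Conceptually, \(\pmb{\bb{C}}\) is \(\mathrm{SO}(3)\)-equivariant. The subspaces \(\cl{H}^7\), \(\cl{H}^5\), \(\cl{H}^3\) are irreducible and pairwise non-isomorphic, since their dimensions differ. Schur's lemma therefore forces \(\pmb{\bb{C}}\) to act on each of them as a scalar \(\lambda_7,\lambda_5,\lambda_3\). Rather than rely on that abstractly, I would verify it directly: apply \eqref{eq:C6.action.traceless.theorem} to each mode and use the identities \eqref{eq:A7.props.lemma}--\eqref{eq:A3.props.lemma}. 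This shows in one stroke that each mode is an eigenvector and produces the eigenvalue.

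\emph{Mode \(\cl{H}^7\).} Here \(\bb{A}^{(7)}\) is fully symmetric and all its traces vanish. Hence \((\bb{A})_{kij}+(\bb{A})_{jik}=2(\bb{A})_{ijk}\) and every \(\delta\)-term drops out. This gives \(\pmb{\bb{C}}\bb{A}^{(7)}=(4\ellcg{6}{3}+2\ellcg{6}{4})\bb{A}^{(7)}\).

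\emph{Mode \(\cl{H}^5\).} Again all traces vanish. The cyclic identity \eqref{eq:A5.props.lemma}\(_3\) gives \((\bb{A})_{kij}+(\bb{A})_{jik}=-(\bb{A})_{ijk}\). Hence \(\pmb{\bb{C}}\bb{A}^{(5)}=(-2\ellcg{6}{3}+2\ellcg{6}{4})\bb{A}^{(5)}\).

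\emph{Mode \(\cl{H}^3\).} I would substitute \eqref{eq:A3.mode.lemma}, using \((\bb{A}^{(3)})_{imm}=a_i\). A short index computation gives
\[
(\bb{A})_{kij}+(\bb{A})_{jik}-\fr14(\delta_{ij}a_k+\delta_{ik}a_j)=\fr1{20}(\delta_{ij}a_k+\delta_{ik}a_j)-\fr15\delta_{jk}a_i=-\fr12(\bb{A}^{(3)})_{ijk}.
\]
Similarly,
\[
\delta_{jk}a_i-\fr14(\delta_{ij}a_k+\delta_{ik}a_j)=\fr52(\bb{A}^{(3)})_{ijk}.
\]
Together these give \(\lambda_3=-\ellcg{6}{3}+2\ellcg{6}{4}+\fr52\ellcg{6}{5}\). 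This bookkeeping is the only genuinely computational step, and it is where I expect errors to creep in. It must be checked carefully against the normalization \(2/5,-1/10\).

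Next, for \(\bb{A}=\bb{A}^{(7)}+\bb{A}^{(5)}+\bb{A}^{(3)}\), linearity gives \(\pmb{\bb{C}}\bb{A}=\sum\lambda_n\bb{A}^{(n)}\). The orthogonality \eqref{eq:orthogonality.modes.lemma} kills the cross terms, which yields \eqref{eq:Q.harmonic.split.theorem}.

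Finally, the equivalence. Sufficiency is immediate: if every \(\lambda_n\ge0\), then \(\cl{Q}_{\pmb{\bb{C}}}\ge0\). For necessity, take \(\bb{A}\) to be a nonzero element of a single harmonic subspace, which exists since each has positive dimension. This forces \(\lambda_n\ge0\) for each \(n\). It then remains to note that \(\lambda_7\ge0\iff2\ellcg{6}{3}+\ellcg{6}{4}\ge0\), that \(\lambda_5\ge0\iff\ellcg{6}{4}-\ellcg{6}{3}\ge0\), and that \(2\lambda_3\ge0\iff-2\ellcg{6}{3}+4\ellcg{6}{4}+5\ellcg{6}{5}\ge0\). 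These are exactly the conditions \eqref{eq:C6.sharp.conditions}.
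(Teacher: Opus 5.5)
Your proposal is correct and follows essentially the paper's route: you diagonalize $\pmb{\bb{C}}$ on the harmonic decomposition, remove the cross terms by orthogonality, and test each nonzero mode separately to get necessity. The only difference is that you read off $\lambda_7,\lambda_5,\lambda_3$ directly from the action $\pmb{\bb{C}}\bb{A}^{(n)}$ (your $\cl{H}^3$ identities check out), whereas the paper asserts the eigen-relations ``by direct substitution'' and then extracts the scalars by evaluating the expanded quadratic form on each mode; your computation makes that asserted step explicit.
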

\begin{proof}
From \eqref{eq:6.tensor.isotropic.inner.product.traceless}, the quadratic form associated with \(\pmb{\bb{C}}\) is
\begin{equation}\label{eq:Q.C6.expanded}
(\bb{A})_{ijk}(\pmb{\bb{C}}\bb{A})_{ijk} = 2\ellcg{6}{3}\big((\bb{A})_{ijk}(\bb{A})_{kij}+(\bb{A})_{ijk}(\bb{A})_{jik}\big) + 2\ellcg{6}{4}(\bb{A})_{ijk}(\bb{A})_{ijk} + \ellcg{6}{5}(\bb{A})_{ikk}(\bb{A})_{imm}.
\end{equation}
By Lemma \ref{lem:harmonic.decomp.A}, we use the orthogonal decomposition of \(\bb{A}\) into \(\mathrm{SO}(3)\)-irreducible components. Direct substitution of the defining identities \eqref{eq:A7.props.lemma}--\eqref{eq:A3.props.lemma} into \eqref{eq:C6.action.traceless.theorem} gives
\begin{equation}\label{eq:C.mode.action}
\pmb{\bb{C}}\bb{A}^{(7)} = \lambda_7\bb{A}^{(7)}, \qquad \pmb{\bb{C}}\bb{A}^{(5)} = \lambda_5\bb{A}^{(5)}, \texand \pmb{\bb{C}}\bb{A}^{(3)} = \lambda_3\bb{A}^{(3)}.
\end{equation}
Thus, using the harmonic decomposition \eqref{eq:A.harmonic.decomposition}, orthogonality \eqref{eq:orthogonality.modes.lemma} and \eqref{eq:C.mode.action}, we have that
\begin{align}\label{eq:Q.mode.diagonal}
\bb{A}\threevdots \pmb{\bb{C}}\bb{A} ={}& (\bb{A}^{(7)} + \bb{A}^{(5)} + \bb{A}^{(3)}) \threevdots (\lambda_7\bb{A}^{(7)} + \lambda_5\bb{A}^{(5)} + \lambda_3\bb{A}^{(3)}) \nonumber\\[4pt]
&= \lambda_7|\bb{A}^{(7)}|^2 + \lambda_5|\bb{A}^{(5)}|^2 + \lambda_3|\bb{A}^{(3)}|^2.
\end{align}
It remains only to determine the three scalars.

For \(\bb{A}=\bb{A}^{(7)}\), with \eqref{eq:A7.props.lemma} from Lemma \ref{lem:harmonic.decomp.A}, we have
\begin{equation}\label{eq:H7.identities}
(\bb{A})_{kij} = (\bb{A})_{ijk}, \qquad (\bb{A})_{jik} = (\bb{A})_{ijk}, \texand (\bb{A})_{ikk} = 0.
\end{equation}
Substituting \eqref{eq:H7.identities} into \eqref{eq:Q.C6.expanded}, we find
\begin{align}
\bb{A}^{(7)} \threevdots \pmb{\bb{C}}\bb{A}^{(7)} &= 2\ellcg{6}{3}((\bb{A}^{(7)})_{ijk}(\bb{A}^{(7)})_{ijk} + (\bb{A}^{(7)})_{ijk}(\bb{A}^{(7)})_{ijk}) + 2\ellcg{6}{4}(\bb{A}^{(7)})_{ijk}(\bb{A}^{(7)})_{ijk}\nonumber\\[4pt]
&= (4\ellcg{6}{3} + 2\ellcg{6}{4})|\bb{A}^{(7)}|^2.
\end{align}
Thus,
\begin{equation}\label{eq:lambda7.derived}
\lambda_7 = 4\ellcg{6}{3} + 2\ellcg{6}{4}.
\end{equation}

For \(\bb{A}=\bb{A}^{(5)}\), with \eqref{eq:A5.props.lemma} from Lemma \ref{lem:harmonic.decomp.A}, we have that
\begin{equation}\label{eq:H5.identities}
(\bb{A})_{ijk}+(\bb{A})_{jik}+(\bb{A})_{kij} = 0 \texand (\bb{A})_{ikk} = 0.
\end{equation}
Substituting \eqref{eq:H5.identities}\(_1\) into \eqref{eq:Q.C6.expanded}, we obtain
\begin{align}
\bb{A}^{(5)} \threevdots \pmb{\bb{C}}\bb{A}^{(5)} &= 2\ellcg{6}{3}(\bb{A}^{(5)})_{ijk} ((\bb{A}^{(5)})_{kij} + (\bb{A}^{(5)})_{jik}) + 2\ellcg{6}{4}(\bb{A}^{(5)})_{ijk} (\bb{A}^{(5)})_{ijk} \nonumber\\[4pt]
&= (-2\ellcg{6}{3}+2\ellcg{6}{4})|\bb{A}^{(5)}|^2.
\end{align}
Thus,
\begin{equation}\label{eq:lambda5.derived}
\lambda_5 = -2\ellcg{6}{3}+2\ellcg{6}{4}.
\end{equation}

For \(\bb{A}=\bb{A}^{(3)}\), set \(a_i \coloneqq (\bb{A}^{(3)})_{imm}\). By Lemma \ref{lem:harmonic.decomp.A},
\begin{equation}\label{eq:A3.mode.theorem}
(\bb{A}^{(3)})_{ijk} = \fr{2}{5} \delta_{jk}a_i - \fr{1}{10} (\delta_{ij}a_k+\delta_{ik}a_j).
\end{equation}
A direct contraction yields
\begin{equation}\label{eq:A3.norm.theorem}
(\bb{A}^{(3)})_{ijk}(\bb{A}^{(3)})_{ijk} = \fr{2}{5} a_i a_i,
\end{equation}
and
\begin{equation}\label{eq:A3.cross.theorem}
\left\{
\begin{aligned}
(\bb{A}^{(3)})_{ijk}(\bb{A}^{(3)})_{kij} &= -\fr{1}{10} a_i a_i, \\[4pt] (\bb{A}^{(3)})_{ijk}(\bb{A}^{(3)})_{jik} &= -\fr{1}{10} a_i a_i, \\[4pt] (\bb{A}^{(3)})_{ikk}(\bb{A}^{(3)})_{imm} &= a_i a_i.
\end{aligned}
\right.
\end{equation}
Substituting \eqref{eq:A3.cross.theorem} into \eqref{eq:Q.C6.expanded}, we obtain
\begin{align}\label{eq:Q.H3.intermediate}
\bb{A}^{(3)} \threevdots \pmb{\bb{C}}\bb{A}^{(3)} &= 2\ellcg{6}{3} (-\fr{1}{10}a_i a_i - \fr{1}{10}a_i a_i) + 2\ellcg{6}{4}\fr{2}{5}a_i a_i + \ellcg{6}{5}a_i a_i \nonumber\\[4pt]
&=(-\fr{2}{5}\ellcg{6}{3} + \fr{4}{5}\ellcg{6}{4} + \ellcg{6}{5}) a_i a_i.
\end{align}
Using \eqref{eq:A3.norm.theorem}, we arrive at
\begin{equation}\label{eq:Q.H3.final}
\bb{A}^{(3)} \threevdots \pmb{\bb{C}}\bb{A}^{(3)} = (-\ellcg{6}{3} + 2\ellcg{6}{4} + \fr{5}{2}\ellcg{6}{5})|\bb{A}^{(3)}|^2.
\end{equation}
Thus,
\begin{equation}\label{eq:lambda3.derived}
\lambda_3 = -\ellcg{6}{3} + 2\ellcg{6}{4} + \fr{5}{2}\ellcg{6}{5}.
\end{equation}

Finally, we have proved that \eqref{eq:Q.C6.theorem} holds if and only if
\begin{equation}\label{eq:lambda.nonnegative}
\lambda_7 \ge 0, \qquad \lambda_5 \ge 0, \texand \lambda_3 \ge 0.
\end{equation}
In terms of the coefficients \(\ellcg{6}{3}\), \(\ellcg{6}{4}\), and \(\ellcg{6}{5}\), \eqref{eq:lambda.nonnegative} is equivalent to
\begin{equation}\label{eq:C6.conditions.intermediate}
4\ellcg{6}{3} + 2\ellcg{6}{4}\ge 0, \qquad -2\ellcg{6}{3} + 2\ellcg{6}{4}\ge 0, \texand -\ellcg{6}{3} + 2\ellcg{6}{4} + \fr{5}{2}\ellcg{6}{5}\ge 0.
\end{equation}
\end{proof}

Finally, invoking the canonical free-energy imbalance \eqref{eq:pointwise.dissipation.final.reduced.equivalent}, along with the constitutive restrictions \eqref{eq:constraint}, the choices \eqref{eq:Xi.dis}, and the spectral decomposition \eqref{eq:Q.harmonic.split.theorem} under the positivity conditions \eqref{eq:C6.sharp.conditions}, the dissipation inequality reduces to
\begin{equation}\label{eq:dissipation.inequality.final}
2 \mu |\br{D}|^2 + \lambda_7 |(\Grad\br{L})^{(7)}|^2 + \lambda_5 |(\Grad\br{L})^{(5)}|^2 + \lambda_3 |(\Grad\br{L})^{(3)}|^2 + \beta |\mathfrak{D}_{\!\alpha}\br{J}|^2 \ge 0.
\end{equation}

\begin{rmk}[Interpretation of the dissipative modes]\label{rmk:C6.modes}
The three inequalities in \eqref{eq:C6.sharp.conditions} correspond to the three orthogonal harmonic modes in the decomposition of \(\bb{A}\).

When \(\bb{A}=\Grad\br{L}\), the mode \(\bb{A}^{(7)}\in\cl{H}^7\) is the fully symmetric traceless part of the gradient of the velocity gradient and represents a pure extensional-curvature mode. The mode \(\bb{A}^{(5)}\in\cl{H}^5\) is the trace-free mixed-symmetry part and represents a distortional-curvature or shear-gradient mode. The mode \(\bb{A}^{(3)}\in\cl{H}^3\) is generated by the trace vector \(a_i=(\bb{A})_{imm}\), such that
\begin{equation}\label{eq:trace.vector.Laplacian}
a_i = \partial_m (\br{L})_{im} = \partial_m\partial_m v_i = \Delta v_i,
\end{equation}
so that \(\bb{A}^{(3)}\) is a diffusion-like or Laplacian-curvature mode. Here, \(\Delta\) represents the Laplacian operator. Accordingly, \(\lambda_7\), \(\lambda_5\), and \(\lambda_3\) are three independent isotropic hyperviscosity moduli associated with these three modes.
\end{rmk}

\section{Final governing equations, constitutive relations, and boundary conditions}\label{sc:final.model}

In this section, we summarize the final system of field equations resulting from the preceding kinematical, balance, and thermodynamic considerations. Again, we restrict attention to incompressible and isothermal processes.

\subsection{Field equations}

The bulk unknown fields are the velocity \(\vel\), the bulk pressure \(\pi\), and the internal variable \(\br{J}\), where
\begin{equation}\label{eq:bulk.effective.pressure}
\pi \coloneqq p-\Div\bs{p}.
\end{equation}
Indeed, the scalar pressure \(p\) and the hyperpressure \(\bs{p}\) enter the bulk momentum balance only through \(\pi\). Their decomposition is invariant under the bulk gauge transformation
\begin{equation}\label{eq:pressure.hyperpressure.gauge}
(p,\bs{p})\longmapsto(p+\Div\bs{q},\bs{p}+\bs{q})
\end{equation}
for every sufficiently smooth vector field \(\bs{q}\). Thus, \(p\) and \(\bs{p}\) are not independent bulk unknowns. The closed bulk system consists of the mass balance, linear momentum balance, and microforce balance:
\begin{equation}\label{eq:field.equations}
\left\{
\begin{aligned}
& \Div \vel = 0, \\[4pt] &
\varrho \dot{\vel} = \Div (\br{S} - \Div\bb{S})-\Grad\pi+\varrho \bs{b}, \\[4pt]
& \bs{\Xi} + \bs{\Upsilon} + \Div \bb{X} = \bs{0},
\end{aligned}
\right.
\end{equation}
Since \(\br{J}\in\mathrm{Sym}\), the microforce balance is an equality in \(\mathrm{Sym}\) and supplies six independent scalar equations. Hence, the bulk system comprises ten scalar equations for the ten scalar components of \(\vel\), \(\pi\), and \(\br{J}\). Here,
\begin{align}
\br{H}_{\alpha} \coloneqq{}& \alpha (\br{J}\partial_{\br{J}}\psi + \partial_{\br{J}}\psi\br{J}) \nonumber\\[4pt]
&+ \sym\big(\alpha (\partial_{\Grad\br{J}}\psi \twovcR \Grad\br{J} + \partial_{\Grad\br{J}}\psi \twovcM \Grad\br{J}) - \Grad\br{J} \twovc \partial_{\Grad\br{J}}\psi\big),
\end{align}
and
\begin{equation}
\bb{H}_{\alpha} \coloneqq \fr{1+\alpha}{2}\bb{H}+\fr{\alpha-1}{2}\bb{H}^{\sperp}, \with \bb{H} \coloneqq ((\partial_{\Grad\br{J}}\psi)^{\trans}\br{J})^{\!\trans}+(\br{J}\partial_{\Grad\br{J}}\psi)^{\sperp},
\end{equation}
and
\begin{equation}
\left\{
\begin{aligned}
\br{S} &= 2\mu \br{D} + \bb{P}_0[\br{H}_{\alpha}], \\[4pt] \bb{S} &= \lambda_7 (\Grad\br{L})^{(7)} + \lambda_5 (\Grad\br{L})^{(5)} + \lambda_3 (\Grad\br{L})^{(3)} + \bb{P}_{\!\!\scriptscriptstyle\cl{A}}[\sym_{23}\bb{H}_{\alpha}], \\[4pt] \bs{\Xi} &= -\beta \, \mathfrak{D}_{\!\alpha}\br{J} - \partial_{\br{J}}\psi, \\[4pt] \bb{X} &= \partial_{\Grad\br{J}}\psi,
\end{aligned}
\right.
\end{equation}
The projection in the constitutive relation for \(\br{S}\) fixes its deviatoric representative. Since the bulk pressure is indeterminate, the spherical part of \(\br{H}_{\alpha}\) can be absorbed into an effective pressure; equivalently,
\begin{equation}
-\pi\bs{1}+\bb{P}_0[\br{H}_{\alpha}] = -\pi_{\mathrm{eff}}\bs{1}+\br{H}_{\alpha}, \qquad \pi_{\mathrm{eff}} = \pi+\fr{1}{3}\tr\br{H}_{\alpha}.
\end{equation}
The free energy is given by
\begin{equation}
\psi = \hat{\psi}(\br{J},\Grad\br{J}),
\end{equation}
and
\begin{equation}
\left\{
\begin{aligned}
\mu &\ge 0, \qquad& \text{shear viscosity}, \\ \beta &\ge 0, \qquad& \text{internal-variable drag coefficient}, \\ \lambda_7 &\ge 0, \qquad& \text{first hyperviscosity modulus}, \\ \lambda_5 &\ge 0, \qquad& \text{second hyperviscosity modulus}, \\ \lambda_3 &\ge 0, \qquad& \text{third hyperviscosity modulus}.
\end{aligned}
\right.
\end{equation}
The parameter \(\alpha \in \{0,1\}\) selects the transport structure: \(\alpha = 0\) for corotational advection or \(\alpha = 1\) for advection associated with the upper-convected rate. Note that the microforce balance may be rewritten as the following Allen--Cahn-type evolution equation
\begin{equation}
\beta \, \mathfrak{D}_{\!\alpha}\br{J} = - \partial_{\br{J}}\psi + \Div(\partial_{\Grad\br{J}}\psi) + \bs{\Upsilon}.
\end{equation}
The reader is referred to the work by Espath \& Calo \cite{Esp21} and Espath et al. \cite{Esp20} for different treatments of gradient theories of phase transition.

\subsection{Boundary conditions}

Throughout this subsection, an overbar denotes prescribed boundary data. To express the natural tractions, we retain \(p\) and \(\bs{p}\) as Lagrange-multiplier representatives satisfying \eqref{eq:bulk.effective.pressure}; a representative of their bulk gauge class is chosen consistently with the boundary data. The natural boundary conditions are expressed in terms of the associated tractions. Thus, on the traction part of the boundary, one may prescribe the surface traction
\begin{equation}\label{eq:final.natural.t}
\bsts = (\br{T} - \Div \bb{T})\bs{n} - \Divs ((\bb{T}\bs{n})\br{P}) = \overline{\bsts}, \qquad\text{on }\partial \prt_t^{\bsts},
\end{equation}
and the hypertraction
\begin{equation}\label{eq:final.natural.h}
\bshs = (\bb{T}\bs{n})\bs{n} = \overline{\bshs}, \qquad\text{on }\partial \prt_t^{\bshs}.
\end{equation}
If the boundary contains edges, one also has the natural edge traction
\begin{equation}\label{eq:final.natural.edge}
\bstc = \surp{(\bb{T}\bs{n})\bs{\nu}} = \overline{\bstc}, \qquad\text{on }\partial^2 \prt_t^{\bstc}.
\end{equation}
As for the microtraction, one may prescribe
\begin{equation}\label{eq:final.natural.micro}
\bsxs = \bb{X}\bs{n} = \overline{\bsxs}, \qquad\text{on }\partial \prt_t^{\bsxs}.
\end{equation}

The essential boundary conditions associated with the motion are
\begin{equation}\label{eq:final.essential.v}
\vel = \overline{\vel} \qquad\text{on } \partial\prt_t^{\vel},
\end{equation}
and
\begin{equation}\label{eq:final.essential.dv}
\partial_n \vel = \overline{\bs{g}} \qquad\text{on } \partial\prt_t^{\partial_n \vel}.
\end{equation}
The essential boundary condition for the internal variable is
\begin{equation}\label{eq:final.essential.J}
\br{J} = \overline{\br{J}}, \qquad\text{on } \partial\prt_t^{\br{J}}.
\end{equation}

The boundary is therefore partitioned as
\begin{equation}
\partial\prt_t = \partial\prt_t^{\,\vel}\cup \partial\prt_t^{\,\bsts} = \partial\prt_t^{\,\partial_n \vel}\cup \partial\prt_t^{\,\bshs} = \partial\prt_t^{\br{J}}\cup \partial\prt_t^{\bsxs},
\end{equation}
with the corresponding pairs of sets mutually disjoint.

\section{Gradient Oldroyd-B coupled with Landau--de Gennes theories}\label{sc:gradient.OB.LD}

Define the cone of positive-definite symmetric tensors by
\begin{equation}\label{eq:Sym.plus.space}
\mathrm{Sym}^+ \coloneqq \{\br{A}\in\mathrm{Sym}\colon \bs{a}\cdot\br{A}\bs{a}>0\ \text{for every}\ \bs{a}\in\cl{V}\setminus\{\bs{0}\}\}.
\end{equation}
Here, we consider the conformation tensor \(\br{A} \in \mathrm{Sym}^+\), a measure of deformation, as our first internal variable. This internal variable is assumed to advect according to the transport structure associated with the upper-convected rate, thus its frame-indifferent rate is \(\mathfrak{D}_{\!1} \br{A}\). Additionally, we consider a second internal variable \(\br{Y} \in \mathrm{Sym}_0\), with \(\mathrm{Sym}_0\) defined in \eqref{eq:Sym0.space}, representing the orientation tensor in the sense of the Landau--de Gennes theory or as a fabric tensor describing anisotropic microstructure, depending on the physical context. This internal variable is assumed to advect corotationally with the material, thus its frame-indifferent rate is \(\mathfrak{D}_{\!0} \br{Y}\). This choice reflects the distinct physical roles of the internal variables. We restrict attention to sufficiently smooth solutions for which \(\br{A}(\bs{y},t)\in\mathrm{Sym}^+\) throughout the evolution. Let \(\bs{\Upsilon}_1\in\mathrm{Sym}\) and \(\bs{\Upsilon}_0\in\mathrm{Sym}\) denote the external microforces associated with \(\br{A}\) and \(\br{Y}\), respectively; only \(\bb{P}_0[\bs{\Upsilon}_0]\) is power-active in the constrained orientational evolution.

As for the orientation tensor, it is customary to interpret it in the uniaxial limit as arising from a preferred direction \(\bs{n}\), that is,
\begin{equation}
\br{Y} \propto \bs{n} \otimes \bs{n} - \fr{1}{3} \bs{1},
\end{equation}
where \(\bs{n} \in T_{\bs{y}} \prt_t\) is a unit vector representing the preferred orientation in the Landau--de Gennes theory. However, in the context of fabric tensors,
\begin{equation}
\br{Y} \propto \dfrac{1}{N} \sum_i \bs{n}_i \otimes \bs{n}_i - \fr{1}{3} \bs{1}
\end{equation}
where \(N\) is the number of fibers and \(\bs{n}_i\in T_{\bs{y}}\prt_t\) is a unit vector directed along the \(i\)-th fiber. In the present formulation \(\br{Y}\) is treated as a general symmetric tensor satisfying \(\tr\br{Y}=0\).

Regardless, the free energy density response function is defined as
\begin{equation}\label{eq:free.energy.OB.LD}
\psi \coloneqq \hat{\psi}(\br{A},\br{Y},\Grad\br{A},\Grad\br{Y}),
\end{equation}
such that\footnote{Additional coupling terms between the conformation tensor \(\br{A}\) and the orientation tensor \(\br{Y}\) may be incorporated to account for interactions between elastic deformation and anisotropic microstructure. Such terms must be constructed from scalar invariants of \(\br{A}\) and \(\br{Y}\) in order to preserve isotropy and frame indifference. For instance, an admissible coupling is given by
\begin{equation}
\sigma \, \tr(\br{A}\br{Y}^2),
\end{equation}
where \(\sigma > 0\) is an additional material parameter.}
\begin{equation}\label{eq:free.energy.OB.LD.explicit}
\psi = \dfrac{G}{2} (\tr \br{A}-\ln\det\br{A}-3) + \dfrac{a}{2}\,\tr(\br{Y}^2) - \dfrac{b}{3}\,\tr(\br{Y}^3) + \dfrac{c}{4}\,(\tr(\br{Y}^2))^2 + \dfrac{\gamma_1}{2}|\Grad\br{A}|^2 + \dfrac{\gamma_0}{2}|\Grad\br{Y}|^2.
\end{equation}
Here, \(G>0\) is the polymer elastic modulus. The Landau--de Gennes bulk coefficients are \(a\in\bb{R}\), which may depend on temperature and change sign, \(b>0\), which controls the cubic contribution associated with the first-order nematic transition, and \(c>0\), which ensures coercivity of the bulk orientational energy. The coefficients \(\gamma_1\ge 0\) and \(\gamma_0\ge 0\) are the conformation-gradient and orientational-gradient moduli, respectively; they are strictly positive when the corresponding gradient regularization is retained.

With respect to the orientational variable, the free energy is first regarded as a function on the ambient space \(\mathrm{Sym}\) and is then restricted to \(\mathrm{Sym}_0\). Its ambient variational derivatives are not, in general, tangent to \(\mathrm{Sym}_0\). Therefore, for the constrained orientational variable \(\br{Y} \in \mathrm{Sym}_0\), the microforce balance is projected onto \(\mathrm{Sym}_0\). The derivatives with respect to \(\br{Y}\) and \(\Grad\br{Y}\) are first taken in the ambient space, and the corresponding variational driving force is then projected onto \(\mathrm{Sym}_0\). This ensures that the evolution preserves symmetry and tracelessness.
\begin{prop}[Projection onto \(\mathrm{Sym}_0\) and dissipative consistency]\label{prop:P0}
Let \(\br{Z} \in \bb{R}^{3 \times 3}\), and recall the projection \(\bb{P}_0 \colon \bb{R}^{3 \times 3} \to \mathrm{Sym}_0\) defined in \eqref{eq:P0.def}. Then
\begin{enumerate}[label=\textit{(\roman*)}]
\item \(\bb{P}_0\) is the orthogonal projection onto \(\mathrm{Sym}_0\) with respect to the Euclidean inner product.
\item If \(\br{Y}(0)\in \mathrm{Sym}_0\) and its evolution is given by
\begin{equation}\label{eq:P0.evolution}
\mathfrak{D}_{\!0}\br{Y} = \bb{P}_0[\br{Z}],
\end{equation}
for some \(\br{Z}\), then \(\br{Y}(t)\in \mathrm{Sym}_0\) for all time.
\item If the dissipative microforce is chosen as
\begin{equation}\label{eq:P0.Xi}
\bs{\Xi}^{\mathrm{dis}}_0 = -\beta_0 \bb{P}_0[\mathfrak{D}_{\!0}\br{Y}], \qquad \beta_0 > 0,
\end{equation}
then the contribution to the free-energy imbalance satisfies
\begin{equation}\label{eq:P0.dissipation}
\bs{\Xi}^{\mathrm{dis}}_0 \twovdots \mathfrak{D}_{\!0}\br{Y} = -\beta_0 |\bb{P}_0[\mathfrak{D}_{\!0}\br{Y}]|^2 \le 0.
\end{equation}
\end{enumerate}
\end{prop}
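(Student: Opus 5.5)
The proof splits along the three items, each a short computation.

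For \textit{(i)}, I will show that \(\bb{P}_0\) is idempotent, has range \(\mathrm{Sym}_0\), and that its residual is orthogonal to \(\mathrm{Sym}_0\). Range: \(\bb{P}_0[\br{Z}]\) is symmetric, and \(\tr\bb{P}_0[\br{Z}]=\tr\br{Z}-\tr\br{Z}=0\). Fixing \(\mathrm{Sym}_0\): for \(\br{Z}\in\mathrm{Sym}_0\), \(\sym\br{Z}=\br{Z}\) and \(\tr\br{Z}=0\), so \(\bb{P}_0[\br{Z}]=\br{Z}\); hence \(\bb{P}_0^2=\bb{P}_0\). Orthogonality: the residual is \(\br{Z}-\bb{P}_0[\br{Z}]=\skw\br{Z}+\fr{1}{3}(\tr\br{Z})\bs{1}\). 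For every \(\br{B}\in\mathrm{Sym}_0\), \(\skw\br{Z}\twovdots\br{B}=0\) by symmetry of \(\br{B}\), and \(\bs{1}\twovdots\br{B}=\tr\br{B}=0\). This is the content of \eqref{eq:P0.power}.

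For \textit{(ii)}, I rewrite the evolution as \(\dot{\br{Y}}=\br{W}\br{Y}-\br{Y}\br{W}+\bb{P}_0[\br{Z}]\), using \eqref{eq:frame.indifferent.rate.definition} with \(\alpha=0\). I then track the skew part and the trace of \(\br{Y}\) along material trajectories.
\begin{itemize}
\item \emph{Trace.} \(\tr(\br{W}\br{Y}-\br{Y}\br{W})=0\) by cyclicity, and \(\tr\bb{P}_0[\br{Z}]=0\), so \(\dot{\overline{\tr\br{Y}}}=0\) and the trace stays zero.
\item \emph{Skew part.} Transposing the equation with \(\br{W}^{\trans}=-\br{W}\) shows \(\br{Y}^{\trans}\) satisfies the same linear equation. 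Hence \(\br{K}\coloneqq\skw\br{Y}\) obeys the homogeneous linear ODE \(\dot{\br{K}}=\br{W}\br{K}-\br{K}\br{W}\) with \(\br{K}(0)=\bs{0}\). Uniqueness for linear ODEs with continuous coefficients gives \(\br{K}\equiv\bs{0}\).
\end{itemize}
Alternatively, \(\br{Y}(t)=\br{R}\br{Y}_0\br{R}^{\trans}+\int\br{R}\br{R}_s^{\trans}\bb{P}_0[\br{Z}]\br{R}_s\br{R}^{\trans}\,\mathrm{d}s\), where \(\dot{\br{R}}=\br{W}\br{R}\) and \(\br{R}_s\) denotes \(\br{R}\) at time \(s\). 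Conjugation by rotations preserves \(\mathrm{Sym}_0\).

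The only delicate point is interpretation: time is the material time along each trajectory, and we assume enough smoothness for existence and uniqueness. The prescribed \(\br{Z}\) is regarded as a given field. If \(\br{Z}\) depends on \(\br{Y}\), the same argument applies to the solution, because the transport part \(\br{W}\br{Y}-\br{Y}\br{W}\) is linear and the forcing \(\bb{P}_0[\br{Z}]\) lies in \(\mathrm{Sym}_0\).

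For \textit{(iii)}, I use self-adjointness and idempotence of the orthogonal projection from \textit{(i)}:
\[
\bs{\Xi}^{\mathrm{dis}}_0\twovdots\mathfrak{D}_{\!0}\br{Y}=-\beta_0\,\bb{P}_0[\mathfrak{D}_{\!0}\br{Y}]\twovdots\mathfrak{D}_{\!0}\br{Y}=-\beta_0\,\bb{P}_0[\mathfrak{D}_{\!0}\br{Y}]\twovdots\bb{P}_0[\mathfrak{D}_{\!0}\br{Y}],
\]
which equals \(-\beta_0|\bb{P}_0[\mathfrak{D}_{\!0}\br{Y}]|^2\le0\) since \(\beta_0>0\). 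Under \textit{(ii)}, \(\mathfrak{D}_{\!0}\br{Y}\in\mathrm{Sym}_0\), so the projection is in fact inert there.

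None of the steps is a real obstacle; the one requiring care is the uniqueness argument in \textit{(ii)} for the skew part.
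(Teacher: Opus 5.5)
Your proof is correct and follows the paper's argument item by item: for (i), idempotence plus orthogonality of the residual \(\fr{1}{3}(\tr\br{Z})\bs{1}+\skw\br{Z}\) to \(\mathrm{Sym}_0\); for (iii), self-adjointness and idempotence of \(\bb{P}_0\). For (ii), the paper simply asserts that the corotational rate preserves symmetry and trace, and your steps supply the justification it leaves implicit: the trace computation by cyclicity, and uniqueness for the homogeneous equation \(\dot{\br{K}}=\br{W}\br{K}-\br{K}\br{W}\) satisfied by \(\br{K}=\skw\br{Y}\) (or, alternatively, the representation by conjugation with rotations).
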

\begin{proof}
\indent\textit{(i)} The operator \(\bb{P}_0\) is linear and idempotent, that is,
\begin{equation}\label{eq:P0.idempotent}
\bb{P}_0^2 = \bb{P}_0.
\end{equation}
Moreover, for any \(\br{Z}\),
\begin{equation}\label{eq:P0.decomposition}
\br{Z} - \bb{P}_0[\br{Z}] = \fr{1}{3} \tr(\br{Z})\bs{1} + \skw(\br{Z}),
\end{equation}
is orthogonal to \(\mathrm{Sym}_0\). Hence \(\bb{P}_0\) is the orthogonal projection onto \(\mathrm{Sym}_0\).

\medskip
\indent\textit{(ii)} If \(\br{Y}(0)\in \mathrm{Sym}_0\) and
\begin{equation}\label{eq:P0.evolution.proof}
\mathfrak{D}_{\!0}\br{Y}=\bb{P}_0[\br{Z}],
\end{equation}
then \(\mathfrak{D}_{\!0}\br{Y}\in \mathrm{Sym}_0\). For the corotational rate \(\mathfrak{D}_{\!0}\), symmetry and trace are preserved, hence \(\br{Y}(t)\in \mathrm{Sym}_0\) for all time.

\medskip
\indent\textit{(iii)} Using the orthogonality of \(\bb{P}_0\),
\begin{equation}
- \beta_0 \bb{P}_0[\mathfrak{D}_{\!0}\br{Y}] \twovdots \mathfrak{D}_{\!0}\br{Y} = - \beta_0 \bb{P}_0[\mathfrak{D}_{\!0}\br{Y}] \twovdots \bb{P}_0[\mathfrak{D}_{\!0}\br{Y}] = - \beta_0 |\bb{P}_0[\mathfrak{D}_{\!0}\br{Y}]|^2 \le 0.
\end{equation}
\end{proof}

By additivity of the free-energy rate, the energetic stresses associated with the two internal variables are obtained by summing their individual contributions. For the second-order contributions, we write
\begin{align}
\br{H}_{1}[\br{A}] \coloneqq{}& \br{A}\partial_{\br{A}}\psi + \partial_{\br{A}}\psi\br{A} \nonumber\\[4pt]
&+ \sym\big(\partial_{\Grad\br{A}}\psi \twovcR \Grad\br{A} + \partial_{\Grad\br{A}}\psi \twovcM \Grad\br{A} - \Grad\br{A} \twovc \partial_{\Grad\br{A}}\psi\big), \\[4pt]
\br{H}_{0}[\br{Y}] \coloneqq{}& -\sym\big(\Grad\br{Y} \twovc \partial_{\Grad\br{Y}}\psi\big).
\end{align}
The corresponding third-order building blocks are
\begin{equation}
\left\{
\begin{aligned}
\bb{H}[\br{A}] \coloneqq{}& ((\partial_{\Grad\br{A}}\psi)^{\trans}\br{A})^{\!\trans}+(\br{A}\partial_{\Grad\br{A}}\psi)^{\sperp}, \\[4pt]
\bb{H}[\br{Y}] \coloneqq{}& ((\partial_{\Grad\br{Y}}\psi)^{\trans}\br{Y})^{\!\trans}+(\br{Y}\partial_{\Grad\br{Y}}\psi)^{\sperp}.
\end{aligned}
\right.
\end{equation}
Writing \(\skw_{\!12}\bb{C} \coloneqq \fr{1}{2}(\bb{C}-\bb{C}^{\sperp})\) for skew-symmetrization with respect to the first two indices, the transport-dependent energetic hyperstresses defined in Theorem \ref{thm:canonical.free.energy.imbalance} specialize to \(\bb{H}_{1}[\br{A}]=\bb{H}[\br{A}]\) and \(\bb{H}_{0}[\br{Y}]=\skw_{\!12}\bb{H}[\br{Y}]\).

Let
\begin{equation}
\psi_{\mathrm{OB}}(\br{A}) \coloneqq \dfrac{G}{2}(\tr\br{A}-\ln\det\br{A}-3)
\end{equation}
denote the local conformation contribution to the free energy. Its contribution to the second-order energetic stress is
\begin{equation}
\br{H}_{1}^{\mathrm{OB}}[\br{A}] \coloneqq \br{A}\partial_{\br{A}}\psi_{\mathrm{OB}} + \partial_{\br{A}}\psi_{\mathrm{OB}}\br{A} = G(\br{A}-\bs{1}),
\end{equation}
which is the classical polymeric extra stress of the Oldroyd-B model. Within the incompressible formulation, \(\bb{P}_0[\br{H}_{1}^{\mathrm{OB}}[\br{A}]]\) appears explicitly in \(\br{S}\), while its spherical part is absorbed into the indeterminate pressure. Thus, for the free energy above, \(\br{H}_{1}[\br{A}]\) contains \(\br{H}_{1}^{\mathrm{OB}}[\br{A}]\) as its local conformation contribution, while its remaining terms provide the gradient-induced generalization.

For the conformation dynamics, we use a state-dependent dissipative response for \(\br{A}\), rather than the scalar drag specialization \eqref{eq:Xi.dis}.
\begin{prop}[Lyapunov mobility and dissipative consistency]\label{prop:L.A}
Let \(\br{A}\in\mathrm{Sym}^+\), and let \(\cl{L}_{\!\br{A}}\colon\mathrm{Sym}\to\mathrm{Sym}\) be the Lyapunov operator defined in \eqref{eq:lyapunov.operator} with \(\br{J}=\br{A}\).
Then the following statements hold.
\begin{enumerate}[label=\textit{(\roman*)}]
\item The family \(\cl{L}_{\!\br{A}}\) is rotationally equivariant, that is,
\begin{equation}\label{eq:L.A.equivariant}
\cl{L}_{\!\br{Q}\br{A}\br{Q}^{\!\trans}}[\br{Q}\br{Z}\br{Q}^{\!\trans}] = \br{Q}\cl{L}_{\!\br{A}}[\br{Z}]\br{Q}^{\!\trans}
\qquad\text{for every }\br{Q}\in\mathrm{SO}(3)\text{ and }\br{Z}\in\mathrm{Sym}.
\end{equation}
For each fixed \(\br{A}\), the operator \(\cl{L}_{\!\br{A}}\) is self-adjoint, invertible, and positive definite on \(\mathrm{Sym}\), with
\begin{equation}\label{eq:L.A.positive}
\br{Z}\twovdots\cl{L}_{\!\br{A}}[\br{Z}] = 2 \mskip2mu \tr(\br{A}\br{Z}^2)>0
\qquad\text{for every }\br{Z}\in\mathrm{Sym}\setminus\{\bs{0}\}.
\end{equation}
\item The dissipative conformation microforce
\begin{equation}\label{eq:Xi.A.OB}
\bs{\Xi}^{\mathrm{dis}}_1 \coloneqq -G\lambda_{\mathrm{p}}\cl{L}_{\!\br{A}}^{-1}[\mathfrak{D}_{\!1}\br{A}], \qquad \lambda_{\mathrm{p}}>0,
\end{equation}
where \(\lambda_{\mathrm{p}}\) is the polymer relaxation time, satisfies
\begin{equation}\label{eq:Xi.A.OB.dissipation}
-\bs{\Xi}^{\mathrm{dis}}_1\twovdots\mathfrak{D}_{\!1}\br{A} = G\lambda_{\mathrm{p}}\mathfrak{D}_{\!1}\br{A}\twovdots\cl{L}_{\!\br{A}}^{-1}[\mathfrak{D}_{\!1}\br{A}]\ge 0.
\end{equation}
\item With \(\bb{X}_1=\partial_{\Grad\br{A}}\psi\) and \(\bs{\Xi}_1=\bs{\Xi}^{\mathrm{dis}}_1-\partial_{\br{A}}\psi\), the conformation microforce balance is equivalent to
\begin{equation}\label{eq:A.OB.gradient.evolution}
\lambda_{\mathrm{p}} \, \mathfrak{D}_{\!1}\br{A} = -\fr{1}{G}\cl{L}_{\!\br{A}}\big[\partial_{\br{A}}\psi-\Div(\partial_{\Grad\br{A}}\psi)-\bs{\Upsilon}_1\big].
\end{equation}
\item For the explicit free energy \eqref{eq:free.energy.OB.LD.explicit}, the conformation driving force is
\begin{equation}\label{eq:A.OB.explicit.force}
\partial_{\br{A}}\psi-\Div(\partial_{\Grad\br{A}}\psi)-\bs{\Upsilon}_1
=\fr{G}{2}(\bs{1}-\br{A}^{-1})-\gamma_1\Delta\br{A}-\bs{\Upsilon}_1.
\end{equation}
Consequently,
\begin{equation}\label{eq:L.A.OB.explicit.force}
\cl{L}_{\!\br{A}}\big[\partial_{\br{A}}\psi-\Div(\partial_{\Grad\br{A}}\psi)-\bs{\Upsilon}_1\big] = G(\br{A}-\bs{1})-\gamma_1\cl{L}_{\!\br{A}}[\Delta\br{A}]-\cl{L}_{\!\br{A}}[\bs{\Upsilon}_1],
\end{equation}
and \eqref{eq:A.OB.gradient.evolution} takes the explicit form
\begin{equation}\label{eq:A.OB.explicit.evolution}
\lambda_{\mathrm{p}} \, \mathfrak{D}_{\!1}\br{A} = -(\br{A}-\bs{1})+\fr{\gamma_1}{G}\cl{L}_{\!\br{A}}[\Delta\br{A}]+\fr{1}{G}\cl{L}_{\!\br{A}}[\bs{\Upsilon}_1].
\end{equation}
\end{enumerate}
\end{prop}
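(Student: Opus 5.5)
We prove the four items in order, using only the spectral structure of \(\br{A}\in\mathrm{Sym}^+\) together with Lemma \ref{lem:lyapunov.commutator.identities}.

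\textit{(i)} Equivariance is a direct computation. Since \(\br{Q}^{\!\trans}\br{Q}=\bs{1}\),
\[
\cl{L}_{\!\br{Q}\br{A}\br{Q}^{\!\trans}}[\br{Q}\br{Z}\br{Q}^{\!\trans}]
=\br{Q}\br{Z}\br{A}\br{Q}^{\!\trans}+\br{Q}\br{A}\br{Z}\br{Q}^{\!\trans}
=\br{Q}\,\cl{L}_{\!\br{A}}[\br{Z}]\,\br{Q}^{\!\trans}.
\]
Self-adjointness is exactly \eqref{eq:lyapunov.self.adjoint}. For positivity, cyclicity of the trace gives
\[
\br{Z}\twovdots(\br{Z}\br{A}+\br{A}\br{Z})=\tr(\br{Z}\br{Z}\br{A})+\tr(\br{Z}\br{A}\br{Z})=2\tr(\br{A}\br{Z}^2).
\]
Write \(\br{Z}^2=\sum_i\bs{z}_i\otimes\bs{z}_i\), where \(\bs{z}_i\) are the columns of \(\br{Z}\), so that \(\br{Z}^2=\br{Z}\br{Z}^{\!\trans}\). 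Then \(\tr(\br{A}\br{Z}^2)=\sum_i\bs{z}_i\cdot\br{A}\bs{z}_i\). This sum is strictly positive unless every \(\bs{z}_i=\bs{0}\), that is, unless \(\br{Z}=\bs{0}\). This proves \eqref{eq:L.A.positive}.

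A self-adjoint positive-definite operator on the finite-dimensional space \(\mathrm{Sym}\) is injective, hence invertible. As a cross-check, diagonalize \(\br{A}=\sum_i a_i\bs{e}_i\otimes\bs{e}_i\) with all \(a_i>0\). Then \(\cl{L}_{\!\br{A}}\) acts on the components \(Z_{ij}\) in this basis by multiplication by \(a_i+a_j>0\).

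\textit{(ii)} Dissipation follows because the inverse of a self-adjoint positive-definite operator is again self-adjoint and positive definite. Set \(\br{Z}=\cl{L}_{\!\br{A}}^{-1}[\mathfrak{D}_{\!1}\br{A}]\). Then
\[
\mathfrak{D}_{\!1}\br{A}\twovdots\br{Z}=\cl{L}_{\!\br{A}}[\br{Z}]\twovdots\br{Z}\ge0,
\]
and multiplying by \(G\lambda_{\mathrm{p}}>0\) yields \eqref{eq:Xi.A.OB.dissipation}. Here we use that \(\mathfrak{D}_{\!1}\br{A}\in\mathrm{Sym}\), which holds since \(\br{A}\in\mathrm{Sym}\) and \(\bs{\Gamma}_{\!1}\br{A}+\br{A}\bs{\Gamma}_{\!1}^{\trans}\) is symmetric.

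\textit{(iii)} Substitute \(\bs{\Xi}_1\) and \(\bb{X}_1\) into the microforce balance \eqref{eq:pointwise.microstructural.linear.momentum}. This gives
\[
-G\lambda_{\mathrm{p}}\cl{L}_{\!\br{A}}^{-1}[\mathfrak{D}_{\!1}\br{A}]=\partial_{\br{A}}\psi-\Div(\partial_{\Grad\br{A}}\psi)-\bs{\Upsilon}_1.
\]
The right-hand side lies in \(\mathrm{Sym}\): \(\partial_{\br{A}}\psi\) does by \eqref{eq:dpsi.sym}, \(\partial_{\Grad\br{A}}\psi\) inherits the symmetry in its first two indices, and \(\bs{\Upsilon}_1\in\mathrm{Sym}\). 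We may therefore apply \(\cl{L}_{\!\br{A}}\) and divide by \(G\) to obtain \eqref{eq:A.OB.gradient.evolution}. Conversely, applying \(\cl{L}_{\!\br{A}}^{-1}\) recovers the balance, so the two forms are equivalent by invertibility from (i).

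\textit{(iv)} This is calculus on the explicit free energy \eqref{eq:free.energy.OB.LD.explicit}. Using \(\partial_{\br{A}}\tr\br{A}=\bs{1}\) and \(\partial_{\br{A}}\ln\det\br{A}=\br{A}^{-1}\), the latter symmetric, we get \(\partial_{\br{A}}\psi=\fr{G}{2}(\bs{1}-\br{A}^{-1})\). The \(\br{Y}\)-terms do not contribute. Also \(\partial_{\Grad\br{A}}\psi=\gamma_1\Grad\br{A}\), so \(\Div(\partial_{\Grad\br{A}}\psi)=\gamma_1\Delta\br{A}\). This gives \eqref{eq:A.OB.explicit.force}.

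Applying \(\cl{L}_{\!\br{A}}\) term by term and using
\[
\cl{L}_{\!\br{A}}[\bs{1}-\br{A}^{-1}]=2\br{A}-2\bs{1}
\]
gives \eqref{eq:L.A.OB.explicit.force}. Substituting into (iii) and dividing by \(G\) yields \eqref{eq:A.OB.explicit.evolution}.

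The only step needing genuine care is strict positivity and invertibility in (i), which rests on \(\br{A}\in\mathrm{Sym}^+\). The paper assumes this holds along the evolution, and the spectral argument above settles it.
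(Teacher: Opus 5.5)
Your proof is correct and follows essentially the same route as the paper: a direct computation for equivariance, the adjoint-identity lemma for self-adjointness, substitution into the microforce balance followed by applying $\cl{L}_{\!\br{A}}$, and the identity $\cl{L}_{\!\br{A}}[\bs{1}-\br{A}^{-1}]=2(\br{A}-\bs{1})$. The only minor difference is that you prove strict positivity explicitly through $\tr(\br{A}\br{Z}^2)=\sum_i\bs{z}_i\cdot\br{A}\bs{z}_i$ and deduce invertibility from it, whereas the paper reads invertibility off the eigenbasis form $(a_i+a_j)(\br{Z})_{ij}$, which you also give as a cross-check.
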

\begin{proof}
\indent\textit{(i)}
Identity \eqref{eq:L.A.equivariant} follows directly from \eqref{eq:lyapunov.operator}. By Lemma \ref{lem:lyapunov.commutator.identities}, with \(\br{J}=\br{A}\), for every \(\br{Z}_1,\br{Z}_2\in\mathrm{Sym}\),
\begin{equation}
\br{Z}_1\twovdots\cl{L}_{\!\br{A}}[\br{Z}_2]
=\cl{L}_{\!\br{A}}[\br{Z}_1]\twovdots\br{Z}_2,
\end{equation}
so \(\cl{L}_{\!\br{A}}\) is self-adjoint. If \(a_i>0\) are the eigenvalues of \(\br{A}\), then, in an orthonormal eigenbasis of \(\br{A}\),
\begin{equation}
(\cl{L}_{\!\br{A}}[\br{Z}])_{ij}=(a_i+a_j)(\br{Z})_{ij}.
\end{equation}
Thus, \(\cl{L}_{\!\br{A}}\) is invertible on \(\mathrm{Sym}\), and \eqref{eq:L.A.positive} proves its positive definiteness.

\medskip
\indent\textit{(ii)}
The inverse of \(\cl{L}_{\!\br{A}}\) is also self-adjoint and positive definite. Substitution of \eqref{eq:Xi.A.OB} therefore proves \eqref{eq:Xi.A.OB.dissipation}.

\medskip
\indent\textit{(iii)}
The conformation microforce balance \(\bs{\Xi}_1+\bs{\Upsilon}_1+\Div\bb{X}_1=\bs{0}\) gives
\begin{equation}
\bs{\Xi}^{\mathrm{dis}}_1=\partial_{\br{A}}\psi-\Div(\partial_{\Grad\br{A}}\psi)-\bs{\Upsilon}_1.
\end{equation}
Combining this identity with \eqref{eq:Xi.A.OB} and applying \(\cl{L}_{\!\br{A}}\) proves \eqref{eq:A.OB.gradient.evolution}.

\medskip
\indent\textit{(iv)}
For the free energy \eqref{eq:free.energy.OB.LD.explicit},
\begin{equation}
\partial_{\br{A}}\psi=\fr{G}{2}(\bs{1}-\br{A}^{-1}) \texand \Div(\partial_{\Grad\br{A}}\psi)=\gamma_1\Delta\br{A},
\end{equation}
which proves \eqref{eq:A.OB.explicit.force}. The identity
\begin{equation}
\cl{L}_{\!\br{A}}\big[\fr{G}{2}(\bs{1}-\br{A}^{-1})\big]=G(\br{A}-\bs{1})
\end{equation}
and the linearity of \(\cl{L}_{\!\br{A}}\) give \eqref{eq:L.A.OB.explicit.force}. Substitution into \eqref{eq:A.OB.gradient.evolution} then yields \eqref{eq:A.OB.explicit.evolution}.
\end{proof}

With \eqref{eq:free.energy.OB.LD} and \eqref{eq:free.energy.OB.LD.explicit} and Propositions \ref{prop:P0} and \ref{prop:L.A}, the field equations read as follows.
\begin{equation}\label{eq:field.equations.OB.LD}
\left\{
\begin{aligned}
& \Div \vel = 0, \\[4pt]
& \varrho \dot{\vel} = \Div (\br{S}-\Div\bb{S})-\Grad\pi+\varrho \bs{b}, \\[4pt]
& \lambda_{\mathrm{p}} \, \mathfrak{D}_{\!1}\br{A} = -\fr{1}{G}\cl{L}_{\!\br{A}}\big[\partial_{\br{A}}\psi-\Div(\partial_{\Grad\br{A}}\psi)-\bs{\Upsilon}_1\big], \\[4pt]
& \beta_0 \, \mathfrak{D}_{\!0}\br{Y} = - \bb{P}_0 [\partial_{\br{Y}}\psi] + \bb{P}_0 [\Div(\partial_{\Grad\br{Y}}\psi)] + \bb{P}_0[\bs{\Upsilon}_0].
\end{aligned}
\right.
\end{equation}
Here,
\begin{equation}
\left\{
\begin{aligned}
\br{S} ={}& 2\mu \br{D} + \bb{P}_0[\br{H}_{1}[\br{A}]+\br{H}_{0}[\br{Y}]], \\[4pt] \bb{S} ={}& \lambda_7 (\Grad\br{L})^{(7)} + \lambda_5 (\Grad\br{L})^{(5)} + \lambda_3 (\Grad\br{L})^{(3)} \\[4pt] &+ \bb{P}_{\!\!\scriptscriptstyle\cl{A}}[\sym_{23}\bb{H}[\br{A}] + \sym_{23}\skw_{\!12}\bb{H}[\br{Y}]], \\[4pt] \bb{X}_1 ={}& \partial_{\Grad\br{A}}\psi, \\[4pt] \bb{X}_0 ={}& \partial_{\Grad\br{Y}}\psi.
\end{aligned}
\right.
\end{equation}
Therefore, in view of the canonical free-energy imbalance \eqref{eq:pointwise.dissipation.final.reduced.equivalent}, the harmonic decomposition \eqref{eq:Q.harmonic.split.theorem}, and \eqref{eq:Xi.A.OB.dissipation}, the dissipation inequality takes the following form
\begin{align}
0 \le{}& 2 \mu |\br{D}|^2 + \lambda_7 |(\Grad\br{L})^{(7)}|^2 + \lambda_5 |(\Grad\br{L})^{(5)}|^2 + \lambda_3 |(\Grad\br{L})^{(3)}|^2 \nonumber\\[4pt]
&+G\lambda_{\mathrm{p}}\mathfrak{D}_{\!1}\br{A}\twovdots\cl{L}_{\!\br{A}}^{-1}[\mathfrak{D}_{\!1}\br{A}] + \beta_0 |\bb{P}_0[\mathfrak{D}_{\!0}\br{Y}]|^2,
\end{align}
with \(\mu\), \(\lambda_7\), \(\lambda_5\), and \(\lambda_3\) non-negative, and with \(G\), \(\lambda_{\mathrm{p}}\), and \(\beta_0\) strictly positive.

For the sake of completeness, the partial derivatives of the free energy density are
\begin{equation}
\left\{
\begin{aligned} & \partial_{\br{A}} \psi = \fr{G}{2} (\bs{1} - \br{A}^{-1}), \\[4pt]
& \partial_{\br{Y}} \psi = a \br{Y} - b \br{Y}^2 + c \tr(\br{Y}^2) \br{Y}, \\[4pt]
& \partial_{\Grad \br{A}} \psi = \gamma_1 \, \Grad \br{A}, \\[4pt]
& \partial_{\Grad \br{Y}} \psi = \gamma_0 \, \Grad \br{Y}.
\end{aligned}
\right.
\end{equation}

\section*{Acknowledgements}

The author acknowledges support from the EPSRC Impact Acceleration Account (IAA) at the University of Nottingham.

\section*{Declarations}

\medskip\noindent\textbf{Funding} This work was supported by the EPSRC Impact Acceleration Account (IAA).

\medskip\noindent\textbf{Data availability} No datasets were generated or analysed during the current study.

\medskip\noindent\textbf{Ethical approval} Not applicable.

\medskip\noindent\textbf{Competing interests} The author declares no competing interests.

%% file: appendix.tex
\section{Geometric Genesis of Transport Laws}\label{ap:transport.rates}

For completeness, we derive the covariant, contravariant, and corotational tensor rates used in Section \S\ref{sc:frame.indifferent.rates} directly from the modes of advection induced by the motion. We follow the terminology of Gurtin, Fried, \& Anand \cite{Gur10} for tangential, covariant, contravariant, and corotational advection. The purpose of this appendix is to expose the geometric genesis of the transport laws by distinguishing tangent and cotangent actions. The main continuum theory, however, is formulated on an ambient Euclidean vector space and follows the standard tensor-space convention of continuum mechanics. The Euclidean metric identifies tensor spaces of corresponding variance through the associated Riesz isomorphisms. Thus, once a transport law has been motivated geometrically, its Euclidean tensor representative is used in the balance laws, power expenditures, and constitutive relations of the main text.

For each material point, \(T_{\bs{x}}\prt_{\scriptscriptstyle 0}\) and \(T_{\bs{y}}\prt_t\) denote the tangent spaces in the reference and current configurations, respectively, while \(T^\ast_{\bs{x}}\prt_{\scriptscriptstyle 0}\) and \(T^\ast_{\bs{y}}\prt_t\) denote the corresponding dual spaces. A spatial vector field \(\bs{a} \in T_{\bs{y}}\prt_t\), a \((1,0)\)-tensor, is said to advect as a tangent if there is a time-independent material vector field \(\bs{a}_{\scriptscriptstyle 0} \in T_{\bs{x}}\prt_{\scriptscriptstyle 0}\) such that
\begin{equation}\label{eq:co-adv.vec}
\bs{a}(\bs{y},t) = \br{F}(\bs{x},t)\bs{a}_{\scriptscriptstyle 0}(\bs{x}).
\end{equation}
The corresponding dual transport of a spatial covector field \(\bs{\alpha} \in T^\ast_{\bs{y}}\prt_t\), a \((0,1)\)-tensor, is defined by a time-independent material covector field \(\bs{\alpha}_{\scriptscriptstyle 0} \in T^\ast_{\bs{x}}\prt_{\scriptscriptstyle 0}\) such that
\begin{equation}\label{eq:contra-adv.covec}
\bs{\alpha}(\bs{y},t) = \br{F}^{-\trans}(\bs{x},t)\bs{\alpha}_{\scriptscriptstyle 0}(\bs{x}).
\end{equation}

In view of \eqref{eq:co-adv.vec} and \eqref{eq:contra-adv.covec},
\begin{equation}
\dot{\overline{\br{F}^{-1}\bs{a}}} = \bs{0} \texand \dot{\overline{\br{F}^{\trans}\bs{\alpha}}} = \bs{0}.
\end{equation}
Thus, a vector advects as a tangent and its dually transported covector evolves if and only if
\begin{equation}
\dot{\bs{a}} = \br{L}\bs{a} \texand \dot{\bs{\alpha}} = -\br{L}^{\!\trans}\bs{\alpha},
\end{equation}
where \(\br{L}\) is a \((1,1)\)-tensor. Consequently, their natural pairing is preserved:
\begin{equation}
\dot{\overline{\langle \bs{\alpha},\bs{a}\rangle}} = \langle \dot{\bs{\alpha}},\bs{a}\rangle + \langle \bs{\alpha},\dot{\bs{a}}\rangle = -\langle \br{L}^{\!\trans}\bs{\alpha},\bs{a}\rangle + \langle \bs{\alpha},\br{L}\bs{a}\rangle = 0.
\end{equation}

Next, consider a vector basis \(\{\bs{m}_\imath\}\) of \(T_{\bs{x}}\prt_{\scriptscriptstyle 0}\) and its dual covector basis \(\{\bs{\mu}^\jmath\}\) of \(T^\ast_{\bs{x}}\prt_{\scriptscriptstyle 0}\). Define
\begin{equation}\label{eq:basis.duality}
T_{\bs{y}}\prt_t \ni \bs{e}_\imath(\bs{y},t) = \br{F}(\bs{x},t)\bs{m}_\imath(\bs{x}) \texand T^\ast_{\bs{y}}\prt_t \ni \bs{\varepsilon}^\jmath(\bs{y},t) = \br{F}^{-\trans}(\bs{x},t)\bs{\mu}^\jmath(\bs{x}).
\end{equation}
The family \(\{\bs{e}_\imath\}\) forms a tangentially advecting basis, while \(\{\bs{\varepsilon}^\jmath\}\) is its dual basis. These satisfy
\begin{equation}
\langle\bs{\varepsilon}^\jmath,\bs{e}_\imath\rangle = \delta^\jmath_\imath.
\end{equation}
Since \(\dot{\bs{m}}_\imath=\bs{0}\) and \(\dot{\bs{\mu}}^\jmath=\bs{0}\),
\begin{equation}\label{eq:co-adv.basis}
\dot{\bs{e}}_\imath = \br{L}\bs{e}_\imath \texand \dot{\bs{\varepsilon}}^\jmath = -\br{L}^{\!\trans}\bs{\varepsilon}^\jmath.
\end{equation}

Let
\begin{equation}
\br{H} \in T^\ast_{\bs{y}}\prt_t \otimes T^\ast_{\bs{y}}\prt_t
\end{equation}
be a spatial covariant second-order tensor, that is, a \((0,2)\)-tensor satisfying
\begin{equation}
\br{H}(\bs{a},\bs{b})\in\bb{R} \qquad \text{for all } \bs{a},\bs{b}\in T_{\bs{y}}\prt_t.
\end{equation}
Its covariant components relative to the tangentially advecting basis are
\begin{equation}
H_{\imath\jmath} \coloneqq \br{H}(\bs{e}_\imath,\bs{e}_\jmath).
\end{equation}
Using \eqref{eq:co-adv.basis}\(_1\), total time differentiation gives
\begin{align}\label{eq:total.derivative.covariant.1}
\dot{H}_{\imath\jmath}
&= \dot{\overline{\br{H}(\bs{e}_\imath,\bs{e}_\jmath)}} \nonumber\\[4pt]
&= \dot{\br{H}}(\bs{e}_\imath,\bs{e}_\jmath) + \br{H}(\dot{\bs{e}}_\imath,\bs{e}_\jmath) + \br{H}(\bs{e}_\imath,\dot{\bs{e}}_\jmath) \nonumber\\[4pt]
&= \dot{\br{H}}(\bs{e}_\imath,\bs{e}_\jmath) + \br{H}(\br{L}\bs{e}_\imath,\bs{e}_\jmath) + \br{H}(\bs{e}_\imath,\br{L}\bs{e}_\jmath).
\end{align}
Define
\begin{equation}
(\br{L}^{\!\trans}\br{H})(\bs{u},\bs{v}) \coloneqq \br{H}(\br{L}\bs{u},\bs{v}) \texand (\br{H}\br{L})(\bs{u},\bs{v}) \coloneqq \br{H}(\bs{u},\br{L}\bs{v}).
\end{equation}
Then,
\begin{equation}\label{eq:total.derivative.covariant.1.final}
\dot{H}_{\imath\jmath} = \big(\dot{\br{H}} + \br{L}^{\!\trans}\br{H} + \br{H}\br{L}\big)(\bs{e}_\imath,\bs{e}_\jmath).
\end{equation}
Equivalently,
\begin{equation}\label{eq:total.derivative.covariant.2}
\dot{H}_{\imath\jmath} = \dot{\overline{(\br{F}^{\trans}\br{H}\br{F})(\bs{m}_\imath,\bs{m}_\jmath)}},
\end{equation}
where
\begin{equation}
(\br{F}^{\trans}\br{H}\br{F})(\bs{a},\bs{b}) \coloneqq \br{H}(\br{F}\bs{a},\br{F}\bs{b}).
\end{equation}
Thus, the spatial tensor field \(\br{H}\) is said to advect covariantly if
\begin{equation}
\underset{\triangledown}{\br{H}} \coloneqq \dot{\br{H}} + \br{L}^{\!\trans}\br{H} + \br{H}\br{L} = \bs{0},
\end{equation}
or, equivalently, if
\begin{equation}
\dot{\overline{\br{F}^{\trans}\br{H}\br{F}}} = \bs{0}.
\end{equation}
The covariant components of \(\br{H}\) relative to the tangentially advecting basis are then materially time-independent. The rate \(\underset{\triangledown}{\br{H}}\) is the covariant rate of \(\br{H}\), also referred to as its lower-convected rate.

Similarly, let
\begin{equation}
\br{H} \in T_{\bs{y}}\prt_t \otimes T_{\bs{y}}\prt_t
\end{equation}
be a spatial contravariant second-order tensor, that is, a \((2,0)\)-tensor satisfying
\begin{equation}
\br{H}(\bs{\alpha},\bs{\beta}) \in \bb{R} \qquad \text{for all } \bs{\alpha},\bs{\beta} \in T^\ast_{\bs{y}}\prt_t.
\end{equation}
Its contravariant components relative to the dual basis are
\begin{equation}
H^{\imath\jmath} \coloneqq \br{H}(\bs{\varepsilon}^\imath,\bs{\varepsilon}^\jmath).
\end{equation}
Using \eqref{eq:co-adv.basis}\(_2\), total time differentiation gives
\begin{align}\label{eq:total.derivative.contravariant.1}
\dot{H}^{\imath\jmath}
&= \dot{\overline{\br{H}(\bs{\varepsilon}^\imath,\bs{\varepsilon}^\jmath)}} \nonumber\\[4pt]
&= \dot{\br{H}}(\bs{\varepsilon}^\imath,\bs{\varepsilon}^\jmath) + \br{H}(\dot{\bs{\varepsilon}}^\imath,\bs{\varepsilon}^\jmath) + \br{H}(\bs{\varepsilon}^\imath,\dot{\bs{\varepsilon}}^\jmath) \nonumber\\[4pt]
&= \dot{\br{H}}(\bs{\varepsilon}^\imath,\bs{\varepsilon}^\jmath) - \br{H}(\br{L}^{\!\trans}\bs{\varepsilon}^\imath,\bs{\varepsilon}^\jmath) - \br{H}(\bs{\varepsilon}^\imath,\br{L}^{\!\trans}\bs{\varepsilon}^\jmath).
\end{align}
Define
\begin{equation}
(\br{L}\br{H})(\bs{\alpha},\bs{\beta}) \coloneqq \br{H}(\br{L}^{\!\trans}\bs{\alpha},\bs{\beta}) \texand (\br{H}\br{L}^{\!\trans})(\bs{\alpha},\bs{\beta}) \coloneqq \br{H}(\bs{\alpha},\br{L}^{\!\trans}\bs{\beta}).
\end{equation}
Then,
\begin{equation}
\dot{H}^{\imath\jmath} = \big(\dot{\br{H}} - \br{L}\br{H} - \br{H}\br{L}^{\!\trans}\big)(\bs{\varepsilon}^\imath,\bs{\varepsilon}^\jmath).
\end{equation}
Equivalently,
\begin{equation}\label{eq:total.derivative.contravariant.2}
\dot{H}^{\imath\jmath} = \dot{\overline{(\br{F}^{-1}\br{H}\br{F}^{-\trans})(\bs{\mu}^\imath,\bs{\mu}^\jmath)}},
\end{equation}
where
\begin{equation}
(\br{F}^{-1}\br{H}\br{F}^{-\trans})(\bs{\alpha},\bs{\beta}) \coloneqq \br{H}(\br{F}^{-\trans}\bs{\alpha},\br{F}^{-\trans}\bs{\beta}).
\end{equation}
Thus, the spatial tensor field \(\br{H}\) is said to advect contravariantly if
\begin{equation}
\overset{\triangledown}{\br{H}} \coloneqq \dot{\br{H}} - \br{L}\br{H} - \br{H}\br{L}^{\!\trans} = \bs{0},
\end{equation}
or, equivalently, if
\begin{equation}
\dot{\overline{\br{F}^{-1}\br{H}\br{F}^{-\trans}}} = \bs{0}.
\end{equation}
The contravariant components of \(\br{H}\) relative to the dual basis are then materially time-independent. The rate \(\overset{\triangledown}{\br{H}}\) is the contravariant rate of \(\br{H}\), also referred to as its upper-convected rate.

Lastly, a spatial vector field \(\bs{k} \in T_{\bs{y}}\prt_t\) is said to advect corotationally if
\begin{equation}\label{eq:co-rot.basis}
\dot{\bs{k}} = \br{W}\bs{k},
\end{equation}
where
\begin{equation}
\br{W} \coloneqq \skw\br{L} \in T_{\bs{y}}\prt_t \otimes T^\ast_{\bs{y}}\prt_t.
\end{equation}
The corresponding dual covector field \(\bs{\kappa} \in T^\ast_{\bs{y}}\prt_t\) evolves according to
\begin{equation}
\dot{\bs{\kappa}} = -\br{W}^{\!\trans}\bs{\kappa}.
\end{equation}
Consequently,
\begin{equation}
\dot{\overline{\langle \bs{\kappa},\bs{k}\rangle}} = \langle \dot{\bs{\kappa}},\bs{k}\rangle + \langle \bs{\kappa},\dot{\bs{k}}\rangle = 0.
\end{equation}

We now restrict the contravariant tensorial representation used above to corotational advection. Let
\begin{equation}
\br{H} \in T_{\bs{y}}\prt_t \otimes T_{\bs{y}}\prt_t.
\end{equation}
The total time derivative of its contravariant components relative to the dual basis of a corotationally advecting vector basis gives
\begin{equation}
\dot{\overline{\br{H}(\bs{\kappa}^\imath,\bs{\kappa}^\jmath)}} = \big(\dot{\br{H}} - \br{W}\br{H} - \br{H}\br{W}^{\!\trans}\big)(\bs{\kappa}^\imath,\bs{\kappa}^\jmath).
\end{equation}
Since \(\br{W}^{\!\trans}=-\br{W}\), this reduces to
\begin{equation}
\overset{\circ}{\br{H}} \coloneqq \dot{\br{H}} - \br{W}\br{H} + \br{H}\br{W}.
\end{equation}
The rate \(\overset{\circ}{\br{H}}\) is the corotational, or Jaumann, rate of \(\br{H}\).

The differential equation \eqref{eq:co-rot.basis} induces a one-parameter family of orthogonal transformations \(\br{Q}(t)\in\mathrm{SO}(3)\) along each material trajectory. Let
\begin{equation}
\dot{\br{Q}} = \br{W}\br{Q}, \with \br{Q}(t_{\scriptscriptstyle 0}) = \bs{1}.
\end{equation}
Since \(\br{W}^{\!\trans}=-\br{W}\),
\begin{equation}
\dot{\overline{\br{Q}^{\!\trans}\br{Q}}} = \dot{\br{Q}}^{\!\trans}\br{Q} + \br{Q}^{\!\trans}\dot{\br{Q}} = \br{Q}^{\!\trans}(\br{W}^{\!\trans}+\br{W})\br{Q} = \bs{0},
\end{equation}
and hence
\begin{equation}
\br{Q}^{\!\trans}\br{Q} = \bs{1}.
\end{equation}

Accordingly, a vector field advects corotationally if and only if
\begin{equation}
\bs{k}(\bs{y},t) = \br{Q}(t)\bs{k}_{\scriptscriptstyle 0}(\bs{x}),
\end{equation}
for some time-independent material vector field \(\bs{k}_{\scriptscriptstyle 0}\in T_{\bs{x}}\prt_{\scriptscriptstyle 0}\). The corresponding dual covector satisfies
\begin{equation}
\bs{\kappa}(\bs{y},t) = \br{Q}^{-\trans}(t)\bs{\kappa}_{\scriptscriptstyle 0}(\bs{x}).
\end{equation}
A contravariant second-order tensor advects corotationally according to
\begin{equation}
\br{H}(\bs{y},t) = \br{Q}(t)\br{H}_{\scriptscriptstyle 0}(\bs{x})\br{Q}(t)^{\!\trans},
\end{equation}
and consequently satisfies
\begin{equation}
\overset{\circ}{\br{H}} = \dot{\br{H}} - \br{W}\br{H} + \br{H}\br{W} = \bs{0}.
\end{equation}

\section{Tangent spaces of \texorpdfstring{\(\mathrm{GL}^+(3)\) and \(\mathrm{SO}(3)\)}{GL+(3) and SO(3)}}\label{ap:alg.grp}

The Lie algebras \(\mathfrak{gl}(3)\) and \(\mathfrak{so}(3)\) coincide with the tangent spaces of \(\mathrm{GL}^+(3)\) and \(\mathrm{SO}(3)\) at the identity, that is,
\begin{equation}
\mathfrak{gl}(3) \coloneqq T_{\bs{1}}\mathrm{GL}^+(3) = \bb{R}^{3\times 3} \texand \mathfrak{so}(3) \coloneqq T_{\bs{1}}\mathrm{SO}(3) = \{\bs{\Omega} \in \bb{R}^{3\times 3} \colon \bs{\Omega}^{\!\trans} = -\bs{\Omega} \}.
\end{equation}

We first consider \(\mathrm{GL}^+(3)\). Since \(\mathrm{GL}^+(3)\) is an open subset of \(\bb{R}^{3\times 3}\), its tangent space at the identity coincides with the ambient space, that is,
\begin{equation}
T_{\bs{1}}\mathrm{GL}^+(3) = \bb{R}^{3\times 3} = \mathfrak{gl}(3).
\end{equation}

Next, we consider \(\mathrm{SO}(3)\). Let \(\br{Q}(\epsilon)\in \mathrm{SO}(3)\) be a smooth curve such that \(\br{Q}(0)=\bs{1}\). Since
\begin{equation}
\br{Q}(\epsilon)^{\!\trans}\br{Q}(\epsilon)=\bs{1},
\end{equation}
differentiation yields
\begin{equation}
\bigg(\dd{\br{Q}}{\epsilon}\bigg)^{\!\!\trans}\br{Q} + \br{Q}^{\!\trans}\dd{\br{Q}}{\epsilon} = \bs{0}.
\end{equation}
Evaluating at \(\epsilon=0\) gives
\begin{equation}
\bigg(\dd{\br{Q}}{\epsilon}(0)\bigg)^{\!\!\trans} + \dd{\br{Q}}{\epsilon}(0) = \bs{0},
\end{equation}
so that every tangent vector \(\bs{\Omega} \in T_{\bs{1}}\mathrm{SO}(3)\) satisfies
\begin{equation}
\bs{\Omega}^{\!\trans} = -\bs{\Omega}.
\end{equation}
Hence
\begin{equation}
T_{\bs{1}}\mathrm{SO}(3) \subseteq \mathfrak{so}(3).
\end{equation}

Conversely, let \(\bs{\Omega}\in \mathfrak{so}(3)\). Then, the curve
\begin{equation}
\br{Q}(\epsilon)=\exp(\epsilon \bs{\Omega}),
\end{equation}
satisfies \(\br{Q}(0)=\bs{1}\),
\begin{equation}
\br{Q}(\epsilon)^{\!\trans}\br{Q}(\epsilon) = \exp(\epsilon \bs{\Omega})^{\!\trans}\exp(\epsilon \bs{\Omega}) = \exp(\epsilon \bs{\Omega}^{\!\trans})\exp(\epsilon \bs{\Omega}) = \exp(-\epsilon \bs{\Omega})\exp(\epsilon \bs{\Omega}) = \bs{1},
\end{equation}
and
\begin{equation}
\det \br{Q}(\epsilon) = \det \exp(\epsilon \bs{\Omega}) = \exp(\epsilon \tr \bs{\Omega}) = 1,
\end{equation}
because \(\tr \bs{\Omega}=0\) for every skew-symmetric tensor. Therefore, \(\br{Q}(\epsilon)\in \mathrm{SO}(3)\) and
\begin{equation}
\dd{\br{Q}}{\epsilon}(0)=\bs{\Omega}.
\end{equation}
Thus, \(\bs{\Omega}\in T_{\bs{1}}\mathrm{SO}(3)\), and therefore
\begin{equation}
T_{\bs{1}}\mathrm{SO}(3)=\mathfrak{so}(3).
\end{equation}